\documentclass[11pt]{amsart}

% USEPACKAGES
\usepackage{amsfonts,amssymb,amsmath,eucal,pinlabel,picins,array,hhline,picins}
\usepackage[all]{xy}
\usepackage{tabulary}
\usepackage{fancyhdr}

% THEOREMS
\newtheorem{theorem}{Theorem}[section]
\newtheorem*{thmintro}{Theorem}
\newtheorem{lemma}[theorem]{Lemma}
\newtheorem{proposition}[theorem]{Proposition}
\newtheorem{corollary}[theorem]{Corollary}
\theoremstyle{remark}\newtheorem{remark}[theorem]{Remark}
\theoremstyle{definition}\newtheorem{definition}[theorem]{Definition}
\theoremstyle{remark}\newtheorem{example}[theorem]{Example}

%LISTS
\newenvironment{romanlist}
        {\begin{enumerate}
        }
        {\end{enumerate}}

\newcounter{ticklistc}

%NEWCOMMANDS
\newcommand{\Z}{\mathbb Z}
\newcommand{\R}{\mathbb R}
\newcommand{\C}{\mathbb C}

\newcommand{\Q}{\mathcal Q}
\newcommand{\M}{\mathcal M}
\renewcommand{\S}{\mathcal S}
\renewcommand{\Re}{\mathit{Re}\,}
\renewcommand{\Im}{\mathit{Im}\,}
\newcommand{\G}{\Gamma}
\newcommand{\X}{\mathfrak X}

\newcommand{\Arf}{\mathrm{Arf}}
\newcommand{\eps}{\epsilon}
\newcommand{\St}{\mathit{St}}
\newcommand{\Si}{\Sigma}
\newcommand{\ws}{\widetilde{\Sigma}}

%DIMENSIONS

\setlength\parskip{5pt}

\begin{document}

\title{Discrete Dirac operators on Riemann surfaces and Kasteleyn matrices}

\author{David Cimasoni}   
\address{ETH Z\"urich, Departement Mathematik, R\"amistrasse 101, 8092 Z\"urich, Switzerland}
\email{david.cimasoni@math.ethz.ch}
\subjclass[2000]{82B20, 57M15, 52C99}
\keywords{perfect matching, dimer model, discrete complex analysis, isoradial graph, Dirac operator, Kasteleyn matrices}

\begin{abstract}
Let $\Sigma$ be a flat surface of genus $g$ with cone type singularities.
Given a bipartite graph $\G$ isoradially embedded in $\Si$, we define discrete analogs of the
$2^{2g}$ Dirac operators on $\Si$. These discrete objects are then shown to converge to the continuous ones, in some
appropriate sense. Finally, we obtain necessary and sufficient conditions on the pair $\G\subset\Si$ for these discrete Dirac operators to be
Kasteleyn matrices of the graph $\G$. As a consequence, if these conditions are met, the partition function of the dimer model on $\G$
can be explicitly written as an alternating sum of the determinants of these $2^{2g}$ discrete Dirac operators.
\end{abstract}

\maketitle

\pagestyle{myheadings}
\markboth{David Cimasoni}{Discrete Dirac operators and Kasteleyn matrices}

\section{Introduction}
\label{sec:intro}

A {\bf dimer covering}, or {\bf perfect matching}, of a graph $\G$ is a collection of edges with the property that each vertex is
adjacent to exactly one of these edges. Assigning weights to the edges of $\G$ allows to define a probability measure on the set of dimer coverings,
and the corresponding model is called the {\bf dimer model} on $\G$.

Dimer models are among the most studied in statistical mechanics. One of their remarkable properties is that the partition function of a dimer model on a graph $\G$
can be written as a linear combination of $2^{2g}$ Pfaffians (determinants in case of bipartite graphs), where $g$ is the genus of an orientable surface $\Si$ in which
$\G$ embeds. These $2^{2g}$ matrices,
called {\bf Kasteleyn matrices}, are skew-symmetric matrices determined by $2^{2g}$ orientations of the edges of $\G\subset\Si$, called {\bf Kasteleyn orientations}.
P. W. Kasteleyn himself proved this Pfaffian formula in the planar case \cite{Ka1,Ka2} together with the case of a square lattice embedded in the torus,
and stated the general fact \cite{Ka3}. A complete combinatorial proof of this statement was first obtained much later by Gallucio-Loebl \cite{G-L} and independently by
Tesler \cite{Tes}, who extended it to non-orientable surfaces.

In \cite{CR1}, we studied an explicit correspondance (first suggested by Kuperberg \cite{Kup}) relating spin structures on $\Si$ and Kasteleyn orientations on $\G\subset\Si$.
We also used the identification of spin structures with quadratic forms to give a geometric proof of the Pfaffian formula, together with a geometric
interpretation of its coefficients.

The partition function of free fermions on a closed Riemann surface $\Si$ of genus $g$ is also a linear combination of $2^{2g}$ determinants of Dirac operators, each term
corresponding to a spin structure on $\Si$ \cite{AMV}. Assuming that dimer models are discrete analogs of free fermions, one expects -- in addition to the known
relation between Kasteleyn orientations and spin structures -- a relation between the Kasteleyn matrix for a given Kasteleyn orientation and the Dirac operator associated
to the corresponding spin structure.

This is well understood in the planar case with the work of Kenyon \cite{Ken}. For any bipartite planar graph $\G$ satisfying some geometric condition known as
isoradiality (see below), he defined a discrete version of the Dirac operator which turns out to be closely
related to a Kasteleyn matrix of $\G$. In particular, its determinant is equal to the partition function of the dimer model on $\G$ with critical weights. In the genus one case,
the following observation was made by Ferdinand \cite{Fer} as early as 1967: For the $M\times N$ square lattice on the torus with horizontal weight $x$ and vertical
weight $y$, the determinants of the four Kasteleyn matrices behave asymptotically, in the $MN\to\infty$ limit with fixed ratio $M/N$, as a common
bulk term times the four Jacobi theta functions $\theta_k(0|\tau)$, where $\tau=i\frac{Mx}{Ny}$. This reproduces exactly the dependance of the determinant of the Dirac operators
on the different spin structures observed by Alvarez-Gaum\'e, Moore and Vafa \cite{AMV}.

The higher genus case remains somewhat mysterious. The only results available are numerical evidences, for one specific example of a square lattice embedded in a genus two surface, that the determinants of the 16 Kasteleyn matrices have a dependance that can be expressed in terms of genus two theta functions \cite{CSM}.

In short, it is fair to say that relatively little is understood of the expected relation between Kasteleyn matrices and Dirac operators on surfaces of genus $g\ge 2$.
With this paper, we aim at filling this gap. Here is a summary of our results.

We start in Section~\ref{sec:CA} by defining a discrete analog of the $\bar{\partial}$ operator on functions on a Riemann surface $\Si$. Because there is no ``canonical"
such discretization, some geometric conditions may be naturally imposed on the pair $\G\subset\Si$. Following Duffin~\cite{Duf}, Mercat~\cite{Mer}, Kenyon~\cite{Ken} and
many others, and for reasons that will become apparent along the way, we work with bipartite {\bf isoradial\/} graphs.
More precisely, we encode the complex structure on $\Si$ by a flat metric with cone type singularities supported at $S\subset\Si$, and consider locally finite graphs
$\G\subset\Si$ with bipartite structure $V(\G)=B\sqcup W$ satisfying the following conditions:
\begin{romanlist}
\item{Each edge of $\G$ is a straight line (with respect to the flat metric on $\Si$), and for some positive $\delta$, each face $f$ of $\G\subset\Si$ contains an element $x_f$
at distance $\delta$ of every vertex of $f$.}
\item{A singularity of $\Si$ is either a black vertex of $\G$, or a vertex $x_f$ of the dual graph $\G^*$.}
\end{romanlist}

For such a pair $\G\subset\Si$, we introduce a discrete $\bar{\partial}$ operator defined on $\C^B$,
and call $f\in\C^B$ {\bf discrete holomorphic\/} if $\bar{\partial}f=0$. (See Definition~\ref{def:dbar}.)
This operator satisfies some natural properties (Proposition~\ref{prop:prop})
and extends previous constructions of Mercat~\cite{Mer}, Kenyon~\cite{Ken} and Dynnikov-Novikov~\cite{D-N}.
Note that these authors impose strong conditions on $\Gamma$ in order for $\bar{\partial}$ to be defined: $\Gamma$ needs to be the double of a graph in \cite{Mer}
(and therefore, does not admit any perfect matching in higher genus cases, see Remark~\ref{rem:double}), it needs to be planar in \cite{Ken}, while only the
triangular (or dually, the hexagonal) lattice is considered in \cite{D-N}. On the other hand, our discrete
$\bar{\partial}$ operator imposes essentially no combinatorial restriction on $\G$: any locally finite bipartite graph such that each white vertex has degree
at least three can be isoradially embedded in an orientable flat surface $\Sigma$ with conical singularities $S\subset V(\Gamma^*)\cup B$ (Proposition~\ref{prop:real}).
This section is concluded with a convergence theorem:
if a sequence of discrete holomorphic functions converges to a function $f\colon\Si\to\C$ in the appropriate sense, then $f$ is holomorphic. (See Theorem~\ref{thm:CL} for
the precise statement.)

In Section~\ref{sec:Dirac}, we twist the discrete $\bar\partial$ operator on $\G\subset\Si$ by {\bf discrete spin structures\/} $\lambda$ to obtain $2^{2g}$
discrete Dirac operators
\[
D_\lambda\colon\C^B\to\C^W,
\]
provided each cone angle of $\Si$ is a multiple of $2\pi$ (Definition~\ref{def:Dirac}). The convergence theorem then takes the following form: let $\lambda_n$ be a sequence
of discrete spin structures on $\Si$ discretizing a fixed spin structure $L$. If a sequence $\psi_n$ of discrete holomorphic spinors (that is: $D_{\lambda_n}\psi_n=0$) converges
to a section $\psi$ of the line bundle $L\to\Si$, then $\psi$ is a holomorphic spinor. (See Theorem~\ref{thm:CLTS}.)

Section~\ref{sec:Kast} contains the core of this paper. First, we extend the Kasteleyn formalism from $\{\pm 1\}$-valued flat cochains on $\G\subset\Si$
(that is, Kasteleyn orientations) to $G$-valued ones for any subgroup $G$ of $\C^*$. We believe that the resulting existence statement (Proposition~\ref{prop:torsor}) and
generalized Pfaffian formula (Theorem~\ref{thm:Pf-K} and Corollary~\ref{cor:Pf}) are of independant interest. We use them to prove our main result:

\begin{thmintro}
Let $\Sigma$ be a compact oriented flat surface of genus $g$ with conical singularities supported at $S$ and cone angles multiples of $2\pi$.
Fix a graph $\Gamma$ with bipartite structure $V(\Gamma)=B\sqcup W$, isoradially embedded in $\Sigma$ so that $S\subset B\cup V(\Gamma^*)$.
For an edge $e$ of $\Gamma$, let $\nu(e)$ denote the length of the dual edge.
Finally, let $D_\lambda\colon\C^B\to\C^W$ denote the discrete Dirac operator associated to the discrete spin structure $\lambda$.

There exist $2^{2g}$ non-equivalent discrete spin structures such that the corresponding discrete Dirac operators $\{D_\lambda\}_\lambda$
give $2^{2g}$ non-equivalent Kasteleyn matrices of the weighted graph $(\Gamma,\nu)$, if and only if the following conditions hold:
\begin{romanlist}
\item{each conical singularity in $V(\G^*)$ has angle an odd multiple of $2\pi$;}
\item{for some (or equivalently, for any) choice of oriented simple closed curves $\{C_j\}$ in $\Gamma$ representing
a basis of $H_1(\Sigma;\Z)$,
\[
\sum_{b\in B\cap C_j}\alpha_b(C_j)-\sum_{w\in W\cap C_j}\alpha_w(C_j)
\]
is a multiple of $2\pi$ for all $j$, where $\alpha_v(C)$ denotes the angle made by the oriented curve $C$
at the vertex $v$ as illustrated below.}
\end{romanlist}
\begin{figure}[h]
\labellist\small\hair 2.5pt
\pinlabel {$v$} at 168 18
\pinlabel {$\alpha_v(C)$} at 175 160
\pinlabel {$C$} at 420 190
\endlabellist
\includegraphics[height=2cm]{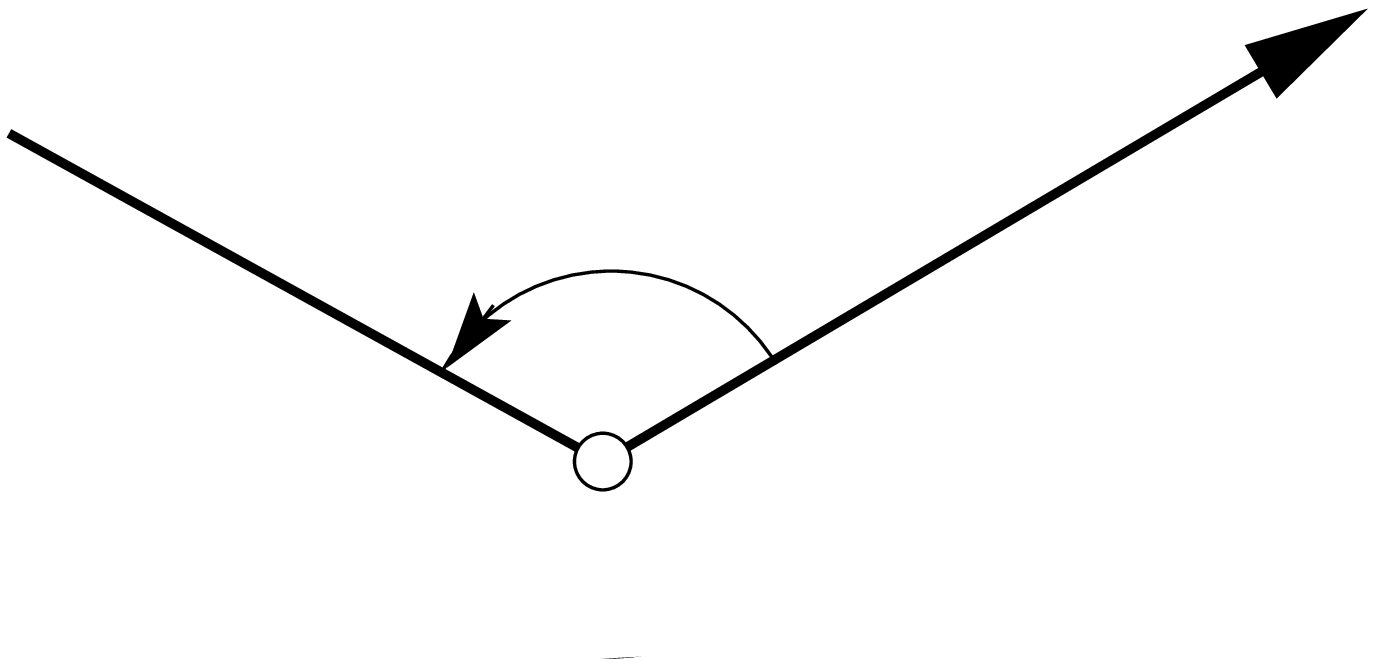}
\end{figure}
\end{thmintro}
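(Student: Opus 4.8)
The plan is to read off from Definition~\ref{def:Dirac} the explicit entries of the discrete Dirac operator. Each is of the form $D_\lambda(w,b)=\lambda(e)\,\nu(e)\,e^{i\beta_e}$, where $e=wb$, the phase $\beta_e$ is a purely geometric quantity (the argument of the relevant diagonal of the rhombus associated to $e$), and $\lambda(e)\in\{\pm1\}$ records the discrete spin structure. Since $|D_\lambda(w,b)|=\nu(e)$ automatically, $D_\lambda$ already carries the correct edge weights, and the whole question reduces to whether the unit-modulus phase cochain $\omega=\lambda\,e^{i\beta}$ is a (generalized) Kasteleyn cochain in the sense of the $\C^*$-valued formalism of Proposition~\ref{prop:torsor} and Theorem~\ref{thm:Pf-K}. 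The first observation is that both the discrete spin structures and the generalized Kasteleyn cochains (up to gauge) are torsors over $H^1(\Si;\Z/2)\cong(\Z/2)^{2g}$, and that the assignment $\lambda\mapsto\omega=\lambda\,e^{i\beta}$ is equivariant for these torsor structures. Hence it suffices to produce a single $\lambda$ making $\omega$ Kasteleyn: translating by the $2^{2g}$ elements of $H^1(\Si;\Z/2)$ then yields exactly the $2^{2g}$ non-equivalent Kasteleyn matrices, and conversely the existence of this family forces the existence of one such $\lambda$.

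By Theorem~\ref{thm:Pf-K} the Kasteleyn condition on $\omega$ is local-plus-global: an alternating product around each face must equal a prescribed value, and the monodromy of $\omega$ around a basis of $H_1(\Si;\Z)$ must match the value dictated by the associated quadratic form. I would treat the local condition first. Traversing the boundary of a face $f$ with $2k_f$ vertices and center $x_f$, the alternating product of the geometric phases is $e^{iB_f}$ with $B_f=-\Theta_f/2$, where $\Theta_f$ is the cone angle at $x_f$; this is the key geometric identity, and it is where isoradiality and the continuous turning of the edge directions around the cone point enter. The discrete spin structure contributes to this alternating product a prescribed sign encoding the spinor (square-root) monodromy, which differs from the Kasteleyn target sign by a fixed shift absorbing the $k_f$-dependence. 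Matching the two then forces $e^{iB_f}$ to a definite value, i.e.\ $\Theta_f$ to be an \emph{odd} multiple of $2\pi$ whenever $x_f$ is a conical singularity in $V(\G^*)$; at a regular face $\Theta_f=2\pi$ the condition is automatic. Singularities located at black vertices sit at vertices rather than faces and, because the operator goes $\C^B\to\C^W$, impose no face constraint and must be checked to drop out. This yields condition~(i).

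For the global condition I would compute the monodromy of the geometric phase along each curve $C_j$ of the chosen homology basis. The bipartite alternating structure of the Kasteleyn product identifies this monodromy with $\exp\!\big(i(\sum_{b\in B\cap C_j}\alpha_b(C_j)-\sum_{w\in W\cap C_j}\alpha_w(C_j))\big)$, the exponential of the alternating sum of turning angles along $C_j$. A Kasteleyn cochain may carry an arbitrary monodromy pattern over the $2^{2g}$ classes, but these monodromies are not independent of the face data: they are coupled through the single global quadratic form provided by the spin-structure/Kasteleyn correspondence of \cite{CR1}. It is this coupling that upgrades the naive requirement ``monodromy $\in\{\pm1\}$'' to the exact congruence of condition~(ii): the geometric monodromy must be trivial, i.e.\ $\sum_{b}\alpha_b(C_j)-\sum_{w}\alpha_w(C_j)$ must be a multiple of $2\pi$, so that the spin structure alone can supply the required Kasteleyn holonomy. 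Independence of the chosen basis $\{C_j\}$ follows because changing the representatives alters the alternating angle sum only by the face contributions already controlled by~(i).

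The main obstacle is the precise sign and parity bookkeeping in the last two steps: establishing the local identity $B_f=-\Theta_f/2$ cleanly, pinning down the spinor-versus-Kasteleyn sign shift that produces the word ``odd'' in~(i), and --- hardest of all --- making the coupling between the face data and the cycle monodromies explicit enough to justify the ``multiple of $2\pi$'' (rather than merely ``multiple of $\pi$'') in~(ii). I would organize the argument by first analyzing a single rhombus and a single face in a flat chart, then propagating the local computation around cone points, and only then globalizing via the torsor equivariance of the first paragraph; the verification that black-vertex singularities impose no obstruction, and the reduction of the generalized $\C^*$-valued Kasteleyn condition to the $\{\pm1\}$-valued spin-structure data, are the remaining technical points.
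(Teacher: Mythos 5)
Your overall architecture does match the paper's: write $D_\lambda(w,b)=\nu(e)\,\omega_V(e)\lambda(e)$, reduce the theorem to deciding when the $S^1$-valued cochain $\omega_\lambda=\omega_V\lambda$ is gauge equivalent to a $\{\pm1\}$-valued flat cochain (a Kasteleyn orientation), obtain condition~(i) from the face-curvature computation via isoradiality and Gauss--Bonnet, obtain condition~(ii) from cycle monodromies, and use torsor equivariance to get the count of $2^{2g}$. But there is a genuine gap at the heart of your treatment of condition~(ii), and it originates in your first sentence: you take $\lambda(e)\in\{\pm1\}$. In this paper a discrete spin structure is an $S^1$-valued \emph{cocycle} $\lambda\in Z^1(X;S^1)$ with $\lambda^2=\kappa$, where $\kappa$ is the discrete canonical bundle, whose evaluation on a cycle is the inverse of the holonomy of the flat metric; $\lambda$ can be taken $\pm1$-valued only when that holonomy is trivial. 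This is not cosmetic, because the square-root relation $\lambda^2=\kappa$ is exactly what produces condition~(ii).

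Concretely, the monodromy of the geometric phase along a cycle $C$ is $\omega_V(C)=\exp\bigl(-i\sum_{w\in W\cap C}\alpha_w(C)\bigr)$: it involves \emph{only} the white-vertex angles, not the alternating black-minus-white sum you assert. The paper's derivation is that $\omega_\lambda$ is gauge equivalent to a Kasteleyn orientation iff $\omega_\lambda(C)^2=1$ on a homology basis (compare with a reference Kasteleyn orientation $\omega_0$, which satisfies $\omega_0^2=1$), and then
\[
\omega_\lambda(C)^2=\omega_V(C)^2\lambda(C)^2=\omega_V(C)^2\kappa(C),
\qquad
\kappa(C)=\mathit{hol}(C)^{-1}=\exp\Bigl(i\sum_{v\in V(\G)\cap C}\alpha_v(C)\Bigr),
\]
so it is precisely the holonomy term $\kappa(C)$, entering through $\lambda^2=\kappa$, that contributes the black-vertex angles and turns the requirement into ``the alternating sum is a multiple of $2\pi$.'' With $\pm1$-valued $\lambda$ your condition would instead read $\omega_V(C)^2=1$, i.e.\ $\sum_{w\in W\cap C}\alpha_w(C)\in\pi\Z$, which is not condition~(ii). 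Your proposed substitute mechanism --- a coupling through the quadratic form of \cite{CR1} --- is not a computation, and you yourself flag the ``multiple of $2\pi$ versus $\pi$'' issue as the unresolved hard point; that is exactly the missing step. (A smaller inaccuracy: since $\lambda$ is a cocycle, it contributes trivially to every face product, so the curvature condition~(i) concerns $\omega_V$ alone --- there is no spinor sign to absorb there, and the odd multiples of $2\pi$ come directly from $c_{\omega_V}(f)=-\exp(i\theta_f/2)$ as in Lemma~\ref{lemma:flat}.)
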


As a consequence, given any graph $\G\subset\Si$ satisfying the conditions above, the partition function for the dimer model on $(\G,\nu)$ is given by
\[
Z(\G,\nu)=\frac{1}{2^g}\Big|\sum_{\lambda\in\S(\Sigma)}(-1)^{\mathrm{Arf}(\lambda)}\det(D_\lambda)\Big|,
\]
where $\mathrm{Arf}(\lambda)\in\Z_2$ denotes the Arf invariant of the spin structure $\lambda$ (Theorems~\ref{thm:Pf-D} and \ref{thm:Pf-Arf}).
Our final result -- Theorem~\ref{thm:real} -- states that the Dirac operators on any closed Riemann surface can be approximated by Kasteleyn matrices.
More precisely, for any closed Riemann surface of positive genus, there exist a flat surface $\Si$ with cone type singularities
inducing this complex structure, and an isoradially embedded bipartite graph $\G\subset\Si$, with arbitrarily small radius, satisfying all the hypothesis and conditions of the
theorem displayed above.

\subsection*{Acknowledgements}
The author would like to thank Mathieu Baillif, Giovanni Felder and Nicolai Reshetikhin for useful discussions.

\section{The discrete $\bar\partial$ operator on Riemann surfaces}
\label{sec:CA}

The aim of this section is to introduce a discrete analog of the $\bar{\partial}$ operator on functions on a Riemann surface, extending works of Duffin~\cite{Duf},
Mercat~\cite{Mer} and Kenyon~\cite{Ken}. As this definition requires a substantial amount of notation and terminology, we shall proceed leisurely, starting by recalling
in Paragraph~\ref{sub:fs} the main properties of flat surfaces with conical singularities. We then give in Paragraph \ref{sub:DG} discrete analogs of all the geometric
objects involved in the definition of $\bar{\partial}$ (see Table~\ref{table:dic1}). This will lead up in Paragraph \ref{sub:DDol} to the -- by then, quite natural --
definition of the discrete operator. The section is concluded with a convergence theorem (Theorem \ref{thm:CL} in Paragraph \ref{sub:CL}), justifying further
our definition.

\subsection{Flat surfaces}
\label{sub:fs}

Our discrete $\bar{\partial}$ operator will be defined for graphs embedded in so-called flat surfaces with conical singularities.
Since these objects are ubiquitous in the present paper, we devote this first paragraph to their main properties, referring to \cite{tro} for further details.

Given a positive real number $\theta$, the space
\[
C_\theta=\{(r,t)\,:\,\hbox{$r\ge 0$, $t\in\R/\theta\Z$}\}/(0,t)\sim(0,t')
\]
endowed with the metric $\mathit{ds}^2=\mathit{dr}^2+r^2\mathit{dt}^2$ is called the
{\bf standard cone of angle $\theta$\/}. Note that the cone without its tip is locally isometric to the Euclidean plane.
Let $\Sigma$ be a surface with a discrete subset $S$. A {\bf flat metric on $\Sigma$ with conical singularities} of angles
$\{\theta_x\}_{x\in S}$ supported at $S$ is an atlas $\{\phi_x\colon U_x\to U'_x\subset C_{\theta_x}\}_{x\in S}$, where $U_x$ is an open neighborhood of $x\in S$, $\phi_x$ maps
$x$ to the tip of the cone $C_{\theta_x}$, and the transition maps are Euclidean isometries.

This seemingly technical definition should not hide the fact that these objects are extremely natural: any such flat surface can be obtained by gluing polygons embedded
in $\R^2$ along pairs of sides of equal length. For example, a rectangle with opposite sides identified will define a flat torus with no singularity. On the other hand,
a regular $4g$-gon with opposite sides identified gives a flat surface of genus $g$ with a single singularity of angle $2\pi(2g-1)$. In general, the topology of the surface
is related to the cone angles by the following Gauss-Bonnet Formula: if $\Si$ is a closed flat surface with cone angles $\{\theta_x\}_{x\in S}$, then
\[
\sum_{x\in S}(2\pi-\theta_x)=2\pi\chi(\Sigma).
\]

For the purpose of this paper, the most important property of flat metrics is that they encode complex structures on oriented surfaces. Indeed, the conformal
structure on $\Sigma\setminus S$ given by a flat metric extends to the whole oriented surface, defining a complex structure on $\Sigma$. Furthermore, let $\Si$
be a closed oriented surface, $S\subset\Si$ a discrete subset, and $\{\theta_x\}_{x\in S}$ a set of positive numbers satisfying the Gauss-Bonnet Formula.
Then, for each complex structure on $\Sigma$, there exists a flat metric on $\Sigma$ with conical singularities of angles $\{\theta_x\}_{x\in S}$ supported at $S$
inducing this complex structure.

For example, any complex structure on the torus can be realized by a flat surface with no singularity: simply consider the parallelogram in the complex plane spanned by
the pair of periods of the torus, and identify the opposite sides. Similarly, deforming the regular octagon in such a way that the sides are organized into pairs
of equal length allows to realize any complex structure on the genus two surface.

\subsection{Some discrete geometry}
\label{sub:DG}

Let us begin by briefly recalling the definition of the $\partial$ and $\bar{\partial}$ operators on a Riemann surface $\Sigma$. (This will also fix some notation).
The complex structure $J$ on $\Sigma$ induces a decomposition of the complexified tangent bundle $T\Sigma^\C$ into $T\Sigma^+\oplus T\Sigma^-$, and therefore, a decomposition
of complex-valued vector fields $C^\infty(T\Sigma^\C)=C^\infty(T\Sigma^+)\oplus C^\infty(T\Sigma^-)$. The elements of $C^\infty(T\Sigma^+)$ (resp. $C^\infty(T\Sigma^-)$) are
the vector fields for which the action of $J$ is given by multiplication by $i$ (resp. $-i$). Similarly, the complex cotangent bundle splits, resulting in a decomposition
of the complex-valued 1-forms on $\Sigma$:
\[
\Omega^1(\Sigma,\C)=C^\infty(T^*\Sigma^\C)=C^\infty(T^*\Sigma^+)\oplus C^\infty(T^*\Sigma^-)=\Omega^{1,0}(\Sigma)\oplus\Omega^{0,1}(\Sigma).
\]
Note that the forms of type $(1,0)$ (resp. $(0,1)$) are the 1-forms $\varphi$ such that for any vector field $V$, $\varphi(J(V))$ is equal to $i\varphi(V)$
(resp. $-i\varphi(V)$). Finally, the exterior derivative $d\colon C^\infty(\Sigma)\to\Omega^1(\Sigma)$ induces a $\C$-linear map
$d^\C\colon C^\infty(\Sigma,\C)\to\Omega^1(\Sigma,\C)=\Omega^{1,0}(\Sigma)\oplus\Omega^{0,1}(\Sigma)$, whose composition with the natural projections
defines the Dolbeault operators $\partial\colon C^\infty(\Sigma,\C)\to\Omega^{1,0}(\Sigma)$ and $\bar\partial\colon C^\infty(\Sigma,\C)\to\Omega^{0,1}(\Sigma)$.
Recall that a function $f\in C^\infty(\Sigma,\C)$ is holomorphic if and only if $\bar\partial f$ is zero.

\medskip

We are now ready to start our discretization procedure. First and foremost, a Riemann surface $\Sigma$ is a surface. To encode the topology of $\Sigma$,
fix a locally finite graph $\Gamma\subset\Sigma$ with vertex set $V(\Gamma)$ and edge set $E(\Gamma)$, such that $\Sigma\setminus\Gamma$ consists of disjoint open discs.
In other words, $\Gamma$ is the 1-skeleton of a cellular decomposition of $\Sigma$.
For notational simplicity, we shall assume throughout this section that $\Gamma$ has neither multiple edges, nor valency one vertices.
(Note however that all our results hold in the general case as well.)

As explained in the previous paragraph, a standard and beautiful way to encode a complex structure on an oriented surface $\Sigma$ is to endow this surface with a
flat metric with conical singularities. Note that any point in $\Sigma\setminus S$ has a well-defined tangent space. In particular, the space $\X(\Sigma)$ of vector fields
on $\Sigma$ can be naively discretized by $\X(D)$, the space of vector fields along some discrete subset $D\subset\Sigma\setminus S$.

\begin{table}[t]\centering
\caption{Discretization dictionary, part 1}
\label{table:dic1}

\setlength\extrarowheight{4pt}
\setlength\arrayrulewidth{.8pt}

\begin{tabulary}{\textwidth}{||C||C||}\hhline{|t:=:t:=:t|}

\em{the geometric object} & \em{the discrete analog} \\ \hhline{|:=::=:|}

a surface $\Sigma$ & a graph $\Gamma\subset\Sigma$ inducing a cellular decomposition of $\Sigma$ \\ \hhline{||-||-||}

a conformal structure on $\Sigma$ & a flat metric on $\Sigma$ with conical singularities $S\subset\Sigma$ \\ \hhline{||-||-||}

the space $\X(\Sigma)$ of vector fields on $\Sigma$ & the space $\X(D)$ of vector fields along some discrete subset $D\subset\Sigma$ \\ \hhline{||-||-||}

the decomposition $C^\infty(T\Sigma^\C)=C^\infty(T\Sigma^+)\oplus C^\infty(T\Sigma^-)$ induced by an almost complex structure on $\Sigma$ &
a bipartite structure $V(\Gamma)=B\sqcup W$ together with a perfect matching $M$ on the graph $\Gamma$, inducing a decomposition $\X(D_M)^\C=\X(B)\oplus\X(W)$\\\hhline{||-||-||}

the space $\Omega^{1,0}(\Sigma)$ of $(1,0)$-forms on $\Sigma$ & $\Omega^1(B)=\prod_{b\in B}\X(b)^*$ \\ \hhline{||-||-||}

the space $\Omega^{0,1}(\Sigma)$ of $(0,1)$-forms on $\Sigma$ & $\Omega^1(W)=\prod_{w\in W}\X(w)^*$ \\ \hhline{||-||-||}

the space $C^\infty(\Sigma,\C)$ of complex functions on $\Sigma$ & $\C^B\simeq\C^W$, identified via $M$ \\ \hhline{||-||-||}

$\iint_{P}\bar\partial F\,dx\,dy=-\frac{i}{2}\int_{\partial P} F\,dz$ & the definition of the discrete $\bar\partial$ operator $\bar\partial\colon\C^B\to \Omega^1(W)$\\

\hhline{|b:=:b:=:b|}

\end{tabulary}
\end{table}

Next, we wish to encode in the graph $\Gamma$ the decomposition of $\X(\Sigma)^\C=C^\infty(T\Sigma^\C)$ induced by the almost complex structure.
A convenient way to do so is to consider a decomposition $V(\Gamma)=B\sqcup W$ of the vertices of $\Gamma$ into, say, black and
white vertices, together with a perfect matching $M$ on $\Gamma$ pairing each black vertex $b\in B$ with a white one $w\in W$ and vice versa.
Any perfect matching will do the job, provided
$\Gamma$ is {\bf bipartite}\/, that is: no edge of $\Gamma$ links two vertices of the same color. Hence, when the surface $\Sigma$ is endowed with an almost complex structure,
it is natural to consider a bipartite graph $\Gamma\subset\Sigma$ together with a perfect matching $M$ on it. The discrete analog of the decomposition
$C^\infty(T\Sigma^\C)=C^\infty(T\Sigma^+)\oplus C^\infty(T\Sigma^-)$ is then given by $\X(D_M)^\C=\X(B)\oplus\X(W)$, where $D_M\subset\Sigma$
denotes the discrete set consisting of the middle point of each edge in $M$. The complex structure on $\X(B)$ (resp. $\X(W)$) is such that multiplication by $i$ corresponds
to the 90 degrees rotation of the tangent vectors in the positive (resp. negative) direction, which we will represent in our figures as counterclockwise (resp. clockwise).

In the same way, $\Omega^{1,0}(\Sigma)$ will be encoded by the space $\Omega^1(B):=\prod_{b\in B}\X(b)^*$ and $\Omega^{0,1}(\Sigma)$ by $\Omega^1(W):=\prod_{w\in W}\X(w)^*$,
where $\X(v)^*$ denotes the dual to the 1-dimensional complex vector space $\X(v)$ of tangent vectors at the vertex $v\in V(\Gamma)$. Finally, the space
$C^\infty(\Sigma,\C)$ can be discretized both by $\C^B$ and by $\C^W$, which are identified via the perfect matching $M$.

Table \ref{table:dic1} summarizes our notations and the dictionary between the geometric and discrete objects considered in this paragraph. (The last entry will be explained
shortly.)

\subsection{The discrete $\bar\partial$ operator}
\label{sub:DDol}

Let $\Gamma$ be a bipartite graph embedded in a flat surface $\Sigma$ with conical singularities supported at $S$.
According to the discussion above, the operator $\bar\partial\colon C^\infty(\Sigma,\C)\to\Omega^{0,1}(\Sigma)$ should discretize to a $\C$-linear map
\[
\bar{\partial}\colon \C^B\to\Omega^1(W)=\prod_{w\in W}\X(w)^*.
\]
These spaces make sense as soon as no white vertex is a singularity, so let us only assume $S\subset\Sigma\setminus W$ for now.

Following the terminology of Kenyon \cite{Ken}, we shall say that $\Gamma$ is {\bf isoradially embedded\/} in $\Sigma$ if each edge of $\Gamma$ is a straight line, and if for some $\delta>0$, each face $f$ of $\Gamma\subset\Sigma$ contains an element $x_f$ such that $d(x_f,v)=\delta$ for all vertices $v$ of $\partial f$. We shall furthermore assume
that a singularity of $\Sigma$ is either a black vertex $b$ of $\Gamma$, or a vertex $x_f$ of the dual graph $\Gamma^*$, that is: $S\subset V(\Gamma^*)\cup B$.

\begin{figure}[h]
\centerline{\psfig{file=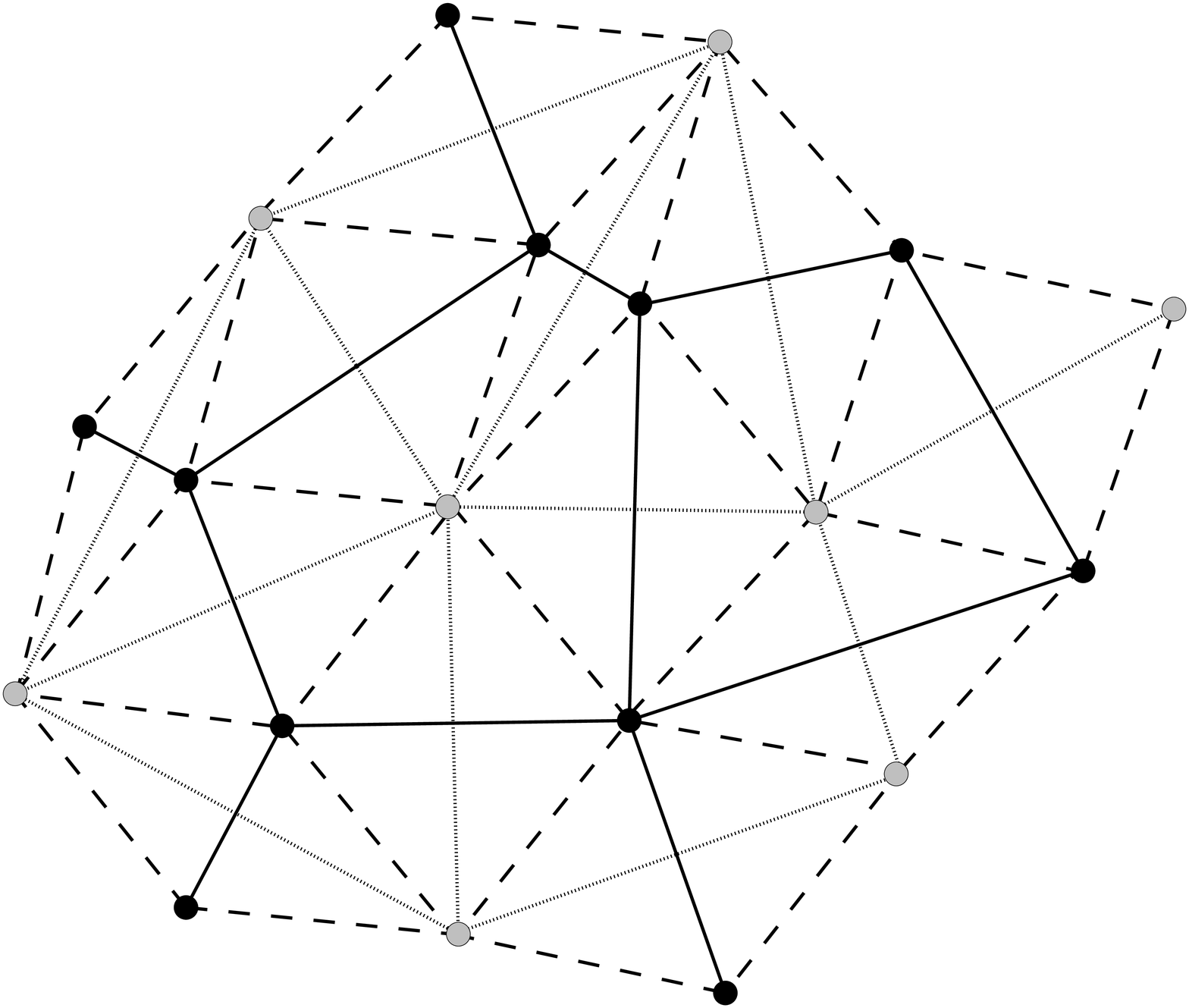,height=6cm}}
\caption{A isoradial graph $\Gamma$ (black vertices, solid edges), its dual graph $\Gamma^*$ (lighter vertices and edges), and the
associated rhombic lattice (all vertices, dashed edges).}
\label{fig:rhombus}
\end{figure}

Given an isoradially embedded graph $\Gamma\subset\Sigma$, the associated {\bf rhombic lattice\/} is the graph $R_\G$ with vertex set $V(\Gamma)\cup V(\Gamma^*)$ and edges
joining each vertex of $\Gamma$ with the center of the adjacent faces, as illustrated in Figure~\ref{fig:rhombus}. Since $\G$ induces a cellular decomposition of $\Sigma$,
so does the rhombic lattice $R_\G\subset\Sigma$. Furthermore, as the singularities of $\Sigma$ lie among the vertices of the rhombic lattice,
one easily checks that the faces of this lattice are actual planar rhombi.
Therefore, the metric space $\Sigma$ should be understood as planar (paper) rhombi pasted together along their boundary edges.

For a fixed white vertex $w\in W$, let $\mathit{St}(w)\subset\Sigma$ denote the {\bf star of $w$\/} in the rhombic lattice, that is, the union of all the closed rhombi
adjacent to $w$ (see Figure~\ref{fig:star}). As $w$ does not belong to the singular set either,
the whole star can be isometrically embedded in the Euclidean plane, as ``demonstrated" by cutting and pasting paper rhombi. Hence, one should really think of $\Sigma$ as planar stars as in Figure~\ref{fig:star} pasted along their edges. (Note that since multiple
edges and valency one vertices are not allowed, we avoid the case where boundary edges of a star are glued together. But again, the difficulty of the general case would
only be notational.)

\begin{figure}[h]
\labellist\small\hair 2.5pt
\pinlabel {$b_1$} at 870 170
\pinlabel {$b_2$} at 710 540
\pinlabel {$b_3$} at 280 670
\pinlabel {$b_4$} at -30 290
\pinlabel {$b_m$} at 380 -20
\pinlabel {$x_1$} at 660 295
\pinlabel {$x_2$} at 458 505
\pinlabel {$x_3$} at 168 434
\pinlabel {$x_4$} at 130 80
\pinlabel {$x_m$} at 615 70
\pinlabel {$w$} at 345 180
\endlabellist
\centerline{\psfig{file=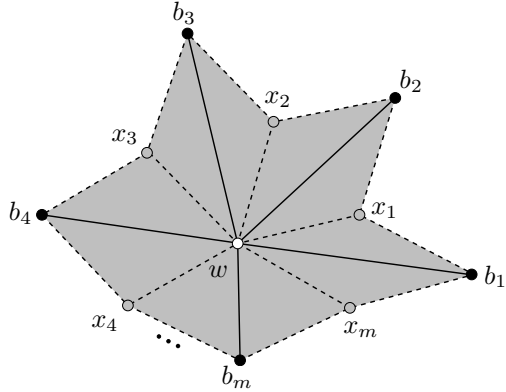,height=4.5cm}}
\caption{The star $\mathit{St}(w)$ of the white vertex $w$.}
\label{fig:star}
\end{figure}

As mentioned in the introduction, isoradial graphs first appeared in the work of Duffin \cite{Duf} (in the form of planar rhombic lattices) as a large class of graphs for which
the Cauchy-Riemann operator admits a nice discretization. It should therefore not come as a surprise that such a condition is necessary for
our definition.

\begin{definition}\label{def:dbar}
Let $\Gamma$ be a bipartite graph isoradially embedded in a flat surface $\Sigma$ with conical singularities $S\subset V(\Gamma^*)\cup B$.
Given an edge $(w,b)$ of $\Gamma$, let $\nu(w,b)$ denote the length of the dual edge.
The {\bf discrete $\bar{\partial}$ operator\/} is the linear map $\bar{\partial}\colon \C^B\to\Omega^1(W)=\prod_{w\in W}\X(w)^*$ defined by
\[
(\bar{\partial}f)(w)(V_w)=: (\bar{\partial}_wf)(V_w)=\frac{|V_w|}{2\mathit{Area}(\St(w))}\sum_{b\sim w}\nu(w,b)e^{i\vartheta_V(w,b)}\,f(b)
\]
for $f\in\C^B$, $w\in W$ and $V_w\in\X(w)$.
The sum is over all vertices $b$ adjacent to $w$, and $\vartheta_V(w,b)$
denotes the angle at $w\in W$ from the tangent vector $V_w$ to the oriented edge $(w,b)$, as illustrated in Figure~\ref{fig:theta}.

A function $f\in\C^B$ is {\bf discrete holomorphic\/} if $\bar{\partial}f=0$.
\end{definition}

\begin{figure}[h]
\labellist\small\hair 2.5pt
\pinlabel {$w$} at 20 85
\pinlabel {$b$} at 200 290
\pinlabel {$\vartheta_V(w,b)$} at 185 85
\pinlabel {$V_w$} at 245 -5
\endlabellist
\centerline{\psfig{file=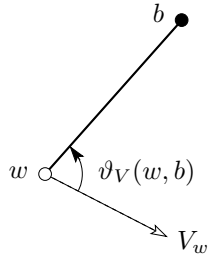,height=3cm}}
\caption{Definition of the angle $\vartheta_V(w,b)$.}
\label{fig:theta}
\end{figure}

This definition should be understood as a discretization of the formula
\[
\iint_{P}\bar\partial F(x+iy)\,dx\,dy=-\frac{i}{2}\int_{\partial P} F(z)\,dz.
\]
Indeed, given $f\in\C^B$, let $\widehat{f}\colon\Sigma\to\C$ be defined (almost everywhere) by $\widehat{f}(p)=f(b)$ if $p$ belongs to the interior of the star $\St(b)$.
Fix a white vertex $w$, and consider an isometric embedding $\phi$ of the corresponding star $\St(w)$ into $\C$. Setting $V_w=(T_w\phi)^{-1}(1)$, $F=\widehat{f}\circ\phi^{-1}$,
and using the notations of Figure~\ref{fig:star}, we get
\begin{eqnarray*}
(\bar\partial_w f)(V_w)&\approx&(\bar\partial F)(\phi(w))\\
&\approx&\frac{1}{\mathit{Area}(\St(w))}\iint_{\phi(\St(w))}\bar\partial F(x+iy)\,dx\,dy\\
&=&\frac{-i}{2\mathit{Area}(\St(w))}\int_{\phi(\partial\St(w))} F(z)\,dz\\
&=&\frac{1}{2\mathit{Area}(\St(w))}\sum_{j=1}^mi(\phi(x_{j-1})-\phi(x_j))f(b_j)\\
&=&\frac{1}{2\mathit{Area}(\St(w))}\sum_{b\sim w}\nu(w,b)e^{i\vartheta_V(w,b)}\,f(b).
\end{eqnarray*}

This definition extends previous work of Duffin~\cite{Duf}, Mercat~\cite{Mer}, Kenyon~\cite{Ken}, Dynnikov-Novikov~\cite{D-N} and Chelkak-Smirnov~\cite{C-S}, as described below.

\noindent{\bf Special case 1.} If the flat surface $\Sigma$ has trivial holonomy, there is a well-defined constant vector field in $\X(W)$.
Evaluating the discrete $\bar{\partial}$ operator at this vector field yields a map $K\colon\C^B\to\C^W$.
In the special case where $\Sigma$ has no singularity (which is only possible if $\Sigma$ is the plane, a cylinder or a torus),
this map coincides with the discrete $\bar{\partial}$ operator on planar bipartite isoradial graphs defined by Kenyon \cite{Ken} (up to the normalization constant).
For the planar hexagonal lattice, this operator is conjugate to the discrete $\bar{\partial}$ operator considered by Dynnikov-Novikov in \cite{D-N}.
\smallskip

\noindent{\bf Special case 2.}
Let $\mathcal{G}$ be a (non-necessarily bipartite) graph embedded in a flat surface $\Sigma$ with singularities $S\subset V(\mathcal{G})\cup V(\mathcal{G}^*)=:\Lambda$.
The {\bf double of $\mathcal{G}$\/} is the bipartite graph
$\Gamma=\mathcal{G}\cup\mathcal{G}^*\subset\Sigma$ with black vertices $B=\Lambda$ and
white vertices $W=E(\mathcal{G})\cap E(\mathcal{G}^*)=:\Diamond$, or the other way around. If $\mathcal{G}$ is isoradially embedded in $\Sigma$, then
so is $\mathcal{G}^*$ (with same radius $\delta$) and $\Gamma$ (with radius $\delta/2$).

In the special case of planar double graphs, an element $f\in\C^B$ is either a function on the vertices of the associated rhombic lattice
(when $B=\Lambda$), or a function on the set of rhombi (when $B=\Diamond$).
As $\Sigma=\C$, our discrete $\bar{\partial}$ operator can be evaluated at the constant vector field $1\in T_p\C=\C$, yielding two
maps $\C^\Lambda\to\C^\Diamond$ and $\C^\Diamond\to\C^\Lambda$. These correspond exactly to the two discrete $\bar\partial$ operators defined
by Chelkak and Smirnov in \cite{C-S}.

\smallskip

\parpic[r]{$\begin{array}{c}
\labellist\small\hair 2.5pt
\pinlabel {$w$} at 260 90
\pinlabel {$x$} at 0 80
\pinlabel {$x'$} at 450 80
\pinlabel {$y$} at 200 -10
\pinlabel {$y'$} at 200 260
\endlabellist
\includegraphics[height=1.5cm]{CR}
\end{array}$}
\noindent{\bf Special case 3.}
Finally, let us consider the more general case of double graphs isoradially embedded in flat surfaces with conical singularities, with bipartite structure $B=\Lambda$.
Let $w\in W=\Diamond$ be a fixed rhombus, and let $x,y,x',y'$ denote its vertices enumerated counterclockwise, as illustrated above.
Then, the equality $\bar\partial_w f=0$ coincides with the very intuitive ``discrete Cauchy-Riemann equation" studied by Duffin~\cite{Duf} (in the planar case)
and Mercat~\cite{Mer}:
\[
\frac{f(y')-f(y)}{d(y,y')}=i\,\frac{f(x')-f(x)}{d(x,x')}.
\]

\medskip

Let us mention several natural properties of the discrete $\bar\partial$ operator.

\begin{proposition}\label{prop:prop}
The discrete $\bar\partial$ operator satisfies the following properties.
\begin{romanlist}
\item{If $f\in\C^B$ is constant, then $\bar\partial f=0$.}
\item{Given a fixed white vertex $w$, let $f\in\C^{B\cap\mathit{St}(w)}$ be the restriction of a coordinate chart on a neighborhood of $\mathit{St}(w)$.
Then, $\bar\partial_w f=0$ .}
\item{Let $f\in\C^{B\cap\mathit{St}(w)}$ be as in {\it (ii)\/} above. Then its complex conjugate $\bar f$ satisfies $(\bar{\partial}_w\bar f)(V_w)=|V_w|$.}
\end{romanlist}
\end{proposition}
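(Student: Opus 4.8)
\section*{Proof proposal}

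The plan is to first upgrade the heuristic contour–integral computation preceding the proposition into an \emph{exact} closed formula for $\bar\partial_w$, after which all three assertions become one-line telescoping identities. Fix a white vertex $w$, choose an isometric embedding $\phi$ of the star $\St(w)$ into $\C$, and normalise $\phi(w)=0$. Write $\xi_j:=\phi(x_j)$ and $\beta_j:=\phi(b_j)$, with the $b_j$ and $x_j$ enumerated counterclockwise as in Figure~\ref{fig:star}. Evaluating on the distinguished vector $V_w=(T_w\phi)^{-1}(1)$ (so $|V_w|=1$), the quantity $\nu(w,b_j)e^{i\vartheta_V(w,b_j)}$ is the vector of length equal to the dual edge $d(x_{j-1},x_j)=|\xi_{j-1}-\xi_j|$ pointing along the primal edge $(w,b_j)$; since the diagonals of a rhombus are orthogonal and equal-sided, this is exactly $i(\xi_{j-1}-\xi_j)$, as already recorded in the displayed derivation. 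Hence
\[
(\bar\partial_w f)(V_w)=\frac{1}{2\,\mathit{Area}(\St(w))}\sum_{j}\nu(w,b_j)e^{i\vartheta_V(w,b_j)}f(b_j)=\frac{i}{2\,\mathit{Area}(\St(w))}\sum_{j}(\xi_{j-1}-\xi_j)\,f(b_j),
\]
this being exact because the contour integral of the piecewise constant $\widehat f$ is literally this finite sum. The second ingredient is the rhombus relation $\beta_j=\xi_{j-1}+\xi_j$, from the fact that the diagonals $[w,b_j]$ and $[x_{j-1},x_j]$ bisect each other.

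Parts (i) and (ii) then follow by telescoping. Since changing the direction of $V_w$ only multiplies the sum by a unit complex number, it suffices to check vanishing for the reference vector. If $f\equiv c$ then $\sum_j(\xi_{j-1}-\xi_j)f(b_j)=c\sum_j(\xi_{j-1}-\xi_j)=0$, whence $\bar\partial f=0$, giving (i). For (ii) I take $f=\phi$ (a general affine chart reduces to this case together with (i) by $\C$-linearity of $\bar\partial_w$ in $f$) and substitute $\beta_j=\xi_{j-1}+\xi_j$, so that $\sum_j(\xi_{j-1}-\xi_j)\beta_j=\sum_j(\xi_{j-1}-\xi_j)(\xi_{j-1}+\xi_j)=\sum_j(\xi_{j-1}^2-\xi_j^2)=0$, again telescoping, hence $\bar\partial_w f=0$.

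For (iii) the same substitution is applied to $\bar f(b_j)=\overline{\xi_{j-1}}+\overline{\xi_j}$:
\[
\sum_j(\xi_{j-1}-\xi_j)\bigl(\overline{\xi_{j-1}}+\overline{\xi_j}\bigr)=\sum_j\bigl(|\xi_{j-1}|^2-|\xi_j|^2\bigr)+\sum_j\bigl(\xi_{j-1}\overline{\xi_j}-\xi_j\overline{\xi_{j-1}}\bigr).
\]
The first sum telescopes to $0$, while the second equals $-2i\sum_j\Im(\xi_j\overline{\xi_{j-1}})$. Each $\tfrac12\Im(\xi_j\overline{\xi_{j-1}})$ is the (positive, by the counterclockwise convention) area of the triangle $w\,x_{j-1}\,x_j$, and since the dual diagonal $[x_{j-1},x_j]$ cuts the $j$-th rhombus into two triangles of equal area, these triangles assemble to half of $\St(w)$. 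Thus $\sum_j\Im(\xi_j\overline{\xi_{j-1}})=\mathit{Area}(\St(w))$, the displayed quantity equals $-2i\,\mathit{Area}(\St(w))$, and
\[
(\bar\partial_w\bar f)(V_w)=\frac{i}{2\,\mathit{Area}(\St(w))}\cdot\bigl(-2i\,\mathit{Area}(\St(w))\bigr)=1=|V_w|.
\]

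I expect the only genuine friction to be orientation bookkeeping: pinning down the factor $i$ (rather than $-i$) in the identity $\nu(w,b_j)e^{i\vartheta_V(w,b_j)}=i(\xi_{j-1}-\xi_j)$ and the sign of the triangle areas requires tracking both the counterclockwise enumeration of $\St(w)$ and the convention that the complex structure at a white vertex is clockwise rotation. One should also note that $(\bar\partial_w\bar f)(V_w)$ depends antilinearly on $V_w$, so the clean identity in (iii) is the statement for $V_w=(T_w\phi)^{-1}(1)$ and its positive real multiples, i.e.\ for the vector dual to the chart $f$; once these conventions are fixed, all three parts are exactly the telescoping computations above.
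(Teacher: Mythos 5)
Your proof is correct, and for parts \emph{(ii)} and \emph{(iii)} it takes a genuinely different route from the paper's. The paper proves \emph{(i)} by the same telescoping argument as yours, but for \emph{(ii)} it parametrizes the star by the rhombus angles $\alpha_j$ at $w$, writes $f(b_j)=d(w,b_j)e^{i\vartheta_V(w,b_j)}$, uses $\sin\alpha_j=\nu(w,b_j)d(w,b_j)/2\delta^2$, and telescopes the resulting exponential sum $\sum_j\big(e^{2i\sum_{k\le j}\alpha_k}-e^{2i\sum_{k<j}\alpha_k}\big)$, the closing-up of the star entering through $\sum_k\alpha_k=2\pi$; for \emph{(iii)} it simply observes that $\sum_j\nu(w,b_j)d(w,b_j)=2\,\mathit{Area}(\St(w))$. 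Your substitute for all of this is the bisection identity $\beta_j=\xi_{j-1}+\xi_j$, which turns \emph{(ii)} into the difference-of-squares telescope $\sum_j(\xi_{j-1}^2-\xi_j^2)=0$ and \emph{(iii)} into a telescope plus the shoelace identity $\sum_j\Im(\xi_j\overline{\xi_{j-1}})=\mathit{Area}(\St(w))$, the closing-up of the star now entering through the cyclicity $\xi_0=\xi_m$. What your route buys: it is uniform (all three parts are cyclic telescopes flowing from one exact formula), it avoids trigonometry entirely, and it isolates the only geometric inputs -- that the diagonals of a rhombus bisect each other perpendicularly, and that the star is closed. What the paper's route buys: \emph{(iii)} is immediate there, as a manifestly positive sum, with no signed-area bookkeeping. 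Two small remarks: your sign convention $\nu(w,b_j)e^{i\vartheta_V(w,b_j)}=i(\xi_{j-1}-\xi_j)$ is the correct one -- it matches the derivation preceding Definition~\ref{def:dbar} and the proof of Lemma~\ref{lemma:Morera}, while the identity displayed inside the paper's own proof of \emph{(i)} carries the opposite (there harmless) sign; and your closing caveat that \emph{(iii)} holds as stated only when the chart is aligned with $V_w$ is precisely the convention the paper adopts implicitly, since its proof normalizes $T_w\phi(V_w)=1$.
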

\begin{proof}
Let $f\in\C^B$ be a constant function. Fix a white vertex $w\in W$ and a unit tangent vector $V_w\in\X(w)$. Let $\phi\colon\mathit{St}(w)\to\C$
be an isometric embedding mapping $w$ to the origin and $V_w$ to the direction of the unit vector $1\in\C=T_0\C$.
With the notation of Figure~\ref{fig:star}, observe that for all $j$,
\[
\nu(w,b_j)e^{i\vartheta_V(w,b_j)}=i(\phi(x_j)-\phi(x_{j-1})),
\]
as both these complex numbers have same modulus and argument. It follows that
\[
(\bar{\partial}_wf)(V_w)=\frac{f(b_1)\,i}{2\mathit{Area}(\St(w))}\sum_{j=1}^m\left(\phi(x_{j-1})-\phi(x_j)\right)=0,
\]
proving the first claim.
To check the second one, let $\phi\colon U\to\C$ be a coordinate chart with $\St(w)\subset U$, and let $V_w\in\X(w)$ be the unit tangent vector
corresponding to the edge $(w,x_m)$ (recall Figure~\ref{fig:star}). By the first point above, it may be assumed that $\phi(w)=0$. Clearly, one can also assume that
$T_w\phi(V_w)=1\in T_0\C$. For $j=1,\dots m$, let $\alpha_j$ denote the angle at $w$ of the rhombus corresponding to the edge $(w,b_j)$. Note that
\[
\sin(\alpha_j)=2\sin(\alpha_j/2)\cos(\alpha_j/2)=\frac{\nu(w,b_j)d(w,b_j)}{2\delta^2}.
\]
Since $f(b_j)=\phi(b_j)=d(w,b_j)e^{i\vartheta_V(w,b_j)}$, we get
\begin{eqnarray*}
2\mathit{Area}(\St(w))(\bar\partial_w f)(V_w)&=&\sum_{j=1}^m\nu(w,b_j)e^{i\vartheta_V(w,b_j)}f(b_j)\\
&=&\sum_{j=1}^m\nu(w,b_j)d(w,b_j)e^{i2\vartheta_V(w,b_j)}\\
&=&-i\delta^2\sum_{j=1}^m(e^{i\alpha_j}-e^{-i\alpha_j})e^{i(\sum_{k=1}^{j-1}2\alpha_k+\alpha_j)}\\
&=&-i\delta^2\sum_{j=1}^m\left(e^{2i\sum_{k=1}^{j}\alpha_k}-e^{2i\sum_{k=1}^{j-1}\alpha_k}\right)=0, \\
\end{eqnarray*}
using the fact that $\sum_{k=1}^m\alpha_k=2\pi$. Finally, 
\[
(\bar\partial_w\bar f)(V_w)=\frac{|V_w|}{2\mathit{Area}(\St(w))}\sum_{j=1}^m\nu(w,b_j)d(w,b_j)=|V_w|,
\]
showing the third claim.
\end{proof}

\begin{remark}\label{rem:double}
As pointed out in the special cases above, most authors have considered discrete $\bar\partial$ operators defined on double graphs only.
It is however crucial for us to consider more general graphs, for the following reason.
In Section~\ref{sec:Kast}, we shall turn to the problem of counting perfect matchings on a (finite) bipartite graph $\Gamma$ embedded in a (compact) surface $\Sigma$.
For such a matching to exist, one necessary condition is that the number of black vertices equals the number of white ones. But in the case of a double graph
$\Gamma={\mathcal D}({\mathcal G})\subset\Sigma$, this condition gives
\[
0=|B|-|W|=|V({\mathcal G})|+|F({\mathcal G})|-|E({\mathcal G})|=\chi(\Sigma).
\]
Hence, no double graph as above admits a perfect matching unless $\Sigma$ is a torus.
\end{remark}

On the other hand, our setting imposes almost no restriction on the combinatorial type of the graphs considered:

\begin{proposition}\label{prop:real}
Let $\Gamma$ be a locally finite bipartite graph such that each white vertex has degree at least three. Then, $\Gamma$ can be isoradially embedded in an orientable flat
surface $\Sigma$ with conical singularities $S\subset V(\Gamma^*)\cup B$.
\end{proposition}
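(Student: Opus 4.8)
The plan is to construct $\Sigma$ by literally gluing together paper rhombi, following the picture of an isoradial surface as rhombi pasted along their edges described around Figure~\ref{fig:rhombus}. The starting observation is that an isoradial embedding of $\Gamma$ is exactly the same datum as a geometric realization of the rhombic lattice $R_\Gamma$: its faces are rhombi of a common side $\delta$, one rhombus $\rho_e$ per edge $e=(w,b)$ of $\Gamma$, with $w,b$ forming one pair of opposite corners (carrying some angle $\alpha_e$) and the two adjacent dual vertices forming the other pair (carrying $\pi-\alpha_e$); the edge $wb$ of $\Gamma$ is the corresponding diagonal. Conversely, given such a collection of rhombi together with a gluing pattern, pasting them isometrically along their sides produces a flat surface carrying an isoradial $\Gamma$. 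Thus I first fix a side length $\delta>0$ and a ribbon (rotation) structure on $\Gamma$; since $\Gamma$ is locally finite this yields an orientable surface in which $\Gamma$ is cellularly embedded, and hence fixes the combinatorial rhombic lattice $R_\Gamma$ and its face set $V(\Gamma^*)$.

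The heart of the matter is to choose the angles $\alpha_e\in(0,\pi)$ so that every white vertex becomes a regular point, i.e.\ has cone angle exactly $2\pi$ --- this is what will force $S\subset B\cup V(\Gamma^*)$. The cone angle accumulated at a vertex of $R_\Gamma$ is the sum of the rhombus angles meeting there, so at a white vertex $w$ it is $\sum_{e\ni w}\alpha_e$, precisely the sum appearing in the proof of Proposition~\ref{prop:prop}. The crucial simplification is that, $\Gamma$ being bipartite, every edge has a \emph{unique} white endpoint; hence as $w$ ranges over $W$ these constraints involve pairwise disjoint sets of edges and decouple completely. I may therefore solve them one white vertex at a time: since $\deg(w)=m_w\ge 3$, I assign to the $m_w$ edges at $w$ any angles in $(0,\pi)$ summing to $2\pi$, for instance all equal to $2\pi/m_w\in(0,\tfrac{2\pi}{3}]$. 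This is the only place the hypothesis enters, and it is sharp, since two angles in $(0,\pi)$ can never reach a sum of $2\pi$.

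It then remains to check that gluing the rhombi $\{\rho_e\}$ of side $\delta$ and angles $\{\alpha_e\}$ along $R_\Gamma$ delivers what we want. By construction $\Sigma$ is an orientable flat surface whose singular set is contained in the vertices of $R_\Gamma$; the cone angle at a white vertex equals $2\pi$ by the previous step, so white vertices are regular and $S\subset B\cup V(\Gamma^*)$. The graph $\Gamma$ sits in $\Sigma$ as the diagonals $wb$, each lying in the flat interior of a single rhombus and hence straight. For a face $f$, the dual vertex $x_f$ is joined to every boundary vertex $v$ by a rhombus side of length $\delta$, and as a radial segment issuing from the cone point $x_f$ this side minimizes length, giving $d(x_f,v)=\delta$; this is exactly isoradiality with radius $\delta$. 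The step I expect to require the most care is this last one --- verifying that the rhombus sides genuinely realize the distance $\delta$ near the possibly highly singular dual and black vertices, which reduces to the elementary fact that radial geodesics in a cone are minimizing --- together with the routine book-keeping needed to guarantee finite polygonal faces and to absorb the degenerate gluings (a star glued to itself, a rhombus with both dual corners identified) that the standing conventions of this section set aside as purely notational.
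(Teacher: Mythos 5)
Your construction is essentially the paper's own proof: the paper likewise builds $\Sigma$ by pasting, for each white vertex $w$ of degree $m$, a symmetric star of $m$ rhombi of side length $\delta$ and angle $2\pi/m$ at $w$ (so that white vertices are flat, which is exactly where the degree $\ge 3$ hypothesis enters), and then glues these stars along their boundary edges using chosen cyclic orderings at the black vertices --- precisely your ribbon structure, with the same conventions for multiple edges and degree-one black vertices. The only cosmetic differences are that you phrase the angle choice as a decoupled system of constraints before specializing to the same symmetric solution $2\pi/m_w$, and that you spell out the isoradiality distance check that the paper leaves implicit.
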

\begin{proof}
For each $w\in W$, fix a cyclic ordering of the $m$ adjacent edges (so that multiple edges are consecutive) and form the symmetric star $\St(w)$ by pasting together
$m$ rhombi of side length $\delta$ and of angle $2\pi/m$ according to this ordering. Note that the surface $\St(w)$ is endowed with an orientation given by the cyclic
ordering. In case of multiple edges or black vertices of degree 1, identify the corresponding boundary edges of $\St(w)$ accordingly. (This respects the orientation of the star.)
For each $b\in B$, fix a cyclic ordering of the adjacent edges, and glue the stars $\St(w)$ along their boundary edges according to these orderings, in the unique
way compatible with the orientations of the stars. The result is an oriented flat surface $\Sigma$ with conical singularities supported at $S$.
By construction, $\Gamma$ is isoradially embedded in $\Sigma$ and $S$ is contained in $V(\Gamma^*)\cup B$.
\end{proof}

\parpic[r]{$\begin{array}{c}
\includegraphics[height=2.3cm]{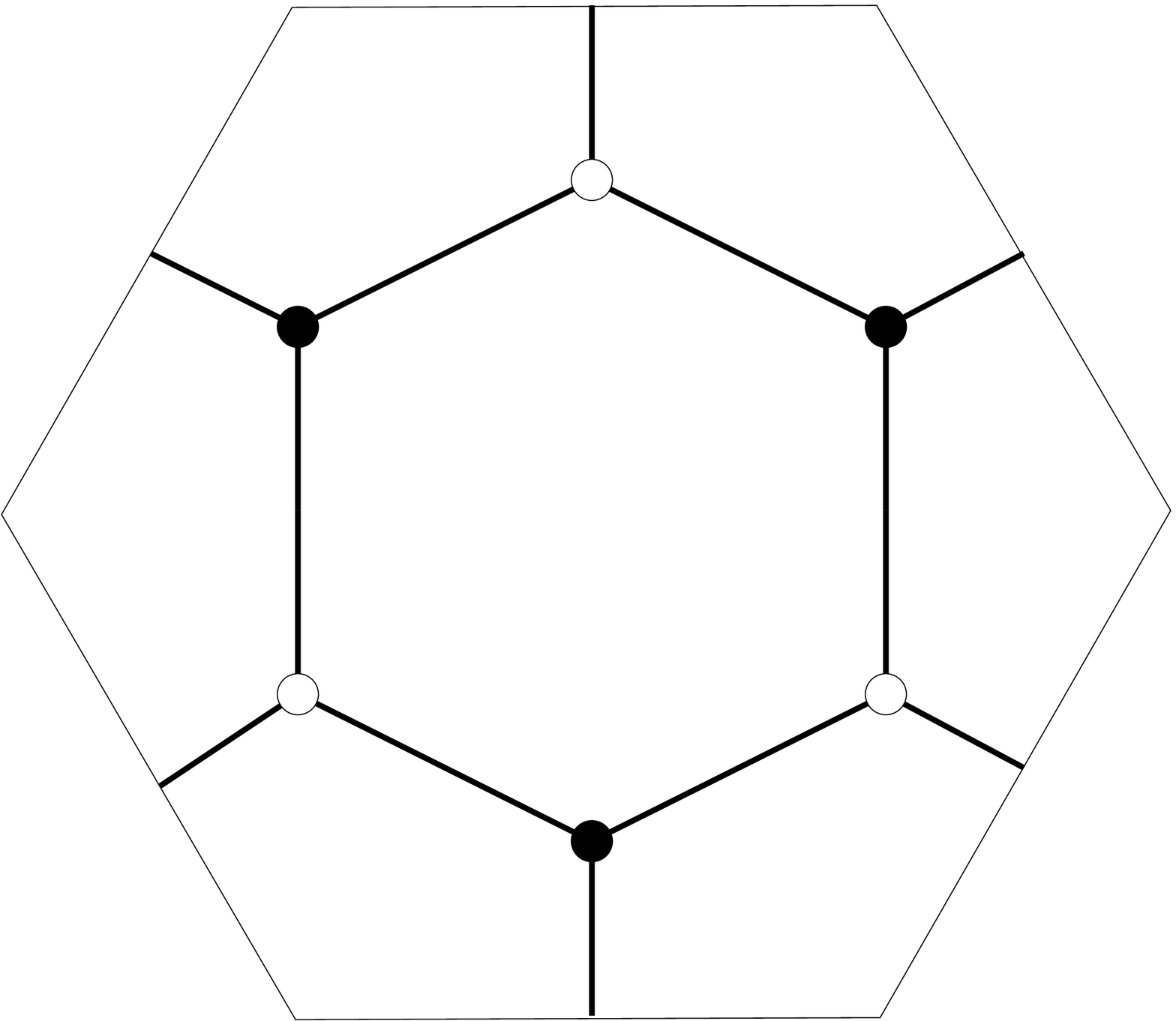}
\end{array}$}
As an example, consider the complete bipartite graph $K_{3,3}$. With a natural choice of the cyclic orderings around the vertices, the construction above yields
the honeycomb lattice embedded in the flat torus illustrated opposite. (The pairs of opposite sides of the big hexagon are identified.)

\medskip

To conclude this paragraph, note that if $S\subset \Sigma\setminus B$, one can define the {\bf discrete $\partial$ operator\/} as the $\C$-linear map
$\partial\colon \C^W\to\Omega^1(B)=\prod_{b\in B}\X(b)^*$ given by
\[
(\partial_bg)(U_b)=\frac{|U_b|}{2\mathit{Area}(\St(b))}\sum_{w\sim b}\nu(w,b)e^{-i\vartheta_U(w,b)}\,g(w)
\]
for $g\in\C^W$, $b\in B$ and $U_b\in\X(b)$. Here again, the sum is over all vertices $w$ adjacent to $b$, and
$\vartheta_U(w,b)$ denotes the angle at $b\in B$ from the tangent vector $U_b$ to the oriented edge $(w,b)$.
This construction generalises the one given in \cite{Ken}, which corresponds to the case with no singularity.
However, the discrete $\bar{\partial}$ operator being sufficient for our purposes, we shall not study $\partial$ in the
present paper.

\subsection{A convergence theorem}
\label{sub:CL}

The aim of this paragraph is to prove the following result.

\begin{theorem}
\label{thm:CL}
Let $\Sigma$ be a flat surface with conical singularities supported at $S$.
Consider a sequence $\Gamma_n$ of bipartite graphs isoradially embedded in $\Sigma$ with $S\subset V(\Gamma_n^*)\cup B_n$.
Assume that the radii $\delta_n$ of $\G_n$ converge to $0$, and that there is some $\eta>0$ such that
all rhombi angles of all these $\Gamma_n$'s belong to $[\eta,\pi-\eta]$.
Let $f_n\in\C^{B_n}$ be a sequence of discrete holomorphic functions converging to a function $f\colon\Sigma\to\C$
in the following sense: for any sequence $x_n\in B_n$ converging in $\Sigma$, the sequence
$f_n(x_n)$ converges to $f(\lim_nx_n)$ in $\C$. Then, the function $f$ is holomorphic in $\Sigma$.
\end{theorem}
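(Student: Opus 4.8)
The plan is to show that $f$ satisfies the Cauchy–Riemann equation $\bar\partial f = 0$ weakly, hence (by elliptic regularity / Weyl's lemma) that $f$ is holomorphic on $\Sigma \setminus S$, and finally that the singularities are removable. Working locally, I fix a point $p \in \Sigma \setminus S$ and an isometric chart on a neighborhood which, for $n$ large, contains many stars of $\Gamma_n$ (this is possible since $\delta_n \to 0$ and the rhombi angles are bounded away from $0$ and $\pi$, so the combinatorics stay uniformly controlled). The discretization formula recorded in Table~\ref{table:dic1}, namely that $\bar\partial_w f_n$ is a discrete analog of $\frac{-i}{2}\int_{\partial \St(w)} F\,dz$ divided by $\mathit{Area}(\St(w))$, is exactly the bridge I want to exploit: summing the vanishing quantities $(\bar\partial_{w}f_n)(V_w)=0$ over all white vertices $w$ inside a fixed smooth test domain $P$ telescopes the boundary contour integrals, leaving a discrete approximation of $\int_{\partial P} \widehat{f_n}\,dz$.

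\emph{More precisely}, I would fix a smooth simply-connected domain $P$ compactly contained in the chart and estimate
\[
\int_{\partial P} \widehat{f_n}(z)\,dz
\]
by comparing it with the sum of the boundary integrals over the stars $\St(w)$ for white vertices $w$ whose star meets $P$. Since $\bar\partial_w f_n = 0$ means each interior star contributes zero, and adjacent stars share edges along which the contour integrals cancel, the total reduces to a boundary layer of width $O(\delta_n)$ near $\partial P$. The convergence hypothesis $f_n(x_n) \to f(\lim x_n)$, together with the uniform angle bound (which prevents degenerate rhombi and keeps $\nu(w,b)$, $\mathit{Area}(\St(w))$ comparable to $\delta_n^2$), lets me pass to the limit and conclude $\int_{\partial P} f(z)\,dz = 0$ for every such $P$. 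By Morera's theorem, $f$ is holomorphic on $\Sigma \setminus S$.

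\emph{The main obstacle} I anticipate is twofold. First, the convergence in the hypothesis is only pointwise along convergent sequences of black vertices, with no a priori modulus of continuity or uniform bound on $f_n$; I will need the uniform angle condition $[\eta,\pi-\eta]$ to be doing real work here, presumably to upgrade pointwise convergence to uniform convergence on compact subsets away from $S$ (for instance, via a discrete maximum-principle or equicontinuity argument for discrete holomorphic functions, which is where isoradiality is essential). Controlling the error terms in the Riemann-sum approximation of the contour integral — showing that the difference between $\frac{-i}{2}\int_{\partial\St(w)} F\,dz$ and its discrete evaluation is $o(\delta_n^2)$ uniformly — is the delicate quantitative heart of the argument, and is precisely the step that the angle bound is designed to make tractable.

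\emph{Finally}, I would treat the singular set $S$. Since $S$ is discrete and $f$ is already holomorphic on the complement, removability follows provided $f$ is bounded (or at worst has controlled growth) near each cone point; the boundedness should again come from the limiting procedure, as $f$ is the pointwise limit of functions that are controlled on a full neighborhood including the cone point. At a cone singularity the local conformal coordinate is of the form $z \mapsto z^{\theta/2\pi}$, so I would pull back $f$ through this chart and invoke the standard removable-singularity theorem for bounded holomorphic functions to conclude that $f$ extends holomorphically across $S$, completing the proof that $f$ is holomorphic on all of $\Sigma$.
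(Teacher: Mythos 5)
Your overall strategy---telescoping the exact star contour integrals of a piecewise-constant extension $\widehat{f}_n$ over the white stars inside a test region, invoking Morera off $S$, then removing the cone points---is essentially the paper's proof. (The paper organizes it around deforming an arbitrary loop $\gamma$ into a cycle $\gamma_n$ in the rhombic 1-skeleton, which is then a cellular boundary $\partial\big(\sum_{w_n}\St(w_n)\big)$, rather than around a fixed smooth domain $P$; the mechanism is the same.)

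However, there is a genuine gap precisely at the step you single out as the ``delicate quantitative heart,'' and the repair you propose would not work. To pass to the limit you need $\widehat{f}_n\to f$ uniformly on a compact neighborhood of the contour, and you propose to obtain this from ``a discrete maximum-principle or equicontinuity argument for discrete holomorphic functions.'' No such regularity theory is available (nor needed) for general bipartite isoradial graphs on flat surfaces with singularities, and this is not what the angle bound is for. What you are missing is that the hypothesis is already strong enough by itself: the assumed mode of convergence (for \emph{every} convergent sequence $x_n\in B_n$, $f_n(x_n)\to f(\lim_n x_n)$) is continuous convergence, and a purely metric-space argument---the paper's Lemma~\ref{lemma:conv}---shows that any such limit $f$ is automatically continuous and that the convergence is uniform on compacts. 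One first transfers the hypothesis from $f_n$ to the extensions $\widehat{f}_n$ (this is where $\delta_n\to 0$ enters: the black vertices realizing the extremal values of $\Re f_n$ and $\Im f_n$ near $x_n$ lie within $4\delta_n$ of $x_n$, hence also converge to $\lim_n x_n$), and then applies the lemma. Two further corrections follow from this. First, your worry about an $o(\delta_n^2)$ per-star Riemann-sum error is vacuous: for the piecewise-constant extension the identity
\[
\int_{\partial\St(w)}\widehat{f}_n=2i\,\mathit{Area}(\St(w))\,(\bar{\partial}_w f_n)(V_w)
\]
is \emph{exact} (the discrete Morera Lemma~\ref{lemma:Morera}), not an approximation; the only limiting comparison happens once, at the end, against $f$. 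Second, the true role of the angle bound $[\eta,\pi-\eta]$ is in the step you gloss over: it gives a uniform bound on the length of the approximating contour (Lemma~\ref{lemma:bound} yields $\ell(\gamma_n)\le\frac{2\pi}{\eta\sin(\eta/2)}\,\ell(\gamma)$), without which the estimate $\big|\int_{\gamma_n}\widehat{f}_n-\int_{\gamma_n}f\big|\le\ell(\gamma_n)\sup|\widehat{f}_n-f|$ and the comparison of $\int_{\gamma_n}f$ with $\int_\gamma f$ cannot be closed. Finally, note that your removability argument at the cone points also hinges on the same soft lemma: it is the continuity of $f$ on all of $\Sigma$ (not some separate boundedness extracted ``from the limiting procedure'') that makes the singularities removable once $f$ is holomorphic on $\Sigma\setminus S$.
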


Our proof will follow the same lines as the one of Mercat~\cite[pp.192-195]{Mer}, a notable exception being the discrete Morera Theorem below (Lemma~\ref{lemma:Morera}).
Let us start with a straightforward generalization of \cite[Lemma 2]{Mer}.

\begin{lemma}
\label{lemma:conv}
Let $X$ be a metric space. Consider a sequence of functions $f_n\colon X\to\C$ converging to
$f\colon X\to\C$ in the following sense: for any convergent sequence $x_n$ in $X$, the sequence
$f_n(x_n)$ converges to $f(\lim_nx_n)$ in $\C$. Then, the function $f$ is continuous, and is the uniform limit
of $f_n$ on any compact.
\end{lemma}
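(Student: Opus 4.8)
The plan is to exploit the fact that the stated mode of convergence --- for every convergent sequence $x_n$ in $X$, one has $f_n(x_n)\to f(\lim_n x_n)$ --- is precisely \emph{continuous convergence}, and to derive both conclusions from a single \emph{interleaving} device. The basic observation is that applying the hypothesis to a constant sequence $x_n\equiv x$ already yields pointwise convergence $f_n(x)\to f(x)$ for every $x\in X$. The real content is then to upgrade this to continuity of $f$ and to uniformity on compacta, and the tool for both is the same. Given a sequence $x_k\to x$ in $X$, I would choose a strictly increasing sequence of indices $n_1<n_2<\cdots$ and splice the $x_k$ into a single sequence $(y_n)_{n\ge 1}$ by setting $y_n=x_k$ for $n_k\le n<n_{k+1}$ (and $y_n=x_1$ for $n<n_1$). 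Since $x_k\to x$, this spliced sequence satisfies $y_n\to x$, so the hypothesis gives $f_n(y_n)\to f(x)$; restricting to the indices $n=n_k$ yields $f_{n_k}(x_k)\to f(x)$. Thus the device converts ``$x_k\to x$'' into ``$f_{n_k}(x_k)\to f(x)$'' for a suitable index sequence $(n_k)$.

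For continuity, fix a sequence $x_k\to x$; I must show $f(x_k)\to f(x)$. Using pointwise convergence $f_n(x_k)\to f(x_k)$ (for each fixed $k$), I can pick the indices $n_k$ increasing and large enough that $|f_{n_k}(x_k)-f(x_k)|<1/k$. The interleaving step above then gives $f_{n_k}(x_k)\to f(x)$, and combining with $|f_{n_k}(x_k)-f(x_k)|\to 0$ forces $f(x_k)\to f(x)$. As continuity in a metric space is sequential, this proves $f$ continuous.

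For uniform convergence on a compact $K\subset X$, I would argue by contradiction: if $\sup_{x\in K}|f_n(x)-f(x)|$ does not tend to $0$, there are $\epsilon>0$, indices $n_k\to\infty$, and points $x_k\in K$ with $|f_{n_k}(x_k)-f(x_k)|\ge\epsilon$. By compactness of $K$, after passing to a subsequence we may assume $x_k\to x\in K$. Feeding this subsequence into the interleaving device (aligned with the indices $n_k$) gives $f_{n_k}(x_k)\to f(x)$, while the continuity just established gives $f(x_k)\to f(x)$; hence $|f_{n_k}(x_k)-f(x_k)|\to 0$, contradicting the lower bound $\epsilon$. I expect the only delicate point to be the bookkeeping in the interleaving construction: one must keep the index sequence $(n_k)$ strictly increasing while simultaneously aligning it with the point $x_k$ and ensuring the spliced sequence $(y_n)$ still converges to $x$. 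Everything else is routine once this device is in place, and the compactness of $K$ is used solely to manufacture the convergent subsequence $x_k\to x$ needed to invoke it.
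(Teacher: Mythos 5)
Your proof is correct and follows essentially the same strategy as the paper's: pointwise convergence from constant sequences, a spliced/interleaved sequence to convert a diagonal subsequence $f_{n_k}(x_k)$ into an instance of the hypothesis, and a contradiction argument via compactness (plus the just-established continuity of $f$) for uniform convergence. Your version is in fact slightly more careful than the paper's, which glosses over the extraction of a convergent subsequence from the failure of uniform convergence and the alignment of indices that you handle explicitly.
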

\begin{proof}
To show that $f$ is continuous at an arbitrary point $x\in X$, pick a sequence $x_j$ converging to $x$ in $X$.
For any $j=1,2,\dots$, the hypothesis applied to the constant sequence $x_j$ yields the existence of an index $n_j$ such that $|f_{n_j}(x_j)-f(x_j)|<1/j$.
Let $y_n$ be the sequence given by $y_n=x_j$ if $n=n_j$, and
$y_n=x$ else. As $y_n$ converges to $x$, $f_n(y_n)$ converges to $f(x)$ and so does the subsequence $f_{n_j}(x_j)$. It follows that
\[
|f(x_j)-f(x)|\le |f(x_j)-f_{n_j}(x_j)|+|f_{n_j}(x_j)-f(x)|
\]
is arbitarily small, proving the first claim.

To show the second one, let us assume by contradiction that $f_n$ does not converge uniformly on some fixed compact $C\subset X$.
This would imply the existence of a convergent sequence $x_n$ in $C$ with $|f_n(x_n)-f(x_n)|$ greater than some $\varepsilon>0$ for all $n$.
On the other hand, the hypothesis together with the continuity of $f$ at $x=\lim_nx_n$ imply
\[
|f_n(x_n)-f(x_n)|\le |f_n(x_n)-f(x)|+|f(x)-f(x_n)|<\varepsilon
\]
for $n$ big enough, a contradiction.
\end{proof}

\begin{lemma}
\label{lemma:bound}
Let $\Gamma$ be a graph isoradially embedded in the Euclidean plane, such that all rhombi angles belong to the interval $[\eta,\pi-\eta]$ for some $\eta>0$.
Then, for any vertex $v$ of $\Gamma$, and for any two elements $x,x'$ in the boundary $\partial\St(v)$ of the star of $v$,
\[
\frac{d_{\partial\St(v)}(x,x')}{|x-x'|}\le\frac{2\pi}{\eta\sin(\eta/2)}.
\]
\end{lemma}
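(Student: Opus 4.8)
The plan is to exploit the fact that $\St(v)$ is star-shaped with respect to its centre $v$. Since $\St(v)$ is a union of rhombi all having $v$ as a common vertex and filling the total angle $2\pi$ around $v$ (no singularity in the Euclidean plane), every ray from $v$ meets $\partial\St(v)$ exactly once. Working in polar coordinates centred at $v$, the boundary is therefore a radial graph $r=r(\theta)$ with $\theta$ increasing monotonically by $2\pi$ as one traverses it. Given $x,x'\in\partial\St(v)$, let $\Delta\in[0,\pi]$ be the angle they subtend at $v$ (the smaller of the two angular gaps). I would reduce the statement to two one-sided estimates in terms of $\Delta$: an upper bound $d_{\partial\St(v)}(x,x')\le 2\delta\cot(\eta/2)\,\Delta$ for the boundary distance (taken along the arc of smaller angular span, which bounds the minimum of the two arc lengths), and a lower bound $|x-x'|\ge 2\delta\sin(\eta)\,\sin(\Delta/2)$ for the chord.

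For the arc-length bound I would first record the elementary geometry. The rhombi at $v$ have angles $\phi_1,\dots,\phi_m$ there, with $\sum_j\phi_j=2\pi$ and each $\phi_j\in[\eta,\pi-\eta]$; the two boundary edges of the $j$-th rhombus have length $\delta$, their common far vertex lies on the angle bisector at distance $2\delta\cos(\phi_j/2)$, and the line supporting such an edge is at perpendicular distance $\delta\sin\phi_j$ from $v$. Hence on that edge the arc-length element satisfies $ds/d\theta=r^2/(\delta\sin\phi_j)$, and bounding $r$ by $2\delta\cos(\phi_j/2)$ (a short case check, according to whether the far vertex is nearer or farther from $v$ than the side vertices, covers the complementary case) gives $ds/d\theta\le 2\delta\cot(\phi_j/2)\le 2\delta\cot(\eta/2)$; integrating over the arc of angular span $\Delta$ yields the claimed bound. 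The very same computation shows every boundary point lies at distance at least $\delta\sin\phi_j\ge\delta\sin\eta$ from $v$, since the distance to a segment dominates the distance to its supporting line.

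For the chord, the law of cosines gives $|x-x'|^2=(r-r')^2+4rr'\sin^2(\Delta/2)\ge 4(\delta\sin\eta)^2\sin^2(\Delta/2)$, where $r,r'\ge\delta\sin\eta$ are the radii of $x,x'$. Dividing the two estimates and using that $\Delta/\sin(\Delta/2)\le\pi$ on $(0,\pi]$, I obtain
\[
\frac{d_{\partial\St(v)}(x,x')}{|x-x'|}\le\frac{2\delta\cot(\eta/2)\,\Delta}{2\delta\sin\eta\,\sin(\Delta/2)}\le\frac{\pi\cot(\eta/2)}{\sin\eta}=\frac{\pi}{2\sin^2(\eta/2)}.
\]
Finally, since $\sin(\eta/2)\ge(2/\pi)(\eta/2)$ by concavity of the sine (and $\eta\le\pi/2$ as $[\eta,\pi-\eta]$ is nonempty), one has $\eta\le 4\sin(\eta/2)$, whence $\tfrac{\pi}{2\sin^2(\eta/2)}\le\tfrac{2\pi}{\eta\sin(\eta/2)}$; this is the stated inequality, in fact with room to spare.

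The main obstacle is that the estimate must be uniform over \emph{all} pairs, including points close in the plane but far along the boundary, so it is genuinely a local bi-Lipschitz statement rather than a consequence of separately bounding diameter and perimeter. This is exactly where star-shapedness and the uniform angle bound $\phi_j\ge\eta$ are indispensable: they keep $r$ bounded away from $0$ and prevent the boundary from being nearly radial, which is what would otherwise make $ds/d\theta=r^2/p$ blow up. Accordingly, I expect the delicate point to be the arc-length estimate and its control through $\phi_j\ge\eta$; the chord bound and the concluding trigonometric comparison are routine.
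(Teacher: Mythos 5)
Your proof is correct, but it follows a genuinely different route from the paper's. The paper argues by a four-way case analysis on the combinatorial position of $x$ and $x'$ (same rhombus; adjacent edges of distinct rhombi; adjacent rhombi but non-adjacent edges; non-adjacent rhombi), with a separate elementary estimate in each case — no calculus, but four ad hoc constants that are each checked against $\frac{2\pi}{\eta\sin(\eta/2)}$. You instead exploit star-shapedness of $\St(v)$ to write $\partial\St(v)$ as a radial graph and prove a single uniform bi-Lipschitz estimate: the speed bound $ds/d\theta=r^2/p\le 2\delta\cot(\eta/2)$ (valid since $p=\delta\sin\phi_j$ and $r\le\max(\delta,2\delta\cos(\phi_j/2))$, and your "complementary case" $\phi_j>2\pi/3$ does check out because $1/\sin\eta\le 2\cot(\eta/2)$ whenever $\eta\le 2\pi/3$), together with the inradius bound $r\ge\delta\sin\eta$ and the law of cosines for the chord. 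What your approach buys is uniformity and a sharper constant — $\frac{\pi}{2\sin^2(\eta/2)}$, roughly half the paper's bound for small $\eta$ — plus the structural observation (no boundary edge is radial, since $p\ge\delta\sin\eta>0$) that two distinct boundary points always subtend a positive angle at $v$; what it costs is the polar arc-length computation, which the paper's purely synthetic case analysis avoids. Both proofs use the angle hypothesis in the same two essential ways: to keep the boundary away from $v$ and to bound the total turning/length, so the difference is one of organization rather than of underlying mechanism.
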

\begin{proof}
To simplify the notation, let $S$ stand for the star $\St(v)$ throughout this proof.
If $x$ and $x'$ belong to the same rhombus of $S$, then the quotient above is easily seen to be maximal when $x$ and $x'$ are at the same distance of the vertex opposite
to $v$. In such a case, this quotient is equal to $1/\sin(\alpha/2)$, where $\alpha$ denotes the angle of this rhombus at $v$. Since $\eta\le\alpha\le\pi-\eta$,
it follows
\[
\frac{d_{\partial S}(x,x')}{|x-x'|}\le\frac{1}{\sin(\alpha/2)}\le\frac{1}{\sin(\eta/2)}\le\frac{2\pi}{\eta\sin(\eta/2)}.
\]
Let us now assume that $x$ and $x'$ belong to adjacent edges, but distinct rhombi of $S$. If the corresponding angles at $w$ are equal to $\alpha$ and $\alpha'$,
then the argument above gives the inequality
\[
\frac{d_{\partial S}(x,x')}{|x-x'|}\le\frac{1}{\sin((\alpha+\alpha')/2)}\le\frac{1}{\sin(\eta)}\le\frac{2\pi}{\eta\sin(\eta/2)}.
\]
If $x$ and $x'$ lie on adjacent rhombi, but non-adjacent edges of the star, then $|x-x'|$ is bounded below by $\delta\sin(\eta)$, where $\delta$ denotes the length of
the rhombus edges. On the other hand, $d_{\partial S}(x,x')\le 4\delta$ as $x$ and $x'$ belong to adjacent rhombi. Therefore, in this case
\[
\frac{d_{\partial S}(x,x')}{|x-x'|}\le\frac{4}{\sin(\eta)}\le\frac{2\pi}{\eta\sin(\eta/2)}.
\]
Finally, consider the case where $x$ and $x'$ do not belong to adjacent rhombi. Fix a rhombus between them, and let $\alpha$ denote its angle at $v$. This time, 
$|x-x'|$ is bounded below by $2\delta\sin(\alpha/2)\ge 2\delta\sin(\eta/2)$, while the distance in $\partial S$ is bounded above by
half of the length of $\partial S$, that is
\[
d_{\partial S}(x,x')\le\frac{\ell(\partial S)}{2}=\#\{\text{rhombi in $S$}\}\cdot\delta\le\frac{2\pi\delta}{\eta}.
\]
This implies
\[
\frac{d_{\partial S}(x,x')}{|x-x'|}\le\frac{\pi}{\eta\sin(\eta/2)}\le\frac{2\pi}{\eta\sin(\eta/2)},
\]
and concludes the proof.
\end{proof}

The last lemma requires some preliminaries.
As above, let $\Gamma$ be a bipartite graph isoradially embedded in a flat surface $\Sigma$.
Given a function $\widehat{f}\colon\Sigma\to\C$ and a white vertex $w$ of $\Gamma$, set
\[
\int_{\partial\St(w)}\widehat{f}:=\int_a^b\widehat{f}(\gamma(t))(\phi\circ\gamma)'(t)\,dt,
\]
where $\gamma\colon[a,b]\to\St(w)$ is a parametrization of $\partial\St(w)$ and $\phi\colon\St(w)\hookrightarrow\C$ is an isometric embedding.
Obviously, the value of this integral depends on the choice of the chart $\phi$. However, the choice of another chart would multiply the result by a modulus 1 complex number.
In particular, the vanishing of this integral does not depend on such a choice, and the following statement makes sense.

\begin{lemma}[discrete Morera Theorem]
\label{lemma:Morera}
Let $\Gamma$ be a bipartite graph isoradially embedded in a flat surface $\Sigma$
with conical singularities $S\subset V(\Gamma^*)\cup B$.
Given $f\in\C^B$, let $\widehat{f}\colon\Sigma\to\C$ be the function defined by
$\widehat{f}(p)=\frac{1}{m}\sum_{j=1}^m f(b_j)$ if $p$ belongs to $\bigcap_{j=1}^m\St(b_j)$.
Then, $f$ is discrete holomorphic if and only if $\int_{\partial\St(w)}\widehat{f}=0$ for all $w\in W$.
\end{lemma}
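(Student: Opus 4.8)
The plan is to show that for each white vertex $w$ the contour integral $\int_{\partial\St(w)}\widehat f$ is a \emph{nonzero} constant multiple of $(\bar\partial_w f)(V_w)$, so that the two vanishing conditions coincide term by term as $w$ ranges over $W$. Fixing $w$, I would first choose an isometric embedding $\phi\colon\St(w)\hookrightarrow\C$ and set $V_w=(T_w\phi)^{-1}(1)$, exactly as in the derivation following Definition~\ref{def:dbar}. With the labelling of Figure~\ref{fig:star}, the boundary $\partial\St(w)$ is the polygon $x_m\to b_1\to x_1\to b_2\to\cdots\to b_m\to x_m$, traversed so that the arc $x_{j-1}\to b_j\to x_j$ consists precisely of the two outer edges of the rhombus associated to the edge $(w,b_j)$.

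The key step --- and the one I expect to require the most care --- is to check that $\widehat f$ is constant, equal to $f(b_j)$, along the open arc from $x_{j-1}$ to $x_j$. Each outer edge of the star is an edge of the rhombic lattice $R_\G$ joining a black vertex $b_j$ to a dual vertex $x$, and I would argue that an interior point $p$ of such an edge lies in a unique black star, namely $\St(b_j)$. Indeed, the only rhombi of $R_\G$ containing $p$ are the two sharing that edge, and both have $b_j$ as their black vertex (they correspond to the two edges of the face centred at $x$ that are incident to $b_j$). Hence $\bigcap\St(b)=\St(b_j)$ at such a point, the averaging in the definition of $\widehat f$ is trivial, and $\widehat f(p)=f(b_j)$. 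The finitely many vertices $x_j$, where several black stars meet and the average is genuinely nontrivial, form a set of measure zero and do not affect the integral.

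Granting this, the integral splits as $\int_{\partial\St(w)}\widehat f=\sum_{j=1}^m f(b_j)\bigl(\phi(x_j)-\phi(x_{j-1})\bigr)$, since $\int_{x_{j-1}}^{x_j}d(\phi\circ\gamma)=\phi(x_j)-\phi(x_{j-1})$ on each arc. Now I would invoke the identity $\nu(w,b_j)e^{i\vartheta_V(w,b_j)}=i\bigl(\phi(x_j)-\phi(x_{j-1})\bigr)$ already established in the proof of Proposition~\ref{prop:prop}, which gives $\phi(x_j)-\phi(x_{j-1})=-i\,\nu(w,b_j)e^{i\vartheta_V(w,b_j)}$ and therefore
\[
\int_{\partial\St(w)}\widehat f=-i\sum_{j=1}^m\nu(w,b_j)e^{i\vartheta_V(w,b_j)}f(b_j)=\frac{-2i\,\mathit{Area}(\St(w))}{|V_w|}\,(\bar\partial_wf)(V_w).
\]
Since the prefactor is nonzero, $\int_{\partial\St(w)}\widehat f=0$ if and only if $(\bar\partial_w f)(V_w)=0$; and because $\bar\partial_w f$ depends antilinearly on $V_w$, vanishing at the single nonzero vector $V_w$ is equivalent to $\bar\partial_w f=0$. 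Ranging over all $w\in W$ yields the claimed equivalence between discrete holomorphicity of $f$ and the vanishing of all the contour integrals. As a consistency check, the whole computation is chart-dependent only through the unimodular factor arising from the choice of $\phi$, in agreement with the observation preceding the statement that the vanishing of $\int_{\partial\St(w)}\widehat f$ is independent of that choice.
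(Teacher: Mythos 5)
Your proof is correct and follows essentially the same route as the paper's: the paper's entire proof consists of the single chain of equalities $\int_{\partial\St(w)}\widehat{f}=\sum_{j=1}^m(\phi(x_j)-\phi(x_{j-1}))f(b_j)=2i\,\mathit{Area}(\St(w))(\bar{\partial}_wf)(V_w)$ in a chart normalized by $T_w\phi(V_w)=1$, obtained exactly as you obtain it, with the verification that $\widehat{f}\equiv f(b_j)$ on each boundary arc left implicit. Your additional care on that point, and on the fact that vanishing at a single nonzero $V_w$ forces $\bar\partial_wf=0$, only makes the argument more complete; the sign discrepancy ($-2i$ versus the paper's $+2i$) is immaterial for the equivalence and traces back to the orientation convention in the identity $\nu(w,b_j)e^{i\vartheta_V(w,b_j)}=\pm i(\phi(x_j)-\phi(x_{j-1}))$ borrowed from Proposition~\ref{prop:prop}.
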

\begin{proof}
Let $w$ be a white vertex, and let $\phi\colon\St(w)\to\C$ be an isometric embedding of the corresponding star. Fixing a unit vector $V_w\in\X(w)$,
one can assume that $T_w\phi$ maps $V_w$ to $1\in\C$. With the notation of Figure~\ref{fig:star}, we get the equality
\[
\int_{\partial\St(w)}\widehat{f}=\sum_{j=1}^m(\phi(x_j)-\phi(x_{j-1}))f(b_j)=2i\mathit{Area}(\St(w))(\bar{\partial}_wf)(V_w),
\]
and the lemma follows.
\end{proof}

\begin{proof}[Proof of Theorem~\ref{thm:CL}]
As in Lemma~\ref{lemma:Morera}, extend $f_n\in\C^{B_n}$ to a function $\widehat{f}_n\colon\Sigma\to\C$ by setting
$\widehat{f}_n(p)=\frac{1}{m}\sum_{j=1}^m f_n(b_j)$ if $p$ belongs to $\bigcap_{j=1}^m\St(b_j)$.
By assumption, there is a function $f\colon\Sigma\to\C$ such that, for any sequence $x_n\in B_n$ converging in $\Sigma$, the sequence $f_n(x_n)$ converges to $f(\lim_nx_n)$ in
$\C$. We claim that this statement remains true for the extensions $\widehat{f}_n\colon\Sigma\to\C$. Indeed, let us fix a sequence $x_n$ in $\Sigma$ converging to $x$. For all
$n$, there exist black vertices $b_n^{(1)},\dots,b_n^{(m)}$ such that $x_n$ belongs to the star $\St(b_n^{(j)})$ for $j=1,\dots,m$. Let $b_n$ (resp. $b'_n$) be one of these
vertices where $\Re f_n$ is maximal (resp. minimal) on this set. By definition,
\[
\Re f_n(b'_n)\le\Re\widehat{f}_n(x_n)\le\Re f_n(b_n).
\]
Since $b_n$, $b'_n$ and $x_n$ all belong to the adjacent (or identical) stars $\St(b_n)$ and $\St(b'_n)$ whose diameter is at most $4\delta_n$, and
since $\delta_n$ converges to zero, both sequences $b_n$ and $b'_n$ converge to $x=\lim_n x_n$. By the assumption, $f_n(b_n)$ and $f_n(b'_n)$ both converge to $f(x)$.
By the inequalities displayed above, $\Re\widehat{f}_n(x_n)$ converges to $\Re f(x)$. The same argument shows that $\Im\widehat{f}_n(x_n)$ converges to $\Im f(x)$,
proving the claim.

Lemma~\ref{lemma:conv} asserts that $f\colon\Sigma\to\C$ is continuous and the uniform limit of $\widehat{f}_n\colon\Sigma\to\C$ on any compact.
Since the singular set $S\subset\Sigma$ is discrete and $f$ is continuous, it is now sufficient to check that $f$ is holomorphic on $\Sigma\setminus S$. Hence,
we can restrict ourselves to (simply-connected) domains of a Euclidean atlas for the flat surface $S\subset\Sigma$. In other words,
we can assume that we are working in a simply connected planar domain $U\subset\C$. Finally, by Morera's theorem, it is enough to show that $\int_\gamma f(z)\,dz$ vanishes
for any piecewise smooth loop $\gamma$ in $U$.

So, let $\gamma$ be such a loop, let $\ell$ denote its length, and let $n$ be a fixed index. Each time $\gamma$ meets some star $\St(w_n)$, entering it at a point $x$ and
leaving it at $x'$, replace $\gamma\cap\St(w_n)$ by the path in $\partial\St(w_n)$ realizing the minimal distance in $\partial\St(w_n)$ between $x$ and $x'$. This yields
a new loop $\gamma_n$ contained in $\St_{\Gamma_n}$, the union of all rhombus edges adjacent to black vertices of $\Gamma_n$. By Lemma~\ref{lemma:bound},
its length satisfies
\[
\ell(\gamma_n)=\sum_{w_n\in W_n}\ell(\gamma_n\cap\St(w_n))\le M(\eta)\sum_{w_n\in W_n}|x-x'|\le M(\eta)\,\ell,
\]
where $M(\eta)$ stands for the uniform bound $\frac{2\pi}{\eta\sin(\eta/2)}$. As the diameter of a star $\St(w_n)$ is at most $4\delta_n$, the union of all these
stars meeting $\gamma$ is contained in the tubular neihborhood of $\gamma$ of diameter $8\delta_n$, which also contains the compact set $C_n$ enclosed by $\gamma$ and
$\gamma_n$. Therefore, $\mathit{Area}(C_n)\le 8\delta_n\ell$.

We shall now prove that the sequence $\int_{\gamma_n}f(z)\,dz$ converges to $\int_\gamma f(z)\,dz$. Let us first assume that $f$ is of class $C^1$. In such a case,
$\bar\partial f$ is bounded above by some constant $M$ on the compact set $C$ given by the tubular neighborhood of $\gamma$ of diameter $8\max_n \delta_n$. As $C$ contains all
the $C_n$'s, this yields a uniform bound for $\bar\partial f$ on all $C_n$'s. By Stokes formula and the inequality above,
\begin{eqnarray*}
\left|\int_\gamma f(z)\,dz-\int_{\gamma_n}f(z)\,dz\right|&=&\left|\int_{\partial C_n} f(z)\,dz\right|=\left|\iint_{C_n} \bar\partial f(z)\,dz\wedge d\bar{z}\right|\\
&\le&\iint_{C_n} |\bar\partial f(z)|\,dz\wedge d\bar{z}\le 8\,\delta_n\,\ell\,M,
\end{eqnarray*}
proving the claim in this special case. In the general case of a continuous function $f$, let $g_k$ be a sequence of $C^1$ functions on $U$ converging uniformly to $f$
on every compact. By the inequalities displayed above, we obtain that
\begin{eqnarray*}
\Big|\int_\gamma f-\int_{\gamma_n}f\,\Big|&\le&\Big|\int_\gamma f-\int_\gamma g_k\,\Big|+\Big|\int_\gamma g_k-\int_{\gamma_n}g_k\,\Big|+
\Big|\int_{\gamma_n}g_k-\int_{\gamma_n}f\,\Big|\\
&\le&\ell\,\sup_C|f-g_k|\;+\;8\,\delta_n\,\ell\,M\;+\;M(\eta)\,\ell\,\sup_C|f-g_k|
\end{eqnarray*}
is arbitrarily small, proving the claim.

Recall that $\widehat{f}_n$ converges uniformly to $f$ on the compact $C$. Therefore, for any fixed index $k$,
\[
\left|\int_{\gamma_k}\widehat{f}_n(z)\,dz-\int_{\gamma_k}f(z)\,dz\right|\le M(\eta)\,\ell\,\sup_C|\widehat f_n-f|
\]
is arbitarily small, so $\int_{\gamma_k}\widehat{f}_n$ converges to $\int_{\gamma_k}f$.

We are finally ready to show that $\int_\gamma f(z)\,dz$ is equal to zero. As $\St_{\G_n}$ induces a cellular decomposition of the simply-connected domain $U$, the cycle
$\gamma_n\subset\St_{\G_n}$ is a cellular boundary, that is, $\gamma_n=\partial\left(\sum_{w_n}\St(w_n)\right)$ for some vertices $w_n$. Since $f_n$ is discrete holomorphic,
Lemma~\ref{lemma:Morera} implies
\[
\int_{\gamma_n}\widehat{f}_n(z)\,dz=\sum_{w_n}\int_{\partial\St(w_n)}\widehat{f}_n(z)\,dz=0.
\]
By the three claims above,
\[
\int_\gamma f(z)\,dz=\lim_k\int_{\gamma_k}f(z)\,dz=\lim_k\lim_n\int_{\gamma_k}\widehat{f}_n(z)\,dz=\lim_n\int_{\gamma_n}\widehat{f}_n(z)\,dz=0.
\]
This concludes the proof of the theorem.
\end{proof}

\section{Discrete Dirac operators on Riemann surfaces}
\label{sec:Dirac}

In the previous section, we defined a discrete analog of the $\bar\partial$ operator on functions on a Riemann surface $\Sigma$.
The aim of the present section is to modify this construction, yielding an analog of the Dirac operator $D$ on spinors on $\Sigma$.
Here again, we shall start by giving in Paragraph~\ref{sub:DG'} discretizations of all the geometric objects involved in the definition of $D$ (Table~\ref{table:dic2}).
The actual definition of the discrete Dirac operator is to be found in Paragraph~\ref{sub:Dirac}, while Paragraph~\ref{sub:CLTS} deals with the application
of our convergence theorem to spinors (Theorem~\ref{thm:CLTS}).

\subsection{More discrete geometry}
\label{sub:DG'}

Let us first recall the definition of the Dirac operator on a closed Riemann surface $\Sigma$, referring to~\cite{Ati} for details.
Let $(\varphi_\alpha\colon U_\alpha\to\C)_\alpha$ be an atlas for $\Sigma$, and let
$f_{\alpha\beta}\colon\varphi_\beta(U_\alpha\cap U_\beta)\to\varphi_\alpha(U_\alpha\cap U_\beta)$ denote the corresponding transition functions.
Then, $\kappa_{\alpha\beta}\colon U_\alpha\cap U_\beta\to\C^*$ given by $\kappa_{\alpha\beta}(p)=f'_{\alpha\beta}(\varphi_\beta(p))^{-1}$ is a holomorphic function,
that is, $\kappa_{\alpha\beta}$ is an element of the \v{C}ech cochain group $C^1(\mathcal{U},\mathcal{O}^*)$, where $\mathcal{U}=(U_\alpha)$ and $\mathcal{O}^*$ denotes the sheaf
of non-vanishing holomorphic functions on $\Sigma$. By the chain rule, it is actually a cocycle, so it defines an element in $H^1(\mathcal{U},\mathcal{O}^*)$.
The corresponding holomorphic line bundle $K\in H^1(\Sigma,\mathcal{O}^*)$ is called the {\bf canonical bundle\/} over $\Sigma$. With the notations of
Section~\ref{sec:CA}, $K$ is nothing but the holomorphic cotangent bundle $T^*\Sigma^+$, while $\bar K$ coincides with $T^*\Sigma^-$. Hence, the $\bar\partial$ operator
can be seen as a map $\bar\partial\colon C^\infty(1)\to C^\infty(\bar K)$, where $1$ denotes the trivial line bundle.

The set $\S(\Sigma)$ of {\bf spin structures\/} on $\Sigma$ can be defined as the set of isomorphism classes of holomorphic line bundles that are square roots of $K$, that is,
\[
\S(\Sigma)=\{L\in H^1(\Sigma,\mathcal{O}^*)\,|\,L^2=K\}.
\]
This is easily seen to be an affine space over $H^1(\Sigma;\Z_2)$. Note that a spin structure $L$ is given by a cocycle
$(\lambda_{\alpha\beta})\in Z^1(\mathcal{U},\mathcal{O}^*)$ such that $\lambda_{\alpha\beta}^2=\kappa_{\alpha\beta}$.
Then, a {\bf spinor\/} $\psi\in C^\infty(L)$ can be described by a family of smooth functions $\psi_\alpha\in C^\infty(U_\alpha)$ such that
$\psi_\alpha(p)=\lambda_{\alpha\beta}(p)\,\psi_\beta(p)$ for $p\in U_\alpha\cap U_\beta$. Since $\lambda_{\alpha\beta}$ is holomorphic,
the assignement $(\psi_\alpha)\mapsto(\bar\partial\psi_\alpha)$ defines a map
\[
\bar\partial_L\colon C^\infty(L)\to C^\infty(L\otimes\bar K),
\]
called the {\bf twisted $\bar\partial$ operator\/}. An element $\psi\in C^\infty(L)$ is a {\bf holomorphic spinor\/} if it is in the kernel of $\bar\partial_L$.

Finally, the choice of a hermitian metric on $\Sigma$ allows to define an anti-linear isomorphism $h\colon C^\infty(L\otimes\bar K)\to C^\infty(\bar L)$.
The {\bf Dirac operator\/} is the self-adjoint operator on $C^\infty(L)\oplus C^\infty(\bar L)$ whose restriction to $C^\infty(L)$ is given by
\[
D_L=h\circ\bar\partial_L\colon C^\infty(L)\to C^\infty(\bar L).
\]
By abuse of language, we shall also call $D_L$ the Dirac operator.

\medskip

Let us now give discrete analogs of the objects described above. As explained in the previous section, an analog of a Riemann surface
is a bipartite graph $\Gamma$ embedded in a flat surface $\Sigma$ with cone type singularities supported at $S$, inducing a cell decomposition $X$
of $\Sigma$. Furthermore, in order to define $\bar{\partial}\colon \C^B\to\Omega^1(W)$, we assumed that $\Gamma$ is isoradially embedded
in $\Sigma$, with $S$ contained in $B\cup V(\Gamma^*)$.

By definition, $\Sigma_0:=\Sigma\setminus S$ is endowed with an atlas whose transition functions are Euclidean isometries. Therefore, the associated \v{C}ech cocycle
consists of $S^1$-valued constant functions. This defines an element $K_0$ of $H^1(\Sigma_0;S^1)$.
Using the long exact sequence for the pair $(\Sigma,\Sigma_0)$, one easily checks that $K_0$ is the restriction of a class $[\kappa]\in H^1(\Sigma;S^1)=H^1(X;S^1)$ if and only if
$\exp(i\theta_x)=1$ for all $x\in S$. We shall therefore assume that all cone angles $\theta_x$ are positive multiples of $2\pi$, and call such a cocycle
$\kappa\in Z^1(X;S^1)$ a {\bf discrete canonical bundle\/} over $\Sigma$. Note that the cohomology class of $\kappa$ is uniquely determined by the flat metric on 
$\Sigma$ and the cellular decomposition $X$. Furthermore, such a cocycle $\kappa$ is very easy to compute, as demonstrated by the following remarks.

\begin{remark}
If the flat surface $\Sigma$ has trivial holonomy, one can simply choose $\kappa=1$ as discrete canonical bundle.
\end{remark}

\begin{remark}
\label{rem:P}
It is always possible to represent $\Sigma$ as planar polygons $P$ with boundary identifications. Furthermore, these polygons can be chosen so that
$\G$ intersects $\partial P$ transversally, except at possible singularities in $S\cap B$. Define $\kappa$ by
\[
\kappa(e)=
\begin{cases}
1 & \text{if $e$ is contained in the interior of $P$;} \\
\exp(-i\vartheta) & \text{if $e$ meets $\partial P$ transversally,}
\end{cases}
\]
where $\vartheta$ denotes the angle between the sides of $\partial P\subset\C$ met by the edge $e$. If $S$ is contained in $V(\G^*)$, this
defines completely a natural choice of discrete canonical bundle $\kappa$. If $S\cap B$ is not empty, the partially defined $\kappa$ above can be extended to a cocycle
yielding a discrete canonical bundle.
\end{remark}

\begin{example}
\label{ex:4g}
Let $P$ be the regular $4g$ gon with boundary identification according to the word $\prod_{j=1}^g a_jb_ja_j^{-1}b_j^{-1}$.
This defines a flat metric on the genus $g$ orientable surface $\Sigma_g$ with one singularity of angle $2\pi(2g-1)$.
Given a graph $\G\subset\Sigma_g$ meeting $\partial P$ transversally, the associated canonical bundle is given by
$\kappa(e)=\exp(-i\pi\frac{g-1}{g})$ for edges of $\G$ meeting $\partial P$, and $\kappa(e)=1$ for interior edges.
\end{example}

Finally, note that the assumption that all cone angles are multiples of $2\pi$ simply means that $\Sigma$ has trivial local holonomy.
In such a case, the holonomy defines an element of $\mathrm{Hom}(\pi_1(\Sigma),S^1)=H^1(\Sigma;S^1)=H^1(X;S^1)$, and a representative of this cohomology class
is exactly the inverse of a discrete canonical bundle. Note also that this assumption rules out the 2-sphere from our setting.

Mimicking the continuous case, let us define a {\bf discrete spin structure\/} on $\Sigma$ as any cellular 1-cocycle $\lambda\in Z^1(X;S^1)$ such that $\lambda^2=\kappa$.
(See ~\cite{CR1} for another notion of discrete spin structure, valid for any cellular decomposition of an orientable surface.)
Two discrete spin structures will be called {\bf equivalent\/} if they are cohomologous. The set $\S(X)$ of equivalent classes of discrete spin structures on
$\Sigma$ is then given by
\[
\S(X)=\{[\lambda]\in H^1(X;S^1)\,|\,[\lambda]^2=[\kappa]\}.
\]
One easily checks that this set admits a freely transitive action of the abelian group $H^1(\Sigma;\{\pm 1\})$. In other words, and using additive notations,
$\S(X)$ is an affine $H^1(\Sigma;\Z_2)$-space. Therefore, there exist (non-canonical) $H^1(\Sigma;\Z_2)$-equivariant bijections $\S(X)\to\S(\Sigma)$. Furthermore:

\begin{proposition}
\label{prop:spin}
If all cone angles of $\Si$ are odd multiples of $2\pi$, then there exists a canonical $H^1(\Sigma;\Z_2)$-equivariant bijection $\S(X)\to\S(\Sigma)$.
\end{proposition}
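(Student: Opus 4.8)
The plan is to realize both $\S(X)$ and $\S(\Sigma)$ inside the single set of flat $S^1$-bundles on $\Sigma_0=\Sigma\setminus S$ that square to the flat bundle $K_0$, and to match them there. Write $\mathrm{Sq}(\Sigma_0)=\{[\mu]\in H^1(\Sigma_0;S^1)\,:\,[\mu]^2=K_0\}$. On the discrete side, restriction of cocycles from the cellular decomposition $X$ of $\Sigma$ to $\Sigma_0$ gives a map $\rho\colon\S(X)\to\mathrm{Sq}(\Sigma_0)$; it is well defined since $[\lambda]^2=[\kappa]$ restricts to $[\kappa]|_{\Sigma_0}=K_0$. On the continuous side, a holomorphic square root $L$ of $K$ restricts over $\Sigma_0$ to a flat $S^1$-bundle (its monodromy measured against the flat trivialization $du$ of $K_0$), giving $\sigma\colon\S(\Sigma)\to\mathrm{Sq}(\Sigma_0)$. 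The proposition will follow once I show that $\rho$ and $\sigma$ are injective with the \emph{same} image, for then $\sigma^{-1}\circ\rho\colon\S(X)\to\S(\Sigma)$ is the desired bijection.

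The crux -- and the place where the parity hypothesis enters -- is a local holonomy computation near a cone point $x$ of angle $\theta_x=2\pi n_x$. In a uniformizing holomorphic coordinate $w$ vanishing at $x$, the flat developing coordinate is $u=c\,w^{n_x}$, so $du=c\,n_x\,w^{n_x-1}dw$ and the flat half-form is $(du)^{1/2}=\sqrt{c\,n_x}\;w^{(n_x-1)/2}(dw)^{1/2}$. Transporting once around $x$ sends $w\mapsto e^{2\pi i}w$, hence multiplies $(du)^{1/2}$ by $(-1)^{n_x-1}$. Thus the flat bundle underlying \emph{any} spin structure $L$ has local holonomy $(-1)^{n_x-1}$ around $x$, so the image of $\sigma$ is exactly the set of flat square roots of $K_0$ with these prescribed monodromies. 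On the other hand, the image of $\rho$ is, by the long exact sequence of the pair $(\Sigma,\Sigma_0)$, precisely the flat square roots of $K_0$ that extend over the punctures, i.e.\ those with \emph{trivial} local holonomy around each $x$. These two subsets of $\mathrm{Sq}(\Sigma_0)$ coincide if and only if $(-1)^{n_x-1}=1$ for every $x$, that is, if and only if every $n_x$ is odd -- which is exactly the hypothesis that all cone angles are odd multiples of $2\pi$.

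It remains to check injectivity and equivariance. Any two elements of $\S(X)$ (resp.\ of $\S(\Sigma)$) differ by a class in $H^1(\Sigma;\Z_2)$, so both $\rho$ and $\sigma$ are injective as soon as the restriction $H^1(\Sigma;\Z_2)\to H^1(\Sigma_0;\Z_2)$ is injective; this holds because removing the finite set $S$ from the closed surface $\Sigma$ makes $H_1(\Sigma_0;\Z_2)\to H_1(\Sigma;\Z_2)$ surjective. The action of $H^1(\Sigma;\Z_2)$ on either torsor is by tensoring with a flat $\Z_2$-bundle on $\Sigma$, and restriction to $\Sigma_0$ is a group homomorphism, so $\rho$ and $\sigma$ intertwine the $H^1(\Sigma;\Z_2)$-action with its image action on $\mathrm{Sq}(\Sigma_0)$. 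Hence $\sigma^{-1}\circ\rho$ is $H^1(\Sigma;\Z_2)$-equivariant; being an equivariant map of torsors over the same group it is automatically bijective, and it is canonical because no auxiliary choices entered its construction. I expect the main technical obstacle to be making the local holonomy computation fully rigorous -- in particular pinning down the exponent $(n_x-1)/2$ and hence the sign $(-1)^{n_x-1}$ -- since everything else is formal torsor bookkeeping once that sign is in hand.
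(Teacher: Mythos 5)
Your proposal is correct in substance, but it takes a genuinely different route from the paper. The paper's proof is differential--topological and constructive: from a cocycle $\lambda$ with $\lambda^2=\kappa$ it builds an explicit vector field $V_\lambda$ on $\Sigma$ (rotate a fixed tangent vector by $2\beta_\lambda(e)$ along each edge of $\G$, then extend over the faces by the cone construction), observes that its zeroes have even index precisely because the cone angles are odd multiples of $2\pi$, and then invokes Johnson's theorem to convert $V_\lambda$ into a quadratic form $q_\lambda$ on $H_1(\Si;\Z_2)$; equivariance of $[\lambda]\mapsto q_\lambda$ is checked by a winding-number computation. Your route instead embeds both torsors into the set of flat square roots of $K_0$ over $\Sigma_0=\Sigma\setminus S$, and the parity hypothesis enters through the local monodromy $(-1)^{n_x-1}$ of $(du)^{1/2}$ at a cone point of angle $2\pi n_x$ --- the same parity phenomenon that in the paper appears as the parity of the index of the zeroes of $V_\lambda$. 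Your local computation is correct ($u=c\,w^{n_x}$ is indeed the developing coordinate, so the flat structure on any square root has holonomy $(-1)^{n_x-1}$ around $x$). What each approach buys: yours is shorter and more conceptual, and it makes transparent that the parity hypothesis is not only sufficient but necessary for the two images inside $H^1(\Sigma_0;S^1)$ to coincide. The paper's construction is heavier, but it produces the explicit formula $q_\lambda([C])=1+\frac{|C|}{2}+\frac{1}{2\pi}\bigl(\sum_{e\subset C}2\beta_\lambda(e)-\sum_{v\in C}\alpha_v(C)\bigr)$ for the quadratic form, and this is not a luxury: it is reused verbatim in the proof of Theorem~\ref{thm:Pf-Arf} to compute Arf invariants. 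Your proof establishes Proposition~\ref{prop:spin} as stated, but it would not support those later arguments without additionally identifying your bijection with a concrete quadratic form (or checking it agrees with the paper's).

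Two loose ends you should tighten. First, you assert that the image of $\sigma$ is \emph{exactly} the set of flat square roots with holonomy $(-1)^{n_x-1}$ at each $x$, but your computation only proves containment; this is harmless, because your closing torsor argument is what actually carries the load --- both images are single free orbits of $H^1(\Sigma;\Z_2)$ acting on $\mathrm{Sq}(\Sigma_0)$ (freeness uses the injectivity of $H^1(\Sigma;\Z_2)\to H^1(\Sigma_0;\Z_2)$ you proved), and an orbit contained in an orbit equals it --- but then you should not claim exactness for $\sigma$, only the containment plus the orbit argument. Second, identifying the image of $\rho$ with the classes of trivial local holonomy requires one more step: if $[\mu]\in\mathrm{Sq}(\Sigma_0)$ extends to $[\tilde\mu]\in H^1(\Sigma;S^1)$, you must know $[\tilde\mu]^2=[\kappa]$ before concluding $[\tilde\mu]\in\S(X)$. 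This follows from the injectivity of the restriction $H^1(\Sigma;S^1)\to H^1(\Sigma_0;S^1)$, equivalently from $H^1(\Sigma,\Sigma_0;S^1)=0$, which is easy from the long exact sequence of the pair but is a different portion of it than the one you cite.
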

\begin{proof}
Let $\kappa\in Z^1(X;S^1)$ be a fixed discrete canonical bundle over $\Sigma$. For each $\lambda\in Z^1(X;S^1)$ such that $\lambda^2=\kappa$, we shall now construct a vector
field $V_\lambda$ on $\Si$ with zeroes of even index. Such a vector field is well-known to define a spin structure, or equivalently -- by Johnson's theorem~\cite{Joh} --
a quadratic form $q_\lambda$ on $H_1(\Si;\Z_2)$. The proof will be completed with the verification that two equivalent $\lambda$'s induce identical quadratic forms, and that
the assignment $[\lambda]\mapsto q_\lambda$ is $H^1(\Si;\Z_2)$-equivariant.

Let $\lambda\in Z^1(X;S^1)$ be given by $\lambda(e)=\exp(i\beta_\lambda(e))$ with $0\le\beta_\lambda(e)<2\pi$, where $e$ is an edge of $X$ oriented from the white end
to the black end, and set $\beta_\lambda(-e)=-\beta_\lambda(e)$.
\parpic[r]{$\begin{array}{c}
\includegraphics[height=1.6cm]{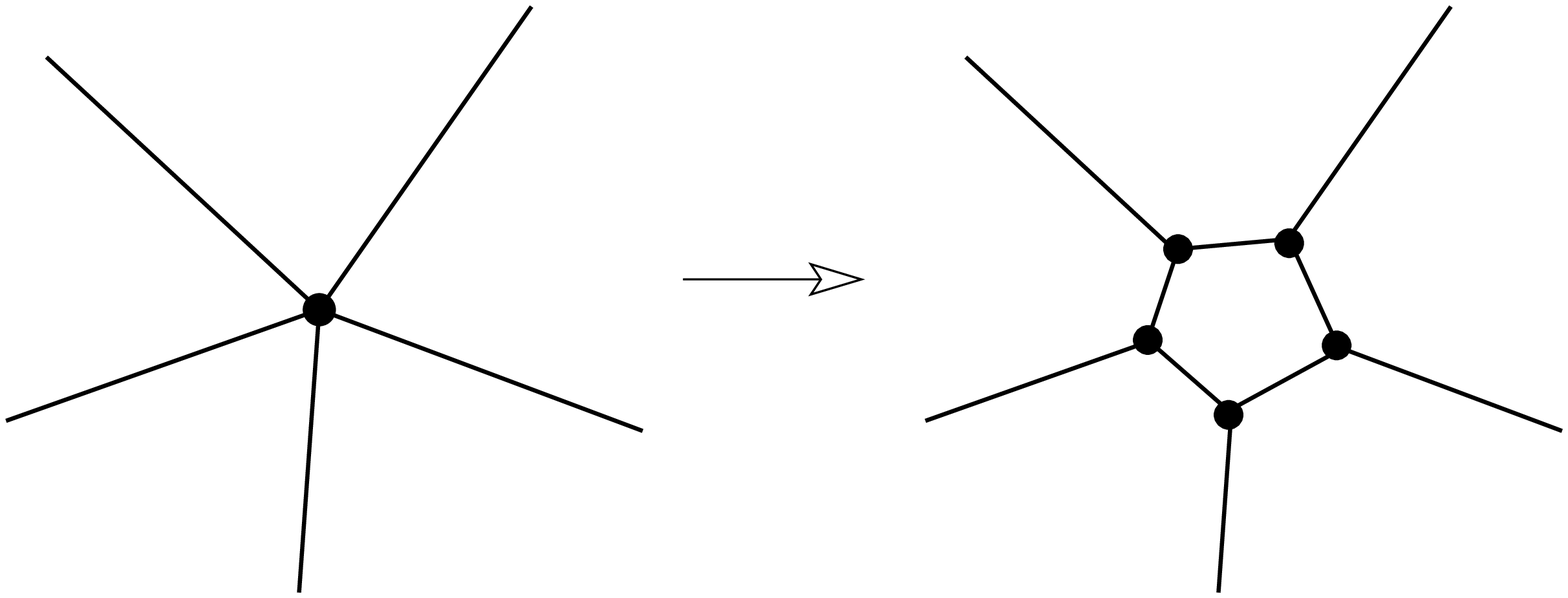}
\end{array}$}
First, replace the cellular decomposition $X$ of $\Si$ by $X'$, where each singularity $b\in B\cap S$ is
removed as illustrated opposite. Obviously, $\lambda$ induces $\lambda'\in Z^1(X';S^1)$ by setting $\lambda'(e)=1$ for each newly created edge $e$. Now, fix an arbitrary
tangent vector $V_\lambda(w)$ at some white vertex $w$, and extend it to the 1-skeleton $\Gamma'$ of $X'$ as follows: running along an edge $e$ oriented from the white end
to the black end, rotate the tangent vector by an angle of $2\beta_\lambda(e)$ in the negative direction. (On the newly created edges, just extend the vector field without
any rotation.) As $\lambda'$ is a cocycle and each cone angle is a multiple of $2\pi$, this gives a well-defined vector field along $\G'$. Extend it to the whole surface $\Si$
by the cone construction. The resulting vector field $V_\lambda$ has one zero in the center of each face of $X$, and at each $b\in B\cap S$. One easily checks that such a zero
is of even index if and only if the corresponding cone angle is an odd multiple of $2\pi$, which we assumed.

Following \cite{Joh}, the quadratic form $q_\lambda\colon H_1(\Si;\Z_2)\to\Z_2$ corresponding to $V_\lambda$ is determined as follows: for any regular oriented
simple closed curve $C\subset\Sigma\setminus S$, $q_\lambda([C])+1$ is equal to the winding number of the tangential vector field along $C$ with
respect to the vector field $V_\lambda$. For an oriented simple closed curve $C\subset\G$, we obtain the following equality modulo 2:
\begin{eqnarray*}
q_\lambda([C])&=&1+\frac{1}{2\pi}\Big(\sum_{e\subset C} 2\beta_\lambda(e)+\sum_{v\in C}(\pi-\alpha_v(C))\Big)\\
&=&1+\frac{|C|}{2}+\frac{1}{2\pi}\Big(\sum_{e\subset C} 2\beta_\lambda(e)-\sum_{v\in C}\alpha_v(C)\Big),
\end{eqnarray*}
\parpic[r]{$\begin{array}{c}
\labellist\small\hair 2.5pt
\pinlabel {$v$} at 168 18
\pinlabel {$\alpha_v(C)$} at 175 160
\pinlabel {$C$} at 420 190
\endlabellist
\includegraphics[height=1.2cm]{int}
\end{array}$}
\noindent where the first sum is over all {\em oriented\/} edges in the oriented curve $C$, and $\alpha_v(C)$ is the angle illustrated opposite. Obviously, equivalent $\lambda$'s
induce the same quadratic form $q_\lambda$. Finally, given two discrete spin structures $\lambda_1,\lambda_2$, the cohomology class of the 1-cocycle
$\lambda_1/\lambda_2\in Z^1(X;\{\pm 1\})$ is determined by its value on oriented simple closed curves in $\G$. For such a curve $C$, we have
\[
(\lambda_1/\lambda_2)(C)=\exp\Big(i\sum_{e\subset C}(\beta_{\lambda_1}(e)-\beta_{\lambda_2}(e))\Big)=\exp\big(i\pi(q_{\lambda_1}-q_{\lambda_2})([C])\big).
\]
Therefore, the assignement $[\lambda]\mapsto q_\lambda$ is $H^1(\Si;\Z_2)$-equivariant, which concludes the proof.
\end{proof}

\begin{remark}
If the flat surface $\Sigma$ has trivial holonomy, then $[\kappa]$ is trivial, so the set $\S(X)$ is equal to the $2g$-dimensional vector space $H^1(\Sigma;\Z_2)$.
\end{remark}

\begin{remark}
Let $\G\subset\Sigma$ be described via planar polygons as explained in Remark~\ref{rem:P}, and let us assume that the singular set $S$ is contained in $V(\G^*)$.
In such a case, a discrete spin structure is given by
\[
\lambda(e)=
\begin{cases}
1 & \text{if $e$ is contained in the interior of $P$;} \\
\exp(-i\vartheta/2) & \text{if $e$ meets $\partial P$,}
\end{cases}
\]
where $\exp(-i\vartheta/2)$ denotes one of the square roots of the angle between the sides of $\partial P\subset\C$ met by the edge $e$.

\end{remark}

\begin{example}
For $\Sigma$ as in Example~\ref{ex:4g} above, equivalence classes of spin structures correspond to the $2^{2g}$ choices of $2g$ square roots of 
$\exp(-i\pi\frac{g-1}{g})$, one for each pair of boundary edges of $P$.
In particular, for the flat torus, $\S(X)$ corresponds to the 4 possible choices of 2 square roots of the unity.
\end{example}

Let us now turn to spinors. Given a spin structure $L\in\S(\Si)$, the universal covering $\pi\colon\ws\to\Si$ induces the following pullback diagram:
\[
\xymatrix{
E \ar[r]^{\Pi} \ar[d]_{\pi^*p} & L\ar[d]^p  \\
\ws \ar[r]^\pi & \Sigma.
}
\]
By the lifting property of the covering map $\Pi\colon E\to L$, any spinor $\psi\in C^\infty(L)$ induces a section $\widetilde{\psi}\in C^\infty(E)$
such that $\Pi\circ\widetilde{\psi}=\psi\circ\pi$, unique up to the action of $\pi_1(\Si)$. Since $\ws$ is contractible (recall that the 2-sphere is ruled out by the assumption
on the cone angles), the line bundle $E\to\ws$ is trivial. Hence,
$\widetilde{\psi}$ is really a complex-valued function on $\ws$ satisfying some $\pi_1(\Si)$-periodicity property depending on $L$.
This alternative point of view on spinors leads to the following definition.

Let $\lambda\in Z^1(X;S^1)$ be a discrete spin structure on $\Si$, and let $\pi\colon\widetilde X\to X$ denote the cellular map given by the universal covering of $\Si$.
Note that the bipartite structure on $\G$ lifts to a bipartite structure $V(\widetilde\G)=\widetilde B\cup \widetilde W$ on $\widetilde\G=\pi^{-1}(\G)$.
The space $C(\lambda)$ of {\bf discrete spinors} is the set of all $\psi\in\C^{\widetilde B}$ such that, for any $b,b'\in\widetilde B$ with $\pi(b)=\pi(b')$,
\[
\psi(b')=\lambda(\pi(\gamma_{b,b'}))\,\psi(b),
\]
where $\gamma_{b,b'}$ denotes a path in $\widetilde\G$ from $b$ to $b'$. As $\lambda$ is a cocycle and $\widetilde\Si$ is simply-connected, this condition does not depend
on the choice of such a path. Furthermore, equivalent discrete spin structures $\lambda\sim\lambda'$ will yield the same space $C(\lambda)=C(\lambda')$.
Note that the choice of any fundamental domain $P\subset\ws$ for the action of $\pi_1(\Si)$ yields an identification $C(\lambda)\stackrel{\varphi_P}{\simeq}\C^B$.
However, this identification is not canonical, unless $\lambda$ is trivial.

Similarly, let us define the space $C(\bar\lambda)$ as the set of all $\psi\in\C^{\widetilde W}$ such that $\psi(w')=\lambda(\pi(\gamma_{w,w'}))\,\psi(w)$ whenever
$\pi(w)=\pi(w')$. Here again, a fundamental domain $P\subset\ws$ yields a non-canonical identification $C(\bar\lambda)\stackrel{\bar\varphi_P}{\simeq}\C^W$.

Finally, and for reasons that will become clear in the next paragraph, the role of the hermitian metric on $\Sigma$ will be played by a nowhere vanishing vector field
$V\in\X(W)$ along the white vertices of $\G$. Furthermore, we shall normalize this vector field so that $|V_w|=2\mathit{Area}(\St(w))$ for all $w\in W$.

Table~\ref{table:dic2} summarizes the second part of our dictionary.

\begin{table}[t]\centering
\caption{Discretization dictionary, part 2}
\label{table:dic2}

\setlength\extrarowheight{4pt}
\setlength\arrayrulewidth{.8pt}

\begin{tabulary}{\textwidth}{||C||C||}\hhline{|t:=:t:=:t|}

\em{the geometric object} & \em{the discrete analog} \\ \hhline{|:=::=:|}

the canonical bundle $K\in H^1(\Sigma,\mathcal{O}^*)$ &
the cohomology class $[\kappa]\in H^1(X;S^1)$, provided all cone angles $\{\theta_x\}_{x\in S}$ are multiples of $2\pi$ \\ \hhline{||-||-||}

the affine $H^1(\Sigma;\Z_2)$-space $\mathcal{S}(\Sigma)$ of spin structures &
the affine $H^1(\Sigma;\Z_2)$-space $\mathcal{S}(X)$ of equivalence classes of square roots of $\kappa\in Z^1(X;S^1)$ \\ \hhline{||-||-||}

the spinors $C^\infty(L)$ associated to $L\in\mathcal{S}(\Sigma)$  &
the space $C(\lambda)\subset\C^{\widetilde B}$ of discrete spinors associated to $\lambda\in\S(X)$ \\ \hhline{||-||-||}

$C^\infty(\bar L)$ & $C(\bar\lambda)\subset\C^{\widetilde W}$ \\ \hhline{||-||-||}

a hermitian metric on $\Sigma$ & a (normalized) vector field $V\in\X(W)$ \\ \hhline{||-||-||}

the twisted $\bar\partial$ operator $\bar\partial_L\colon C^\infty(L)\to C^\infty(L\otimes\bar K)$ & 
$\bar\partial_\lambda\colon C(\lambda)\to\Omega^1(\widetilde W)$ given by the restriction of $\bar\partial$ to $C(\lambda)\subset\C^{\widetilde B}$ \\ \hhline{||-||-||}

the Dirac operator $D_L\colon C^\infty(L)\to C^\infty(\bar L)$ & 
$D_\lambda\colon \C^B\simeq C(\lambda)\to C(\bar\lambda)\simeq\C^W$ given by $\bar\partial_\lambda$ evaluated along the vector field $V$ \\ \hhline{|b:=:b:=:b|}

\end{tabulary}
\end{table}

\subsection{The discrete Dirac operators}
\label{sub:Dirac}

As above, let $\G$ be a bipartite graph isoradially embedded in a flat surface $\Sigma$ with cone type singularities $S\subset B\cup V(\G^*)$, and let us assume
that all cone angles are multiples of $2\pi$. Note that all these structures lift to the universal cover $\pi\colon\ws\to\Si$. Indeed, this map defines a bipartite graph
$\widetilde \G$ isoradially embedded in the flat surface $\ws$ with cone type singularities $\widetilde S\subset\widetilde B\cup V(\widetilde\G^*)$.
Let us define the {\bf discrete twisted $\bar\partial$ operator\/} associated to $\lambda\in\S(X)$ is the $\C$-linear map
\[
\bar\partial_\lambda\colon C(\lambda)\to\Omega^1(\widetilde W)=\prod_{w\in \widetilde W}\X(w)^*
\]
defined by the restriction of the discrete $\bar\partial$ operator $\bar\partial\colon\C^{\widetilde B}\to\Omega^1(\widetilde W)$ to $C(\lambda)\subset\C^{\widetilde B}$.

We need a map $\Omega^1(\widetilde W)\to\C^{\widetilde W}$ discretizing the anti-linear isomorphism $C^\infty(L\otimes\bar K)\to C^\infty(\bar L)$
induced by a hermitian metric. A discrete hermitian metric, that is, a normalized vector field $V\in\X(W)$ induces a very natural such map, namely the evaluation at
$\widetilde V\in\X(\widetilde W)$, the lift of $V$ to $\widetilde W$. Putting all the pieces together yields the map $D'_\lambda\colon C(\lambda)\to\C^{\widetilde W}$ given by
\[
(D'_\lambda\psi)(\tilde w)=\sum_{\tilde b\sim \tilde w}\nu(\tilde w,\tilde b)e^{i\vartheta_{\widetilde V}(\tilde w,\tilde b)}\,\psi(\tilde b),
\]
with the notations of Section~\ref{sub:DDol}. One easily checks that the image of $D'_\lambda$ is contained in $C(\bar\lambda)$, and that equivalent discrete spin
structures $\lambda\sim\lambda'$ induce identical maps $D'_\lambda=D'_{\lambda'}\colon C(\lambda)\to C(\bar\lambda)$. Finally, the following lemma provides us with a less
cumbersome definition of this operator.

\begin{lemma}
Pick a simply-connected fundamental domain $P\subset\ws$ for the action of $\pi_1(\Si)$, and let $D_{\lambda}\colon\C^B\to\C^W$ be the composition
$\bar\varphi_P\circ D'_\lambda\circ\varphi_P^{-1}$. Then, for a well-chosen representative of $[\lambda]\in\S(X)$,
\[
(D_{\lambda}\psi)(w)=\sum_{b\sim w}\lambda(w,b)\nu(w,b)e^{i\vartheta_V(w,b)}\,\psi(b)
\]
for $\psi\in\C^B$ and $w\in W$.
\end{lemma}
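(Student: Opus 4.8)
The plan is to unwind $D_\lambda=\bar\varphi_P\circ D'_\lambda\circ\varphi_P^{-1}$ on the universal cover, read off the resulting edge weights, and then recognize these weights as a cocycle cohomologous to $\lambda$. Recall that $\varphi_P\colon C(\lambda)\xrightarrow{\sim}\C^B$ and $\bar\varphi_P\colon C(\bar\lambda)\xrightarrow{\sim}\C^W$ are both restriction to the lifts lying in $P$: writing $\tilde v_P\in P$ for the unique lift of a vertex $v$, one has $(\varphi_P\psi)(b)=\psi(\tilde b_P)$ and $(\bar\varphi_P\Phi)(w)=\Phi(\tilde w_P)$, and these maps, the spaces $C(\lambda),C(\bar\lambda)$, and the operator $D'_\lambda$ all depend only on $[\lambda]$.

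First I would fix $f\in\C^B$, set $\psi=\varphi_P^{-1}f\in C(\lambda)$ (so $\psi(\tilde b_P)=f(b)$, extended by the equivariance defining $C(\lambda)$), and use that $D'_\lambda\psi\in C(\bar\lambda)$ to write $(D_\lambda f)(w)=(D'_\lambda\psi)(\tilde w_P)$. Since $\pi$ restricts to an isometry on the star of $\tilde w_P$, the neighbors of $\tilde w_P$ in $\widetilde\G$ are exactly the lifts $\tilde b^{(w)}$ of the neighbors $b$ of $w$, with $\nu(\tilde w_P,\tilde b^{(w)})=\nu(w,b)$ and $\vartheta_{\widetilde V}(\tilde w_P,\tilde b^{(w)})=\vartheta_V(w,b)$ (recall that $\widetilde V$ is the lift of $V$). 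Hence
\[
(D_\lambda f)(w)=\sum_{b\sim w}\nu(w,b)e^{i\vartheta_V(w,b)}\,\psi(\tilde b^{(w)}).
\]
By the equivariance of $\psi$, $\psi(\tilde b^{(w)})=\lambda'(w,b)\,f(b)$, where $\lambda'(w,b):=\lambda(\pi(\gamma_{\tilde b_P,\tilde b^{(w)}}))$ is the $\lambda$-holonomy of the loop obtained by projecting a path in $\widetilde\G$ from $\tilde b_P$ to the neighbor $\tilde b^{(w)}$ of $\tilde w_P$. This already produces the announced formula, with $\lambda$ replaced by the edge weights $\lambda'$.

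It then remains to show that $\lambda'\in Z^1(X;S^1)$ is a cocycle cohomologous to $\lambda$; granting this, $\lambda'$ is the sought representative of the class $[\lambda]$, and since $D'_\lambda$ depends only on the class, the displayed formula is exactly the one claimed. To identify $\lambda'$, I would use that $\ws$ is simply connected, so $H^1(\widetilde X;S^1)=0$ and the pullback cocycle $\pi^*\lambda$ is a coboundary: $\pi^*\lambda=\delta\tilde u$ for some $\tilde u\colon V(\widetilde\G)\to S^1$. Telescoping $\delta\tilde u$ along paths shows on one hand that $\lambda(w,b)=(\pi^*\lambda)(\tilde w_P,\tilde b^{(w)})=\tilde u(\tilde b^{(w)})/\tilde u(\tilde w_P)$, and on the other hand that $\lambda'(w,b)=\tilde u(\tilde b^{(w)})/\tilde u(\tilde b_P)$. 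Setting $c(v):=\tilde u(\tilde v_P)\in C^0(X;S^1)$, this gives
\[
\frac{\lambda'(w,b)}{\lambda(w,b)}=\frac{\tilde u(\tilde w_P)}{\tilde u(\tilde b_P)}=\frac{c(w)}{c(b)},
\]
so $\lambda'$ differs from $\lambda$ by the coboundary $\delta(c^{-1})$; in particular $\lambda'$ is a cocycle with $[\lambda']=[\lambda]$.

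The only genuinely delicate point is the bookkeeping on the cover, namely matching the lift $\tilde b^{(w)}$ adjacent to $\tilde w_P$ against the reference lift $\tilde b_P\in P$; the identity $\pi^*\lambda=\delta\tilde u$ makes this transparent and avoids any explicit tracking of deck transformations. For cleanliness I would also remark that one may instead verify $\langle\lambda',C\rangle=\langle\lambda,C\rangle$ directly on every $1$-cycle $C\subset\G$ (both equal the holonomy of the flat line bundle defined by $\lambda$ around $C$), whence $[\lambda']=[\lambda]$ because $H^1(X;S^1)=\mathrm{Hom}(H_1(X;\Z),S^1)$; this is an equivalent but slightly longer route.
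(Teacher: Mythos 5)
Your proof is correct, and its first half---unwinding $\bar\varphi_P\circ D'_\lambda\circ\varphi_P^{-1}$ on the star of the lift $\tilde w_P$ and invoking the equivariance defining $C(\lambda)$ to produce the holonomy weights $\lambda'(w,b)=\lambda(\pi(\gamma_{\tilde b_P,\tilde b^{(w)}}))$---is exactly the paper's computation. Where you diverge is in how the phrase ``well-chosen representative'' is discharged. The paper chooses the representative \emph{first}: since $P$ is simply connected, one may gauge $\lambda$ so that $\lambda(e)=1$ on every edge contained in the interior of $\pi(P)$, whereupon the holonomy factor collapses to the single edge value $\lambda(w,b)$ (take the path from $\tilde b_P$ to $\tilde w_P$ inside $P$ followed by the edge to $\tilde b^{(w)}$). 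You instead keep $\lambda$ arbitrary and prove \emph{a posteriori} that the holonomy weights $\lambda'$ themselves constitute a representative of $[\lambda]$, by writing $\pi^*\lambda=\delta\tilde u$ on the simply connected cover and telescoping, which yields $\lambda'/\lambda=\delta(c^{-1})$ with $c(v)=\tilde u(\tilde v_P)$. The two finishes are dual uses of simple-connectivity (of $P$ in the paper, of $\ws$ in yours); yours is slightly longer but makes completely explicit both the cocycle condition for the weights that actually appear in the formula and their cohomology class---details that the paper's one-sentence gauge-fixing step leaves to the reader to verify.
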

\begin{proof}
Fix $\psi\in\C^B$, $w\in W$, and let $\tilde w$ denote the element of $\pi^{-1}(w)$ in $P$. Then,
\begin{eqnarray*}
(D_{\lambda}\psi)(w)&=& D'_\lambda(\varphi_P^{-1}(\psi))(\tilde w)\cr
&=&\sum_{\tilde b\sim \tilde w}\nu(\tilde w,\tilde b)e^{i\vartheta_{\widetilde V}(\tilde w,\tilde b)}\,\varphi_P^{-1}(\psi)(\tilde b)\cr
&=&\sum_{\tilde b\sim \tilde w}\nu(\tilde w,\tilde b)e^{i\vartheta_{\widetilde V}(\tilde w,\tilde b)}\,\lambda(\pi(\gamma_{\tilde b',\tilde b}))\psi(\pi(\tilde b)),
\end{eqnarray*}
where $\tilde b'$ denotes the element of $P$ such that $\pi(\tilde b')=\pi(\tilde b)$. As $P$ is simply-connected, there exists a representative $\lambda$
such that $\lambda(e)=1$ for any edge $e$ contained in the interior of $\pi(P)$. Setting $\pi(\tilde b)=b$, we get
\[
(D_{\lambda}f)(w)=\sum_{b\sim w}\nu(w,b)e^{i\vartheta_V(w,b)}\,\lambda(w,b)\,\psi(b),
\]
what was to be shown.
\end{proof}

This discussion motivates the following definition.

\begin{definition}\label{def:Dirac}
Let $\Gamma$ be a bipartite graph isoradially embedded in a flat surface $\Sigma$ with conical singularities $S\subset V(\Gamma^*)\cup B$, and all cone angles
multiples of $2\pi$. Given any discrete spin structure $\lambda$, the associated {\bf discrete Dirac operator\/} is the map
$D_\lambda\colon\C^B\to\C^W$ defined by
\[
(D_\lambda\psi)(w)=\sum_{b\sim w}\lambda(w,b)\nu(w,b)e^{i\vartheta_V(w,b)}\,\psi(b)
\]
for $\psi\in\C^B$ and $w\in W$.
The sum is over all vertices $b$ adjacent to $w$, $\nu(w,b)$ denote the length of the edge dual to $(w,b)$, and $\vartheta_V(w,b)$
is the angle at $w\in W$ illustrated in Figure~\ref{fig:theta}.

A discrete spinor $\psi\in\C^B$ is {\bf discrete holomorphic (with respect to $\lambda$)\/} if $D_\lambda\psi=0$.
\end{definition}

Note that $D_\lambda$ is essentially independant from the choice of the discrete hermitian metric $V$: another choice would yield the matrix $QD_\lambda$,
where $Q$ is a diagonal matrix with diagonal coefficients in $S^1$. Furthermore, if $\lambda$ and $\lambda'$ are equivalent discrete spin structures, then
there exist two such matrices $Q,Q'$ such that $D_{\lambda'}=QD_\lambda Q'$.

\begin{remark}
The map $D_\lambda\colon\C^B\to\C^W$ defined above is really the discrete analog of the restriction of the Dirac operator to $C^\infty(L)$. The full Dirac
operator on $C^\infty(L)\oplus C^\infty(\bar L)$ being self-adjoint, it would discretize to the operator on $\C^{V(\G)}=\C^B\oplus\C^W$ given by the matrix
$\begin{pmatrix}0 & D_\lambda^*\cr D_\lambda &0  \end{pmatrix}$.
\end{remark}

\begin{remark}
\label{rem:W}
We have assumed throughout the paper that no white vertex of $\G$ is a conical singularity of $\Si$. This was crucial in Section~\ref{sec:CA} in order to define the
discrete $\bar\partial$ operator. However, in the present section, we could have dropped this condition and defined $D_\lambda$ using any choice of a direction at each
$w\in W$ (for example, given by a perfect matching). All the results of the paper, apart from the ones of Section~\ref{sec:CA}, still hold in this slightly more general setting.
\end{remark}

\subsection{The convergence theorem for spinors}
\label{sub:CLTS}

Let us conclude this section with the application of the convergence theorem (Theorem~\ref{thm:CL}) to spinors.

Let $\G_n$ be a sequence of graphs embedded in a flat surface $\Si$, and let $\lambda_n\in\S(X_n)$ be discrete spin structures inducing the same spin structure $L\in\S(\Si)$.
(Recall that by Proposition~\ref{prop:spin}, there is a canonical equivariant bijection $\S(X)\to\S(\Si)$ provided all cone angles are odd multiples of $2\pi$.)
We shall say that a sequence $\psi_n\in C(\lambda_n)\subset\C^{\widetilde B_n}$ of discrete spinors {\bf converges} to a section $\psi$ of the line bundle $L\to\Si$ if,
for some lift $\widetilde\psi\colon\ws\to\C$ of $\psi$, the following holds: for any sequence $\tilde x_n\in\widetilde B_n$ converging to $\tilde x\in\ws$,
$\psi_n(\tilde x_n)$ converges to $\widetilde\psi(\tilde x)$.

\begin{theorem}
\label{thm:CLTS}
Let $\Sigma$ be a flat surface with conical singularities supported at $S$ whose angles are odd multiples of $2\pi$.
Consider a sequence $\Gamma_n$ of bipartite graphs isoradially embedded in $\Sigma$ with $S\subset V(\Gamma_n^*)\cup B_n$,
inducing cellular decompositions $X_n$ of $\Sigma$. Assume that the radii $\delta_n$ of $\G_n$ converge to $0$, and that there is some $\eta>0$ such that
all rhombi angles of all these $\Gamma_n$'s belong to $[\eta,\pi-\eta]$. Finally, pick a sequence of discrete spin structures
$\lambda_n\in Z^1(X_n;S^1)$ inducing the same class in $H^1(\Sigma;S^1)$, and let $L\in\S(\Sigma)$
denote the corresponding spin structure on $\Sigma$.

Let $\psi_n\in C(\lambda_n)$ be a sequence of discrete spinors converging to a section $\psi$ of the line bundle $L\to\Si$.
If for each $n$, $\psi_n$ is discrete holomorphic with respect to $\lambda_n$, then $\psi$ is a holomorphic spinor.
\end{theorem}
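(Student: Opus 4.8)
The plan is to reduce Theorem~\ref{thm:CLTS} to the already-established convergence theorem for functions (Theorem~\ref{thm:CL}) by passing to the universal cover $\pi\colon\ws\to\Si$. Everything in sight—the graph $\widetilde\G_n$, the flat metric, the singular set $\widetilde S$, the bipartite structure $\widetilde V(\G_n)=\widetilde B_n\sqcup\widetilde W_n$—lifts, and crucially the spinor $\psi_n\in C(\lambda_n)\subset\C^{\widetilde B_n}$ is \emph{by definition} just a complex-valued function on $\widetilde B_n$ satisfying a $\pi_1(\Si)$-equivariance condition. So I would first observe that the discrete holomorphicity $D_{\lambda_n}\psi_n=0$, when read on the cover via the operator $D'_{\lambda_n}$, says exactly that $\psi_n$ is in the kernel of the lifted discrete $\bar\partial$ operator $\bar\partial\colon\C^{\widetilde B_n}\to\Omega^1(\widetilde W_n)$. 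The key point is that the twisting by $\lambda_n$ disappears upstairs: a discrete spinor is literally a discrete holomorphic function on $\widetilde\G_n$ in the sense of Definition~\ref{def:dbar}, because the $\lambda$-factors are absorbed into the equivariance constraint rather than appearing in the operator.

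Next I would set up the convergence hypothesis on the cover. By assumption $\psi_n$ converges to $\psi$, meaning there is a lift $\widetilde\psi\colon\ws\to\C$ such that for any sequence $\tilde x_n\in\widetilde B_n$ converging to $\tilde x\in\ws$, the values $\psi_n(\tilde x_n)$ converge to $\widetilde\psi(\tilde x)$. This is precisely the hypothesis of Theorem~\ref{thm:CL}, applied now to the flat surface $\ws$ in place of $\Si$. I would check that $\ws$ satisfies the standing hypotheses of that theorem: it is a flat surface with conical singularities $\widetilde S$ (the lifts of $S$), the lifted graphs $\widetilde\G_n$ are isoradially embedded with $\widetilde S\subset V(\widetilde\G_n^*)\cup\widetilde B_n$, the radii are still $\delta_n\to 0$, and the rhombi angles are unchanged, hence still in $[\eta,\pi-\eta]$. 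Since $\widetilde\psi$ is an honest function and each $\psi_n$ is discrete holomorphic on $\widetilde\G_n$, Theorem~\ref{thm:CL} immediately yields that $\widetilde\psi$ is holomorphic on $\ws$.

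Finally I would translate holomorphicity of $\widetilde\psi$ back downstairs. A spinor $\psi\in C^\infty(L)$ is holomorphic (i.e.\ $\bar\partial_L\psi=0$) precisely when its local representatives $\psi_\alpha$ are holomorphic, which in the universal-cover description means exactly that the lift $\widetilde\psi$ is a holomorphic function on $\ws$ satisfying the $L$-periodicity—the transition functions $\lambda_{\alpha\beta}$ are holomorphic, so the equivariance is compatible with holomorphicity. Thus $\widetilde\psi$ holomorphic is equivalent to $\psi$ being a holomorphic spinor, and the theorem follows.

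The main obstacle I anticipate is purely technical: verifying that the notion of convergence of spinors (defined via \emph{some} lift $\widetilde\psi$) interacts cleanly with the covering and produces a genuine instance of the function-convergence hypothesis of Theorem~\ref{thm:CL}, uniformly enough that no compactness is lost when passing to the non-compact cover $\ws$. One must ensure the convergence statement for $\widetilde\psi$ can be tested against arbitrary convergent sequences $\tilde x_n\in\widetilde B_n$ in $\ws$, not merely those lying over a fixed fundamental domain; this is where the $\pi_1(\Si)$-equivariance of $\psi_n$ together with the compatibility of the $\lambda_n$'s (all inducing the same class $L$) must be invoked to guarantee the limiting function is globally well defined and equivariant. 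Since the argument in Theorem~\ref{thm:CL} is local on $\ws\setminus\widetilde S$—it reduces to Morera's theorem on simply-connected Euclidean charts—the non-compactness of $\ws$ poses no real difficulty, and the remaining verification is routine.
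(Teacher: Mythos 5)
Your proposal is correct and is essentially the paper's own proof: the paper likewise observes that the discrete spinors $\psi_n\in C(\lambda_n)\subset\C^{\widetilde B_n}$ are, on the universal cover $\ws$, ordinary discrete holomorphic functions converging to the lift $\widetilde\psi$ in the sense of Theorem~\ref{thm:CL}, applies that theorem (whose statement requires no compactness) to conclude that $\widetilde\psi$ is holomorphic, and deduces that $\psi$ is a holomorphic spinor. The additional verifications you spell out -- the lifted graphs satisfying the isoradiality and angle hypotheses on $\ws$, the $\lambda$-twisting being absorbed into the equivariance constraint so that $D_{\lambda_n}\psi_n=0$ becomes plain discrete holomorphicity upstairs, and compatibility of the convergence notion -- are exactly the routine checks the paper leaves implicit.
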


\begin{proof}
By assumption, $\psi_n\in\C^{\widetilde B_n}$ are discrete holomorphic functions on $\ws$ converging to $\widetilde\psi\colon\ws\to\C$ in the sense of Theorem~\ref{thm:CL}.
By this result, $\widetilde\psi$ is a holomorphic function. Therefore, $\psi\in C^\infty(L)$ is a holomorphic spinor.
\end{proof}

\section{Relation to Kasteleyn matrices and the dimer model}
\label{sec:Kast}

Recall that a {\bf dimer covering\/}, or {\bf perfect matching\/} on a finite connected graph $\Gamma$ is a family $M$ of edges 
of $\Gamma$, called {\bf dimers\/}, such that each vertex of $\Gamma$ is adjacent to exactly one dimer.
Any edge weight system $\nu\colon E(\G)\to[0,\infty)$ induces a probability measure $\mu$ on the set $\M(\G)$
of dimer coverings of $\Gamma$. It is given by
\[
\mu(M)=\frac{\nu(M)}{Z(\G,\nu)},
\]
where $\nu(M)=\prod_{e\in M}\nu(e)$ and
\[
Z(\G,\nu)=\sum_{M\in\M(\G)}\nu(M)
\]
is the associated {\bf partition function\/}. The study of this measure is called the {\bf dimer model\/} on $\G$.

The aim of this section is to relate the discrete Dirac operators introduced above to some matrices, called Kasteleyn matrices, which provide a standard tool for the
dimer model on a graph.

\subsection{Kasteleyn flatness}
\label{sub:flat}

Let $\G$ be a finite bipartite graph. Fix a field $\mathbb{F}$ containing $\R$ as a subfield,
and let $G$ be a multiplicative subgroup of $\mathbb{F}^*$ containing $\{\pm 1\}$. (The examples to keep in mind
are $G=\{\pm 1\}\subset\R^*$ and $G=S^1\subset\C^*$.)
Since each edge of $\G$ is endowed with a natural orientation (say, from the white vertex
to the black one), a map $\omega\colon E(\G)\to G$ can be viewed as a cellular 1-cochain $\omega\in C^1(\G;G)$, where
$\omega\left(\begin{array}{c}\includegraphics[width=0.5in]{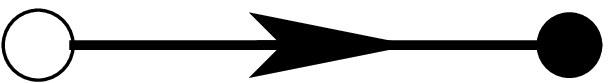}\end{array}\right)=\omega(e)$ and
$\omega\left(\begin{array}{c}\includegraphics[width=0.5in]{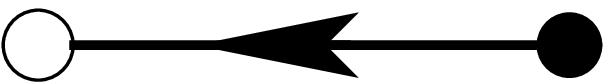}\end{array}\right)=\omega(e)^{-1}$.

Let us order the set $B$ of black vertices of $\G$, as well as the white vertices $W$, and fix a cochain 
$\omega\in C^1(\G;G)$. Let $K^\omega=K^\omega(\G,\nu)$ denote the associated weighted bipartite
adjacency matrix: This is the $(|W|\times |B|)$-matrix with coefficients in $\mathbb{F}$ defined by
\[
(K^\omega)_{w,b}=\sum_e\nu(e)\omega(e),
\]
the sum being on all edges $e$ of $\G$ joining $w\in W$ and $b\in B$.

The goal is now to find cochains $\omega$ so that $\det(K^\omega(\G,\nu))$ can be used to compute $Z(\G,\nu)$.
Embed $\G$ in an oriented closed surface $\Sigma$ so that $\Sigma\setminus\G$ consists of open 2-discs (this is always possible), and let $X$ denote the 
induced cellular decomposition of $\Sigma$.  The {\bf Kasteleyn curvature\/} of $\omega\in C^1(\G;G)$ at a face $f$ of $X$ is the element of $G$ defined by
\[
c_\omega(f):=(-1)^{\frac{\left|\partial f\right|}{2}+1}\omega(\partial f),
\]
where $\partial f$ denotes the oriented boundary of the oriented face $f$, and $\left|\partial f\right|$ the number
of edges in $\partial f$. This defines a curvature 2-cochain $c_\omega\in C^2(X;G)$.
A 1-cochain $\omega$ is said to be {\bf Kasteleyn flat\/} (or simply {\bf flat\/}) if $c_\omega$ is equal to $1$.
Finally, we shall say that two cochains $\omega,\omega'\in C^1(\G;G)$ are {\bf gauge equivalent\/} (or simply
{\bf equivalent\/}) if they are cohomologous, that is, if they can be
related by iterations of the following transformation: pick a vertex of $\G$ and multiply all adjacent edge weights
by some $g\in G$. Note that equivalent cochains $\omega,\omega'$ have the same curvature, and that the determinant
of the associated matrices $K^\omega$ and $K^{\omega'}$ differ by multiplication by an element of $G$.

\begin{example}\label{ex:Kast}
If $G$ is the multiplicative group $\{\pm 1\}$, then elements of $C^1(\G;G)$ are nothing but orientations of the edges
of $\G$: an edge $e$ is oriented from the white vertex to the black one if and only if $\omega(e)=+1$. Furthermore,
$\omega$ is flat if and only if the corresponding orientation satisfies the following condition: for each face $f$,
the number of boundary edges oriented from black to white has the parity of $\frac{\left|\partial f\right|}{2}+1$.
This is usually called a {\bf Kasteleyn orientation\/}, and the associated matrix $K^\omega$ is called a
{\bf Kasteleyn matrix\/}. By abuse of language, we shall say that two Kasteleyn matrices are equivalent if the corresponding
Kasteleyn orientations are.
\end{example}

\begin{proposition}\label{prop:torsor}
There exists a flat $G$-valued 1-cochain on a bipartite graph $\Gamma\subset\Sigma$ if and only if
$\Gamma$ has an even number of vertices. In this case, the set of equivalence classes of such 1-cochains
is an $H^1(\Sigma;G)$-torsor, that is: it admits a freely transitive action of the abelian group $H^1(\Sigma;G)$.
\end{proposition}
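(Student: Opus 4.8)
The plan is to recast the flatness condition as a single inhomogeneous linear equation over the coefficient group $G$, and then to read off both assertions from the cellular cohomology of $\Sigma$. Writing the cellular coboundary $\delta\colon C^1(X;G)\to C^2(X;G)$ multiplicatively, so that $(\delta\omega)(f)=\omega(\partial f)$, I would introduce the fixed sign $2$-cochain $\epsilon\in C^2(X;\{\pm1\})\subseteq C^2(X;G)$ defined by $\epsilon(f)=(-1)^{|\partial f|/2+1}$. Because $\Gamma$ is bipartite, every face boundary is an even closed walk, so $|\partial f|$ is even and $\epsilon$ is well defined. With this notation the Kasteleyn curvature factors as $c_\omega=\epsilon\cdot\delta\omega$, and since $\epsilon^2=1$, the cochain $\omega$ is flat precisely when $\delta\omega=\epsilon$. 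Thus the flat cochains are exactly the solutions of one inhomogeneous equation, and they form either the empty set or a coset of $Z^1(X;G)=\ker\delta$.

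For existence I would observe that $\delta\omega=\epsilon$ is solvable if and only if $\epsilon$ lies in the image $B^2(X;G)$ of $\delta$, i.e. if and only if the class $[\epsilon]$ vanishes in $H^2(X;G)$. Since $\Sigma$ is closed and oriented with $H_1(\Sigma;\Z)$ free, one has $H^2(X;G)\cong G$ via the isomorphism sending a $2$-cochain $c$ to the total product $\prod_f c(f)$ over all compatibly oriented faces; this descends to cohomology because $\sum_f\partial f=0$ in $C_1$, so every coboundary has trivial total product. It then remains to compute this total curvature. Using $\sum_f|\partial f|=2E$ (each edge bounds two faces, counted with multiplicity) together with $V-E+F=\chi(\Sigma)\equiv 0\pmod 2$, one finds
\[
\prod_f\epsilon(f)=(-1)^{\frac12\sum_f|\partial f|+F}=(-1)^{E+F}=(-1)^V.
\]
Hence $[\epsilon]=0$ in $H^2(X;G)\cong G$ if and only if $V=|V(\Gamma)|$ is even, which is the first assertion. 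This total-curvature computation, together with the identification of $H^2(X;G)$ with $G$, is the only genuinely substantive step; I expect it to be the main obstacle, since everything else is formal.

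For the torsor statement I would note that gauge equivalence is by definition multiplication by a coboundary $\delta^0\phi\in B^1(X;G)$, where $\phi\in C^0(X;G)=G^{V(\Gamma)}$ records the chosen vertex factors. Assuming $V$ even, fix one flat cochain $\omega_0$; by the previous paragraph every flat cochain has the form $\omega_0\cdot z$ with $z\in Z^1(X;G)$, so the set of flat cochains is the coset $\omega_0\cdot Z^1(X;G)$. Passing to equivalence classes modulo $B^1(X;G)\subseteq Z^1(X;G)$ yields a coset of the abelian group $Z^1(X;G)/B^1(X;G)=H^1(X;G)=H^1(\Sigma;G)$ inside $C^1(X;G)/B^1(X;G)$. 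Since multiplication turns any coset of an abelian group into a free transitive set under that group, the equivalence classes of flat cochains form an $H^1(\Sigma;G)$-torsor, as claimed.
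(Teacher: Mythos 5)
Your proof is correct and follows essentially the same route as the paper's: both rest on the total-curvature computation $(-1)^{E+F}=(-1)^{V}$, on the fact that a $2$-cochain on a closed oriented surface is a coboundary precisely when its total product is trivial, and on identifying the equivalence classes of flat cochains with a coset of $H^1(\Sigma;G)$ acting by multiplication. Your reformulation of flatness as the inhomogeneous equation $\delta\omega=\epsilon$ is simply a cleaner packaging of the correction step the paper carries out by hand.
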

\begin{proof}
Let $V$ (resp. $E$, $F$) denote the number of vertices (resp. edges, faces) of $X$. Given any $\omega\in C^1(\G;G)$,
we have
\[
\prod_{f\subset X}c_\omega(f)=(-1)^{\sum_{f\subset X}\left(\frac{\left|\partial f\right|}{2}+1\right)}=(-1)^{E+F}=(-1)^V,
\]
since the Euler characteristic $\chi(\Sigma)=V-E+F$ is even. Therefore, if $\omega$ is flat, then $V$ is even.
Conversely, if $V$ is even, then $\prod_{f\subset X}c_\omega(f)=1$. This implies that
$c_\omega$ is a coboundary, that is, there exists a $\phi\in C^1(X;G)$ such that $c_\omega=\delta\phi^{-1}$.
Consider now the 1-cochain $\phi\omega$ defined by $(\phi\omega)(e)=\phi(e)\omega(e)$.
Given any face $f$ of $X$, we have the following equality in $G$:
\[
(\delta\phi)(f)=\phi(\partial f)=c_{\phi\omega}(f)c_\omega(f)^{-1}.
\]
Since $c_\omega=\delta\phi^{-1}$, it follows that $c_{\phi\omega}=1$, that is, $\phi\omega$ is flat.

Let us now prove the second statement, assuming that there exists a flat cochain.
Define the action of an element $[\phi]\in H^1(\Sigma;G)=H^1(X;G)$ on $[\omega]$ by $[\phi]\cdot[\omega]=[\phi\omega]$.
Since $\phi$ is a cocycle, the equation displayed above shows that $\phi\omega$ is flat if and only if
$\omega$ is. Note also that $\phi\omega$ is gauge equivalent to $\omega$ if and only if $\phi$ is a coboundary.
Therefore, this action of $H^1(\Sigma;G)$ on the set of equivalence classes is well-defined, and free. Finally, given two
flat systems $\omega$ and $\omega'$, let $\phi$ denote the 1-cochain defined by $\phi(e)=\omega'(e)\omega(e)^{-1}$. Obviously, $\omega'=\phi\omega$, and $\phi$ is a cocycle by the identity displayed above. Therefore, the action
is freely transitive.
\end{proof}

\subsection{Computing the dimer partition function}
\label{sub:partition}

The point of introducing flat cochains is that they can be used to compute the partition function $Z(\Gamma,\nu)$ of
the dimer model, as follows. Note that $Z(\Gamma,\nu)$ is zero unless $\G$ has the same number of white and black vertices, which we shall assume throughout this section.

Let $\mathcal{B}=\{\alpha_j\}$ be a set of simple closed curves on $\Sigma$, transverse to $\Gamma$, whose classes form a
basis of $H_1(\Sigma;\Z)$. For each $\alpha_j\in\mathcal{B}$, let $C_j$ denote the oriented 1-cycle in $\Gamma$
having $\alpha_j$ to its immediate left, and meeting every vertex of $\G$ adjacent to $\alpha_j$ on this side.
Let $\tau$ denote the flat cochain (unique up to equivalence, by Proposition~\ref{prop:torsor}) such that $\tau(C_j)=(-1)^{|C_j|/2+1}$
for all $j$. Let $\omega\in C^1(\Gamma;G)$ be any flat cochain, and let $\varphi$ be the unique element in
$H^1(\Sigma;G)$ such that $\varphi\cdot[\tau]=[\omega]$. Finally, for any
$\eps=(\eps_1,\dots,\eps_{2g})\in\Z_2^{2g}$, let $\omega_\eps$ denote the flat cochain obtained from $\omega$
as follows: multiply $\omega(e)$ by $-1$ each time the edge $e$ meets $\alpha_j$ with $\eps_j=1$.

\begin{theorem}\label{thm:Pf-K}
For any $\alpha\in H_1(\Sigma;\Z)$, let $Z^\mathcal{B}_\alpha(\Gamma,\nu)$ denote the partial partition
function defined by
\[
Z^\mathcal{B}_\alpha(\Gamma,\nu)\;=
\sum_{\genfrac{}{}{0pt}{}{M\in\M(\G)}{\alpha_i\cdot M=\alpha_i\cdot\alpha\;\forall i}}\nu(M).
\]
Then, the following equality holds in $\mathbb{F}$ up to multiplication by an element of $G$:
\[
\sum_{\alpha\in H_1(\Sigma;\Z)}\varphi(\alpha)Z^\mathcal{B}_\alpha(\Gamma,\nu)=
\frac{1}{2^g}\sum_{\eps\in\Z_2^{2g}}(-1)^{\sum_{i<j}\eps_i\eps_j\alpha_i\cdot\alpha_j}\det(K^{\omega_\eps}).
\]
\end{theorem}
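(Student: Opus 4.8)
The plan is to expand every determinant $\det(K^{\omega_\eps})$ over perfect matchings and then carry out the sum over $\eps$ as a Gauss sum. Assume $|B|=|W|=n$, as otherwise both sides vanish. Since $\G$ is bipartite, the nonzero terms in the Leibniz expansion of $\det(K^{\omega_\eps})$ are indexed by the bijections $W\to B$ realized by edges, that is, by the perfect matchings, so that
\[
\det(K^{\omega_\eps})=\sum_{M\in\M(\G)}\mathrm{sgn}(M)\,\nu(M)\,\omega_\eps(M),
\]
where $\mathrm{sgn}(M)\in\{\pm1\}$ is the signature of the associated permutation and $\omega_\eps(M)=\prod_{e\in M}\omega_\eps(e)$. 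As $\omega_\eps$ differs from $\omega$ only by a sign on each edge crossing an $\alpha_j$ with $\eps_j=1$, and the parity of the number of such crossings of $M$ is $(\alpha_j\cdot M)\bmod 2$, one has $\omega_\eps(M)=\omega(M)(-1)^{\sum_j\eps_j(\alpha_j\cdot M)}$. Substituting and exchanging the two summations, the right-hand side becomes $\sum_M\mathrm{sgn}(M)\nu(M)\omega(M)\,G(r_M)$, where $r_M=\big((\alpha_j\cdot M)\bmod2\big)_j\in\Z_2^{2g}$ and
\[
G(r)=\frac1{2^g}\sum_{\eps\in\Z_2^{2g}}(-1)^{\sum_{i<j}\eps_i\eps_j\,\alpha_i\cdot\alpha_j+\sum_j\eps_j r_j}.
\]
The theorem thus reduces to identifying, for each $M$, the coefficient $\mathrm{sgn}(M)\omega(M)\,G(r_M)$ with $\varphi(\alpha)$, where $\alpha$ is the class selected by the constraints defining $Z^{\mathcal{B}}_\alpha$.

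I would first evaluate $G$. The map $q(\eps)=\sum_{i<j}\eps_i\eps_j\,\alpha_i\cdot\alpha_j$ is a quadratic form on $\Z_2^{2g}$ whose polarization $b(\eps,\eps')=\sum_{i,j}\eps_i\eps'_j\,\alpha_i\cdot\alpha_j$ is exactly the intersection pairing in the basis $\{\alpha_j\}$ (the diagonal dropping out since $\alpha_i\cdot\alpha_i=0$). Since this pairing is unimodular, there is a unique $X_r\in\Z_2^{2g}$ with $b(\,\cdot\,,X_r)=\langle\,\cdot\,,r\rangle$, where $\langle\eps,r\rangle=\sum_j\eps_j r_j$, namely the class determined by $\alpha_j\cdot X_r=r_j$. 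Completing the square gives $q(\eps)+\langle\eps,r\rangle=q(\eps+X_r)+q(X_r)$, whence
\[
G(r)=(-1)^{q(X_r)}\,\frac1{2^g}\sum_{\eps'\in\Z_2^{2g}}(-1)^{q(\eps')}=(-1)^{q(X_r)}(-1)^{\Arf(q)},
\]
using the classical Gauss-sum evaluation $\sum_{\eps'}(-1)^{q(\eps')}=2^g(-1)^{\Arf(q)}$ for a nondegenerate quadratic form. For a matching $M$ the class $X_{r_M}$ is precisely the mod-$2$ reduction of the homology class $\alpha$ of $M$.

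The remaining, and central, ingredient is the behaviour of $\mathrm{sgn}(M)\omega(M)$ as $M$ varies, for which I would invoke Kasteleyn flatness. Comparing two matchings $M,M'$ whose symmetric difference is a single simple closed curve $\gamma$ of length $|\gamma|$, the permutation passes by an $(|\gamma|/2)$-cycle, so $\mathrm{sgn}(M')/\mathrm{sgn}(M)=(-1)^{|\gamma|/2+1}$, while $\omega(M')/\omega(M)=\omega(\gamma)^{\pm1}$; hence the combination $(-1)^{|\gamma|/2+1}\omega(\gamma)$ governs the variation. For a contractible $\gamma$ bounding a disc assembled from faces $f$, flatness ($c_\omega(f)=1$), cancellation of the interior edges, and Euler's formula $V_{\mathrm{in}}-E_{\mathrm{in}}+F_{\mathrm{in}}=1$ yield $\omega(\gamma)=(-1)^{|\gamma|/2+1+V_{\mathrm{in}}(\gamma)}$, where $V_{\mathrm{in}}(\gamma)$ is the number of enclosed vertices; since the vertices of $\gamma$ are already matched along $\gamma$, bipartiteness forces those enclosed vertices to be matched among themselves, so $V_{\mathrm{in}}(\gamma)$ is even and contractible curves contribute trivially. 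Writing $[\omega]=\varphi\cdot[\tau]$ and using the normalization $\tau(C_j)=(-1)^{|C_j|/2+1}$, which makes $(-1)^{|C_j|/2+1}\tau(C_j)=1$, the quantity $(-1)^{|\gamma|/2+1}\omega(\gamma)$ splits as a power of $\varphi([\gamma])$ times $(-1)^{q([\gamma])}$, with $q$ the same quadratic form as above, now vanishing on each $\alpha_j$, and the linear part supplied by $\varphi\in H^1(\Si;G)$. Because the cycles appearing in a symmetric difference are pairwise disjoint, hence have zero mutual intersection, $q$ is additive along them, and one concludes that $\mathrm{sgn}(M)\omega(M)$ equals $\varphi(\alpha)\,(-1)^{q(X_{r_M})}$ up to a global element of $G$.

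Combining the two computations, the coefficient of $\nu(M)$ on the right becomes $\varphi(\alpha)\,(-1)^{q(X_{r_M})}\cdot(-1)^{q(X_{r_M})}(-1)^{\Arf(q)}=(-1)^{\Arf(q)}\varphi(\alpha)$, a fixed $\pm1\in G$ times $\varphi(\alpha)$; summing over $M$ and regrouping by the class $\alpha$ reproduces $\sum_\alpha\varphi(\alpha)Z^{\mathcal{B}}_\alpha(\G,\nu)$ up to multiplication by an element of $G$, as required. I expect the main obstacle to be the third paragraph together with the precise alignment in the fourth: proving that $(-1)^{|\gamma|/2+1}\omega(\gamma)$ genuinely descends to a quadratic refinement of the intersection form on $H_1(\Si;\Z_2)$ (the contractible case is the clean curvature computation above, whereas the non-contractible case requires controlling $\omega$ across homologous curves), and then verifying that the homology bookkeeping — the identification of the crossing vector $r_M$ with the class $\alpha$, and of the $\tau$-quadratic form with the Gauss-sum form $q$ — is consistent enough that every residual prefactor truly lies in $G$. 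The normalization $\tau(C_j)=(-1)^{|C_j|/2+1}$ is precisely what pins $q$ to vanish on the chosen basis, and this is what synchronizes the sign produced by flatness with the sign produced by the Gauss sum.
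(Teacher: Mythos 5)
Your skeleton (Leibniz expansion, $\eps$-twist bookkeeping, Gauss sum via completing the square, and the contractible-loop computation from flatness plus Euler's formula) is sound, but the third paragraph contains a genuine error, not merely an admitted difficulty. The quantity $(-1)^{|\gamma|/2+1}\tau(\gamma)$ on symmetric-difference cycles does \emph{not} descend to a quadratic refinement of the intersection form, and the normalization $\tau(C_j)=(-1)^{|C_j|/2+1}$ does \emph{not} pin such a form to vanish on the basis classes $[\alpha_j]$. Concrete counterexample: the honeycomb graph on the torus with a single hexagon (two vertices, three edges $e_1,e_2,e_3$, one face). There every cochain is flat, and if $\alpha_1,\alpha_2$ are chosen to cross one edge each, then $\tau\equiv 1$ satisfies the normalization; every symmetric difference of two matchings is a single cycle of length $2$, so your quantity equals $+1$ on all of them, yet the classes $[e_1-e_2]$, $[e_1-e_3]$, $[e_2-e_3]$ are the three distinct non-zero elements of $H_1(\Si;\Z_2)$, and no quadratic refinement $q$ of the intersection form vanishes at all three (since $q(x+y)=q(x)+q(y)+x\cdot y$). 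What is true --- this is the Kuperberg/Cimasoni--Reshetikhin correspondence \cite{Kup,CR1} --- is that the quadratic form is attached to the pair (flat cochain, \emph{reference matching} $M_0$): in your notation it is $\hat q(x)=q(x)+x\cdot X_{r_{M_0}}$, shifted by the crossing class of $M_0$. If you run your fourth paragraph with $q$ in place of $\hat q$, an uncancelled factor $(-1)^{X_{r_M}\cdot X_{r_{M_0}}}$ survives; it depends on $M$, so the coefficient of $\nu(M)$ is not a constant times $\varphi(\alpha)$ and the proof collapses. (Your displayed identity $\mathrm{sgn}(M)\,\omega(M)\doteq\varphi(\alpha)(-1)^{q(X_{r_M})}$ happens to be correct, because $\hat q(X_{r_M}+X_{r_{M_0}})=q(X_{r_M})+q(X_{r_{M_0}})$, but the argument you give does not prove it; proving it \emph{is} the hard combinatorial theorem.)

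This hard content is precisely what the paper does not reprove. Its proof of Theorem~\ref{thm:Pf-K} is a short reduction: it quotes \cite[Theorem~3.9]{Cim} for the $\{\pm1\}$-valued identity
\[
Z(\Gamma,\mathrm{w})=\frac{\pm 1}{2^g}\sum_{\eps\in\Z_2^{2g}}(-1)^{\sum_{i<j}\eps_i\eps_j\alpha_i\cdot\alpha_j}\det\bigl(K^{\tau_\eps}(\G,\mathrm{w})\bigr)
\]
valid for \emph{arbitrary} $\mathbb{F}$-valued weights $\mathrm{w}$, and then applies it to the twisted weights $\mathrm{w}=\phi\nu$, where $\phi=\omega\tau^{-1}$ is a $G$-valued cocycle representing $\varphi$: the trivial identity $K^{\tau_\eps}(\G,\phi\nu)=K^{\omega_\eps}(\G,\nu)$ together with the homological computation $Z(\G,\phi\nu)\doteq\sum_{\alpha}[\phi](\alpha)Z^{\mathcal{B}}_\alpha(\G,\nu)$ yields the statement. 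So to complete a self-contained proof along your lines you must either import that cited theorem, as the paper does, or prove the $M_0$-shifted quadratic-form statement (the content of \cite{CR1}), rather than the unshifted one you formulated.
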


Taking the flat cochain $\omega=\tau$ immediately yields:

\begin{corollary}\label{cor:Pf}
The partition function of the dimer model on $\G$ is given by
\[
Z(\Gamma,\nu)\;\dot{=}\;
\frac{1}{2^g}\sum_{\eps\in\Z_2^{2g}}(-1)^{\sum_{i<j}\eps_i\eps_j\alpha_i\cdot\alpha_j}\det(K^{\tau_\eps}).
\]
where $\dot{=}$ stands for equality in $\mathbb{F}$ up to multiplication by an element of $G$.\qed
\end {corollary}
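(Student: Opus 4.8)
The plan is to compute the right-hand side by expanding each determinant over perfect matchings and then performing a discrete Fourier (Gauss) sum over $\eps\in\Z_2^{2g}$; the left-hand side will appear as the homology-graded partition function that survives this sum. First I would expand the determinants: since $\G$ is bipartite with $|B|=|W|$, after ordering $B$ and $W$ every nonzero term of $\det(K^{\omega_\eps})$ corresponds to a perfect matching $M\in\M(\G)$, so that
\[
\det(K^{\omega_\eps})=\sum_{M\in\M(\G)}\mathrm{sgn}(M)\,\omega_\eps(M)\,\nu(M),
\]
where $\mathrm{sgn}(M)$ is the sign of the associated permutation $W\to B$ and $\omega_\eps(M)=\prod_{e\in M}\omega_\eps(e)$. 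Directly from the definition of $\omega_\eps$ one has $\omega_\eps(M)=\omega(M)\,(-1)^{\langle\eps,a(M)\rangle}$, where $a(M)=(\alpha_j\cdot M)_j\in\Z_2^{2g}$ records the crossing parities of the dimers of $M$ with the basis curves and $\langle\,,\,\rangle$ is the standard pairing.

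Next I would control $\mathrm{sgn}(M)\,\omega(M)$ across matchings. Fixing a reference matching $M_0$, the symmetric difference $M\triangle M_0$ is a disjoint union of simple cycles $\gamma_1,\dots,\gamma_c$ in $\G$, and a direct computation with the natural white-to-black orientation gives
\[
\frac{\mathrm{sgn}(M)\,\omega(M)}{\mathrm{sgn}(M_0)\,\omega(M_0)}=\prod_{i=1}^{c}(-1)^{|\gamma_i|/2+1}\,\omega(\gamma_i)^{-1}.
\]
Writing $\omega=\varphi\tau$ with $\varphi$ a $G$-valued cocycle and $\tau$ the distinguished $\{\pm1\}$-valued flat cochain, the $\varphi$-part contributes $\varphi([M]-[M_0])$ (for a suitable orientation of the $1$-cycle $M\triangle M_0$), since $\varphi$ is a homomorphism $H_1(\Sigma;\Z)\to G$. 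The $\{\pm1\}$-part $\prod_i(-1)^{|\gamma_i|/2+1}\tau(\gamma_i)$ is where the geometry enters: using flatness ($c_\omega=1$) together with an Euler-characteristic count over any disc bounded by a null-homotopic $\gamma_i$, one shows this factor depends only on the classes $[\gamma_i]$ and on the number of components, and assembles into a quadratic refinement $q_\tau$ of the mod-$2$ intersection form via the Kasteleyn--quadratic form correspondence of~\cite{CR1}. The normalization $\tau(C_j)=(-1)^{|C_j|/2+1}$ is chosen precisely so that $q_\tau$ becomes the standard quadratic form $q_0(x)=\sum_{i<j}(\alpha_i\cdot\alpha_j)\,x_ix_j$ associated to $\mathcal B$. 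Collecting, one obtains $\mathrm{sgn}(M)\,\omega(M)=g_0\,\varphi([M]-[M_0])\,(-1)^{q_\tau([M]-[M_0])}$ for a constant $g_0\in G$.

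Substituting into the right-hand side and exchanging the sums, the $\eps$-summation factors off as a quadratic Gauss sum independent of the weights:
\[
\sum_{\eps\in\Z_2^{2g}}(-1)^{\sum_{i<j}\eps_i\eps_j(\alpha_i\cdot\alpha_j)+\langle\eps,a(M)\rangle}=2^g\,(-1)^{\Arf(q_0)}\,(-1)^{\tilde q(a(M))},
\]
valid because the mod-$2$ intersection form on $H_1(\Sigma;\Z_2)$ is nondegenerate; here $\tilde q$ is the quadratic form obtained by completing the square against the symplectic form. The factor $2^g$ cancels the prefactor $1/2^g$, and it then remains to check that $q_\tau([M]-[M_0])+\tilde q(a(M))$ is constant in $M$ modulo $2$. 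Granting this, every $\nu(M)$ acquires the coefficient $\varphi([M]-[M_0])$ up to one fixed element of $G$, and grouping the matchings according to their intersection profile $a(M)$ identifies the result with $\sum_{\alpha\in H_1(\Sigma;\Z)}\varphi(\alpha)\,Z^{\mathcal B}_\alpha(\G,\nu)$ up to a global factor in $G$, which is the claim.

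The main obstacle is the reconciliation of the three independent sources of sign above: the permutation sign $\mathrm{sgn}(M)$, the flatness-induced sign $(-1)^{q_\tau}$ (whose honest construction from $\tau$, including the dependence on the number of components of $M\triangle M_0$ and on the framing terms $|\gamma_i|/2$, is exactly the content of~\cite{CR1} and must here be upgraded to $G$-valued cochains), and the Gauss-sum sign $(-1)^{\tilde q}$. Proving that $q_\tau$ and $\tilde q$ are the \emph{same} quadratic refinement of the intersection form, so that their sum is constant and only the linear character $\varphi([M]-[M_0])$ survives, is the crux; it is precisely here that the normalization $\tau(C_j)=(-1)^{|C_j|/2+1}$ and the choice of the cycles $C_j$ dual to the $\alpha_j$ are indispensable, the subtle point being that the would-be linear discrepancy in $q_\tau+\tilde q$ (and the spurious $M_0$-dependence it threatens to introduce) must cancel identically. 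Everything else is linear algebra over $\Z_2$ together with the elementary evaluation of the Gauss sum.
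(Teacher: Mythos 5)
Your proposal is correct in outline, but it takes a genuinely different route from the paper. The paper's own proof of this corollary is a one-line specialization: set $\omega=\tau$ in Theorem~\ref{thm:Pf-K}, so that $\varphi$ is the trivial class and the left-hand side $\sum_\alpha\varphi(\alpha)Z^{\mathcal{B}}_\alpha(\G,\nu)$ collapses to $Z(\G,\nu)$; all of the combinatorial content of Theorem~\ref{thm:Pf-K} is in turn imported wholesale from the citation \cite[Theorem~3.9]{Cim}, which is precisely the determinant formula for the normalized orientation $\tau$. What you do instead is unpack that black box: expanding each $\det(K^{\tau_\eps})$ over matchings, comparing signs via the cycle decomposition of $M\triangle M_0$, converting Kasteleyn flatness into a quadratic refinement of the mod~$2$ intersection form via \cite{CR1}, evaluating the $\eps$-sum as a Gauss sum, and matching the two quadratic forms using the normalization $\tau(C_j)=(-1)^{|C_j|/2+1}$. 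This is essentially the proof of the cited theorem itself, and you correctly identify its crux. Two caveats. First, the step you ``grant'' is exactly the content of \cite[Theorem~3.9]{Cim}, so your argument is complete only modulo reproducing that work. Second, in \cite{CR1} the quadratic form is attached to a \emph{pair} (Kasteleyn orientation, reference dimer covering), not to the orientation alone: the ``spurious $M_0$-dependence'' you worry about is real, and it appears on the Gauss-sum side as well, since $a(M)=a(M_0)+Bx$ (with $B$ the mod~$2$ intersection matrix and $x$ the coordinates of $[M\triangle M_0]$ in the basis $\{\alpha_i\}$) produces the linear correction $\sum_j(\alpha_j\cdot M_0)\,x_j$ in $Q(B^{-1}a(M))$; the identity to verify is that the $(\tau,M_0)$-form equals $Q(x)+\sum_j(\alpha_j\cdot M_0)x_j$, which, both sides being refinements of the intersection form with the same $M_0$-variation, reduces to checking equality on the basis curves $C_j$ --- and this is exactly where the normalization of $\tau$ and the choice of $C_j$ immediately to the left of $\alpha_j$ enter. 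What your route buys is a self-contained explanation of where the coefficients $(-1)^{\sum_{i<j}\eps_i\eps_j\alpha_i\cdot\alpha_j}$ come from (they are the Gauss/Arf factor) and why the normalization of $\tau$ is forced; what the paper's route buys is brevity and the more general Theorem~\ref{thm:Pf-K}, valid for arbitrary $G$-valued flat cochains, of which the corollary is the special case $\varphi=1$.
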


\begin{example}
Assume that the bipartite graph $\G$ is planar. In such a case, one can take $\Sigma$ to be the 2-sphere, so all
flat $G$-valued cochains are equivalent by Proposition~\ref{prop:torsor}. Corollary~\ref{cor:Pf} gives the equality
\[
Z(\G,\nu)\;\dot{=}\;\det(K^\omega)
\]
for any flat $\omega\in C^1(\G;G)$. The case $G=\{\pm 1\}$ is the celebrated Kasteleyn Theorem \cite{Ka1,Ka2,Ka3}.
The mild generalization stated in this example is not truely original, as it easily follows from the discussion in
Section II of \cite{Kup}.
\end{example}

The general formula stated in Theorem~\ref{thm:Pf-K} can seem somewhat cumbersome. Therefore,
let us illustrate its usefulness before giving the proof.

\begin{example}
Let $\mathbb{F}$ be the quotient field of the group ring $\Z[H_1(\Sigma;\Z)]$, and let $G$ denote the subgroup of
$\mathbb{F}^*$ given by $G=\pm H_1(\Sigma;\Z)$. If one chooses a family of curves $\mathcal{B}=\{\alpha_i\}$
as above and denotes by $a_i\in H_1(\Sigma;\Z)$ the class of $\alpha_i$, then $\mathbb{F}=\mathbb{Q}(a_1,\dots,a_{2g})$.
Let $\tau\in C^1(\G;\{\pm 1\})$ be as described above, and consider the cochain $\omega\in C^1(\G;G)$ given by
$\omega(e)=\tau(e)\prod_i a_i^{\alpha_i\cdot e}$. In other words, an edge is multiplied by $a_i$ (resp. $a_i^{-1}$) each
time it crosses $\alpha_i$ in the positive (resp. negative) direction. Then, Theorem~\ref{thm:Pf-K} yields the following
equality in $\mathbb{Q}(a_1,\dots,a_{2g})$, up to multiplication by $\pm a_1^{m_1}\cdots a_{2g}^{m_{2g}}$:
\[
\sum_{n\in\Z^{2g}}Z^\mathcal{B}_n(\Gamma,\nu)\,a_1^{n_1}\cdots a_{2g}^{n_{2g}}\,\;\dot{=}\;\,
\frac{1}{2^g}\sum_{\eps\in\Z_2^{2g}}(-1)^{\sum_{i<j}\eps_i\eps_j\alpha_i\cdot\alpha_j}\det(K^{\omega_\eps}),
\]
where $Z^\mathcal{B}_n(\Gamma,\nu)$ is the partial partition function given by the contribution of all $M\in\M(\G)$
such that $\alpha_i\cdot M=n_i$ for all $i$.
\end{example}

\begin{proof}[Proof of Theorem~\ref{thm:Pf-K}]
Consider a $\pm 1$-valued cochain $\tau$, and interpret it as an orientation of the edges of $\G$ as explained in
Example~\ref{ex:Kast}. One easily checks that the equation $\tau(C_j)=(-1)^{\frac{|C_j|}{2}+1}$ is equivalent
to the following fact: the number of edges in $C_j$ where $\tau$ disagrees with a given orientation on $C_j$ is odd.
By \cite[Theorem~3.9]{Cim}, we have the following equality in $\mathbb{F}$
\[
Z(\Gamma,\mathrm{w})=\frac{\pm 1}{2^g}
\sum_{\eps\in\Z_2^{2g}}(-1)^{\sum_{i<j}\eps_i\eps_j\alpha_i\cdot\alpha_j}\det(K^{\tau_\eps}(\G,\mathrm{w})),
\]
for any $\mathbb{F}$-valued edge weight system $\mathrm{w}$.
Given any cochains $\sigma,\phi\in C^1(\G;G)$, the equality
\[
K^\sigma(\G,\phi\nu)=K^{\phi\sigma}(\G,\nu)
\]
is obvious. Furthermore, if $\phi$ is a cocycle, we shall check shortly that
\[
Z(\G,\phi\nu)=\sum_{M\in\M(\G)}\phi(M)\nu(M)\;\dot{=}\;
\sum_{\alpha\in H_1(\Sigma;\Z)}[\phi](\alpha)Z^\mathcal{B}_\alpha(\Gamma,\nu),
\]
where $[\phi]\in H^1(\Sigma;G)=\mathrm{Hom}(H_1(\Sigma;\Z),G)$ is the cohomology class of $\phi$, and
$Z^\mathcal{B}_\alpha(\Gamma,\nu)$ is the partial partition function defined in the statement of the theorem.
Applying the three equalities displayed above to the weight system $\mathrm{w}=\phi\nu$, with $\phi$ such that
$\phi\tau=\omega$, yields the theorem.

It remains to check the last equation displayed above. Let $\{\alpha_j^*\}$ be the basis in $H_1(\Sigma;\Z)$ dual
to $\mathcal{B}=\{\alpha_i\}$ with respect to the intersection pairing, that is, such that
$\alpha_i\cdot\alpha_j^*=\delta_{ij}$. The difference of any two dimer coverings $M,M_0$ viewed as elements of
$C_1(\G;\Z)$ is clearly a cycle. Since the expression of an arbitrary $\alpha\in H_1(\Sigma;\Z)$ in the basis
$\{\alpha_i^*\}$ is given by $\alpha=\sum_i(\alpha_i\cdot\alpha)\alpha_i^*$, we get
\[
\frac{\phi(M)}{\phi(M_0)}=\phi(M-M_0)=[\phi]\Big(\sum_i\big(\alpha_i\cdot(M-M_0)\big)\alpha_i^*\Big)=
\frac{[\phi]\left(\sum_i(\alpha_i\cdot M)\alpha_i^*\right)}{[\phi]\left(\sum_i(\alpha_i\cdot M_0)\alpha_i^*\right)}.
\]
This implies the equality
\begin{eqnarray*}
\sum_{M\in\M(\G)}\phi(M)\nu(M)&\;\dot{=}\;&\sum_{M\in\M(\G)}[\phi]\Big(\sum_i(\alpha_i\cdot M)\alpha_i^*\Big)\,\nu(M)\\
&=&\sum_{\alpha\in H_1(\Sigma;\Z)}[\phi](\alpha)Z^\mathcal{B}_\alpha(\Gamma,\nu),
\end{eqnarray*}
which concludes the proof.
\end{proof} 

\subsection{Discrete Dirac operators and Kasteleyn matrices}
\label{sub:DK}

Now, let us turn back to our discrete Dirac operators. As in Section~\ref{sec:CA}, let $\Gamma$ be a bipartite graph isoradially embedded is a flat
surface $\Sigma$ with cone type singularities $S\subset B\cup V(\G^*)$.

\begin{lemma}\label{lemma:flat}
Given a nowhere vanishing vector field $V$ along $W$, let $\omega_V\in C^1(\Gamma;S^1)$ be the cochain defined by $\omega_V(e)=\exp(i\vartheta_V(w,b))$
as illustrated in Figure~\ref{fig:theta}. Then, the equivalence class of $\omega_V$ does not depend on $V$.
Furthermore, its Kasteleyn curvature is given by
\[
c_{\omega_V}(f)=-\exp(i\theta_f/2),
\]
where $\theta_f$ denotes the angle of the conical singularity in the face $f$.
\end{lemma}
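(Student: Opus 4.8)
The plan is to treat the two assertions separately: the first is a short gauge-theoretic observation, and the second is a computation that feeds on the rhombic structure of the isoradial embedding.

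\emph{Independence of the equivalence class.} If $V,V'$ are two nowhere vanishing vector fields along $W$, then at each white vertex $w$ they differ by a rotation through some angle $\phi_w$. Since $\vartheta_V(w,b)$ is measured at $w$, rotating the reference vector shifts it uniformly over all edges incident to $w$, giving $\vartheta_{V'}(w,b)=\vartheta_V(w,b)-\phi_w$. Hence $\omega_{V'}(e)/\omega_V(e)=\exp(-i\phi_w)$ depends only on the white endpoint $w$ of $e$, so $\omega_{V'}/\omega_V=\delta\chi$ is the coboundary of the $0$-cochain $\chi\in C^0(\Gamma;S^1)$ equal to $\exp(-i\phi_w)$ on white vertices and to $1$ on black ones. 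Thus $\omega_V$ and $\omega_{V'}$ are cohomologous, i.e. gauge equivalent; this is precisely the gauge transformation performed at each white vertex. In particular, since equivalent cochains have the same Kasteleyn curvature, the curvature computation below may be carried out for any convenient choice of $V$.

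\emph{Setting up the curvature.} Fix a face $f$. As $\Gamma$ is bipartite its boundary has an even number $2m=|\partial f|$ of edges, the vertices alternating $w_1,b_1,\dots,w_m,b_m$ as $\partial f$ is traversed counterclockwise. Using $\omega_V(e)=\exp(i\vartheta_V(w,b))$ on the white-to-black orientation and $\omega_V(-e)=\exp(-i\vartheta_V(w,b))$, regrouping the product defining $\omega_V(\partial f)$ by white vertices telescopes to
\[
\omega_V(\partial f)=\prod_{i=1}^m\exp\big(i(\vartheta_V(w_i,b_i)-\vartheta_V(w_i,b_{i-1}))\big)=\exp\Big(-i\sum_{i=1}^m\gamma_i\Big),
\]
where $\gamma_i$ is the interior angle of $f$ at $w_i$: for a counterclockwise-oriented face, the signed counterclockwise angle from the edge $(w_i,b_{i-1})$ to the edge $(w_i,b_i)$ equals $-\gamma_i$ modulo $2\pi$.

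\emph{The isoradial input.} Here isoradiality is essential. Let $x_f\in V(\Gamma^*)$ be the center of $f$, at distance $\delta$ from every boundary vertex, and let $\theta_f$ be the total (cone) angle at $x_f$. Each boundary edge of $f$ is a diagonal of a rhombus of $R_\Gamma$ having $x_f$ as a vertex; these $2m$ rhombi tile a neighborhood of $x_f$, so their angles at $x_f$ sum to $\theta_f$. Writing $\phi_i,\psi_i$ for the angles at $x_f$ of the rhombi on $(w_i,b_i)$ and $(w_i,b_{i-1})$, the fact that the segment $w_ix_f$ lies in $f$ splits $\gamma_i$ into two parts; since a rhombus diagonal bisects the vertex angle and adjacent rhombus angles are supplementary, each part is half of $\pi$ minus the corresponding angle at $x_f$, giving $\gamma_i=\pi-\tfrac12(\phi_i+\psi_i)$. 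Summing and using $\sum_i(\phi_i+\psi_i)=\theta_f$ yields $\sum_{i=1}^m\gamma_i=m\pi-\theta_f/2$. Combining with the previous step,
\[
\omega_V(\partial f)=\exp\big(-i(m\pi-\theta_f/2)\big)=(-1)^m\exp(i\theta_f/2),
\]
so that
\[
c_{\omega_V}(f)=(-1)^{\frac{|\partial f|}{2}+1}\omega_V(\partial f)=(-1)^{m+1}(-1)^m\exp(i\theta_f/2)=-\exp(i\theta_f/2),
\]
as claimed.

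The main delicate point is orientation bookkeeping: the counterclockwise conventions for $\vartheta_V$, for $\partial f$, and for the angular sum at $x_f$ must all be aligned, since the opposite choice would replace $\exp(i\theta_f/2)$ by its complex conjugate. Once these conventions are fixed, the rhombus trigonometry and the telescoping of the boundary product are routine, and the cone angle enters exactly through the tiling of the neighborhood of $x_f$.
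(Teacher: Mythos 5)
Your proof is correct, and its overall skeleton matches the paper's: both arguments telescope the boundary product $\omega_V(\partial f)$ around the white vertices to obtain $c_{\omega_V}(f)=-\exp\bigl(i\sum_{j}(\pi-\gamma_j)\bigr)$, where $\gamma_j$ is the interior angle of $f$ at $w_j$ (your sign conventions are consistent throughout, which is indeed the delicate point). Where you diverge is in the key step identifying $\sum_j\gamma_j$ with $m\pi-\theta_f/2$. The paper proceeds in two stages: it invokes Gauss--Bonnet to write $\theta_f=\sum_{v\in\partial f}(\pi-\alpha_v(\partial f))$ over \emph{all} vertices, and then reduces to the alternating-sum identity $\sum_{b\in B\cap\partial f}\alpha_b(\partial f)=\sum_{w\in W\cap\partial f}\alpha_w(\partial f)$, which it proves by mapping $f\setminus\{x_f\}$ isometrically to the pointed plane and observing that the alternating sum is invariant under sliding one vertex along the circle, the symmetric configuration being trivial. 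You instead compute each white angle directly from the rhombic structure: the isoceles triangles $x_fw_ib_i$ (equivalently, the rhombus angle bisection and supplementarity you cite) give $\gamma_i=\pi-\tfrac12(\phi_i+\psi_i)$, and summing the central angles to the cone angle $\theta_f$ finishes in one stroke. Your route is more elementary and explicit, avoiding both Gauss--Bonnet on cone surfaces and the deformation argument (whose invariance claim itself requires a small verification); the paper's route isolates the black/white alternating-angle identity as the crux, which is conceptually valuable since the same kind of alternating sum reappears as Condition \emph{(ii)} of Theorem~\ref{thm:main}.
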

\begin{proof}
The first statement is obvious, so let us fix a nowhere vanishing vector field $V\in\X(W)$ and consider the associated
cochain $\omega_V$. Given a face $f$ of $\G\subset\Sigma$, let $w_1,b_1,w_2,b_2,\dots,w_m,b_m$ denote the vertices
in $\partial f$ cyclically ordered. Then, the Kasteleyn curvature of $\omega_V$ at the face $f$ is given by
\begin{eqnarray*}
c_{\omega_V}(f)&=&(-1)^{\frac{\left|\partial f\right|}{2}+1}\omega_V(\partial f)\\
&=&(-1)^{m+1}\omega_V(w_1,b_1)\omega_V(b_1,w_2)\omega_V(w_2,b_2)\cdots\omega_V(b_m,w_1)\\
&=&-(-1)^m\frac{\omega_V(w_1,b_1)\omega_V(w_2,b_2)\cdots\omega_V(w_m,b_m)\phantom{_{-1}}}
{\omega_V(w_1,b_m)\omega_V(w_2,b_1)\cdots\omega_V(w_m,b_{m-1})}\\
&=&-\exp\big(i\sum_{j=1}^m(\pi-\alpha_{w_j}(\partial f))\big),
\end{eqnarray*}
where $\alpha_{w_j}(\partial f)=\vartheta_V(w_j,b_{j-1})-\vartheta_V(w_j,b_j)$ and $b_0=b_m$. This angle $\alpha_w(\partial f)$ is simply the angle made by
the oriented curve $\partial f$ at the vertex $w$, as illustrated in Figure~\ref{fig:theta2}. An easy application of Gauss-Bonnet shows that the angle $\theta_f$
of the conical singularity $x_f$ in $f$ is equal to $\sum_{v\in\partial f}(\pi-\alpha_v(\partial f))$. Hence, it remains to check that
\[
\sum_{b\in B\cap\partial f}\alpha_b(\partial f)-\sum_{w\in W\cap\partial f}\alpha_w(\partial f)=0.
\]
This is where the isoradiality comes into play. By definition, there is a local isometry from the pointed face
$f\setminus\{x_f\}$ to the pointed plane $\C^*$ such that all vertices in $\partial f$ are mapped to a circle in
$\C^*$. Now, observe that the alternating sum of angles displayed above does not change if one moves a vertex along
the circle keeping all other vertices fixed. Since the equality above holds when all angles are equal (to $\pi-\theta_f/2m$),
this concludes the proof.
\end{proof}

\begin{figure}[htb]
\labellist\small\hair 2.5pt
\pinlabel {$v$} at 168 18
\pinlabel {$\alpha_v(C)$} at 175 160
\pinlabel {$C$} at 420 190
\endlabellist
\centerline{\psfig{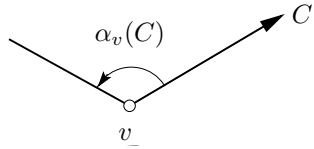}}
\caption{The angle made by the oriented curve $C$ at the vertex $v$.}
\label{fig:theta2}
\end{figure}

Let us now state the main result of this section.

\begin{theorem}\label{thm:main}
Let $\Sigma$ be a compact oriented flat surface of genus $g$ with conical singularities supported at $S$ and cone angles multiples of $2\pi$.
Fix a graph $\Gamma$ with bipartite structure $V(\Gamma)=B\sqcup W$, isoradially embedded in $\Sigma$ so that $S\subset B\cup V(\Gamma^*)$.
For an edge $e$ of $\Gamma$, let $\nu(e)$ denote the length of the dual edge.
Finally, let $D_\lambda\colon\C^B\to\C^W$ denote the discrete Dirac operator associated to the discrete spin structure $\lambda$.

There exist $2^{2g}$ non-equivalent discrete spin structures such that the corresponding discrete Dirac operators $\{D_\lambda\}_\lambda$
give $2^{2g}$ non-equivalent Kasteleyn matrices of the weighted graph $(\Gamma,\nu)$, if and only if the following conditions hold:
\begin{romanlist}
\item{each conical singularity in $V(\G^*)$ has angle an odd multiple of $2\pi$;}
\item{for some (or equivalently, for any) choice of oriented simple closed curves $\{C_j\}$ in $\Gamma$ representing
a basis of $H_1(\Sigma;\Z)$,
\[
\sum_{b\in B\cap C_j}\alpha_b(C_j)-\sum_{w\in W\cap C_j}\alpha_w(C_j)
\]
is a multiple of $2\pi$ for all $j$, where $\alpha_v(C)$ denotes the angle made by the oriented curve $C$
at the vertex $v$ as illustrated in Figure~\ref{fig:theta2}.}
\end{romanlist}
\end{theorem}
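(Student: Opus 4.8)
The plan is to recognize the discrete Dirac operator as a twisted weighted bipartite adjacency matrix and then to decide exactly when its defining cochain is gauge equivalent to a Kasteleyn orientation. Writing $\omega_V\in C^1(\Gamma;S^1)$ for the cochain $\omega_V(w,b)=e^{i\vartheta_V(w,b)}$ of Lemma~\ref{lemma:flat}, the formula of Definition~\ref{def:Dirac} reads $(D_\lambda)_{w,b}=\nu(w,b)\,(\lambda\omega_V)(w,b)$, so that $D_\lambda=K^{\lambda\omega_V}(\Gamma,\nu)$ for the $S^1$-valued cochain $\omega:=\lambda\omega_V$. In view of Example~\ref{ex:Kast} and the gauge-invariance remarks following Definition~\ref{def:Dirac} (another $V$ yields $QD_\lambda$, and equivalent $\lambda$ yield $QD_\lambda Q'$), the operator $D_\lambda$ is, up to the natural equivalence, a Kasteleyn matrix precisely when $\omega$ is gauge equivalent to a flat $\{\pm1\}$-valued cochain. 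Since gauge equivalence preserves both the Kasteleyn curvature and the holonomy along cycles, and since $H^1(X;S^1)\cong\mathrm{Hom}(H_1(\Sigma;\Z),S^1)$, this happens if and only if (a) $c_\omega(f)=1$ for every face $f$, and (b) $\omega(C_j)\in\{\pm1\}$ for a basis $\{C_j\}$ of $H_1(\Sigma;\Z)$ (a $\{\pm1\}$-representative of $[\omega]$ then exists and, having the gauge-invariant curvature $+1$, is a Kasteleyn orientation). I would treat (a) and (b) separately.

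For the curvature condition (a), note that $\lambda\in Z^1(X;S^1)$ is a cocycle, so $\lambda(\partial f)=1$ and hence $c_\omega(f)=c_{\omega_V}(f)$, independently of $\lambda$. By Lemma~\ref{lemma:flat} this equals $-\exp(i\theta_f/2)$, where $\theta_f$ is the cone angle at the center $x_f\in V(\Gamma^*)$ of $f$. Thus $c_\omega(f)=1$ if and only if $\theta_f$ is an odd multiple of $2\pi$. Regular faces have $\theta_f=2\pi$ and satisfy this automatically, while the singular faces are exactly those with $x_f\in V(\Gamma^*)\cap S$; hence (a) holds for all faces if and only if condition (i) holds.

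For the holonomy condition (b), I would compute $\omega(C_j)^2=\kappa(C_j)\,\omega_V(C_j)^2$ using $\lambda^2=\kappa$. Telescoping the phases $\vartheta_V$ around $C_j$ exactly as in Lemma~\ref{lemma:flat} (each edge contributes through its unique white endpoint, the two edges at $w$ combining to $e^{-i\alpha_w(C_j)}$) gives $\omega_V(C_j)=\exp\big(-i\sum_{w\in W\cap C_j}\alpha_w(C_j)\big)$. On the other hand $\kappa(C_j)$ is the inverse of the flat holonomy around $C_j$; computing it from the turning angles of the edges (equivalently, from $\lambda(C_j)^2$ via the winding-number formula of Proposition~\ref{prop:spin}, for a representative $C_j$ chosen to avoid $B\cap S$) yields $\kappa(C_j)=\exp\big(i\sum_{v\in C_j}\alpha_v(C_j)\big)$, the term $\pi|C_j|$ dropping out because $C_j$ has even length. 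Combining, the white contributions partly cancel and
\[
\omega(C_j)^2=\kappa(C_j)\,\omega_V(C_j)^2=\exp\Big(i\Big(\sum_{b\in B\cap C_j}\alpha_b(C_j)-\sum_{w\in W\cap C_j}\alpha_w(C_j)\Big)\Big),
\]
so that $\omega(C_j)\in\{\pm1\}$, i.e. $\omega(C_j)^2=1$, if and only if the alternating angle sum of condition (ii) is a multiple of $2\pi$. The independence of the chosen curves (the ``some or any'' clause) follows from the isoradiality identity $\sum_b\alpha_b(\partial f)-\sum_w\alpha_w(\partial f)=0$ proved inside Lemma~\ref{lemma:flat}: sliding $C_j$ across a face changes the alternating angle sum by such a face contribution, hence not at all.

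Finally I would package everything into the counting statement. The assignment $\lambda\mapsto[\lambda\omega_V]$ is $H^1(\Sigma;\Z_2)$-equivariant from the torsor $\S(X)$ of equivalence classes of discrete spin structures into $H^1(X;S^1)$, and both $\S(X)$ and the set of Kasteleyn orientations are $H^1(\Sigma;\Z_2)$-torsors of cardinality $2^{2g}$ (Proposition~\ref{prop:torsor} with $G=\{\pm1\}$). If (i) and (ii) hold, then by (a) and (b) every $[\lambda\omega_V]$ is represented by a flat $\{\pm1\}$-cochain, so this equivariant map lands in, and being a morphism of $H^1(\Sigma;\Z_2)$-torsors automatically bijects onto, the set of Kasteleyn orientations, producing the $2^{2g}$ non-equivalent Kasteleyn matrices. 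Conversely, if the $2^{2g}$ operators $D_\lambda$ are Kasteleyn matrices, then each $\omega=\lambda\omega_V$ is gauge equivalent to a Kasteleyn orientation, forcing (a) and (b) and hence (i) and (ii). The main obstacle I anticipate is step (b): pinning down $\kappa(C_j)$ with the correct sign from the turning angles and establishing the homological invariance of the alternating angle sum, so that the clean identity above — and with it the equivalence with condition (ii) — holds on the nose rather than merely modulo sign ambiguities.
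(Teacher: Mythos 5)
Your proposal is correct and takes essentially the same route as the paper's own proof: identify $D_\lambda$ with the adjacency matrix twisted by $\omega_\lambda=\lambda\omega_V$, use Lemma~\ref{lemma:flat} together with the cocycle property of $\lambda$ to equate Kasteleyn flatness with condition $(i)$, and compute $\omega_\lambda(C)^2=\kappa(C)\,\omega_V(C)^2=\exp\bigl(i\bigl(\sum_{b\in B\cap C}\alpha_b(C)-\sum_{w\in W\cap C}\alpha_w(C)\bigr)\bigr)$ to equate the $\{\pm1\}$-holonomy condition with condition $(ii)$. The only cosmetic difference is that the paper tests equivalence against a fixed Kasteleyn orientation $\omega_0$, so that $\phi=\omega_0^{-1}\omega_\lambda$ is a genuine cocycle (which also gives the ``some or any'' clause immediately), whereas you work directly with $\omega_\lambda(C_j)^2=1$ and justify the homological invariance by sliding across faces --- the same computation, since $\omega_0^2=1$.
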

\begin{proof}
Fix a discrete spin structure $\lambda\in Z^1(X;S^1)$, a normalized vector field $V\in\X(W)$, and let $D_\lambda\colon\C^B\to\C^W$ be the corresponding discrete Dirac operator.
By definition, the coefficient of $D_\lambda$ corresponding to vertices $w\in W$ and $b\in B$ is equal to
\[
D_\lambda(w,b)=
\begin{cases}
\nu(e)\,\omega_V(e)\lambda(e) & \text{if $w$ and $b$ are joined by an edge $e$;} \\
0 & \text{if $w$ and $b$ are not adjacent,}
\end{cases}
\]
with $\omega_V$ as in Lemma~\ref{lemma:flat}. In other words, $D_\lambda$ is the adjacency matrix
of the weighted bipartite graph $(\Gamma,\nu)$, twisted by the cochain $\omega_\lambda:=\omega_V\lambda\in C^1(\Gamma;S^1)$.
The goal is now to check that $\omega_\lambda$ is gauge equivalent to a Kasteleyn orientation (that is, to a $\pm 1$-valued
flat cochain) if and only if Conditions $(i)$ and $(ii)$ hold. This clearly implies the theorem, as
non-equivalent discrete spin structures yield non-equivalent Kasteleyn orientations.

By Lemma~\ref{lemma:flat}, $\omega_V$ is flat if and only if Condition $(i)$ is satisfied.
Since $\lambda$ is a cocycle,
$\delta\lambda=1$ and the same statement holds for $\omega_\lambda$. By Proposition~\ref{prop:torsor}, the set of equivalence
classes of such $S^1$-valued flat cochains is an $H^1(\Sigma;S^1)$-torsor. Therefore, $\omega_\lambda$ is equivalent to
a Kasteleyn orientation if and only if, for any Kasteleyn orientation $\omega_0$, the cocycle
$\phi:=\omega_0^{-1}\omega_\lambda$ represents a class in $H^1(\Sigma;\{\pm 1\})=\mathrm{Hom}(H_1(\Sigma;\Z),\{\pm 1\})$.
This holds if and only if $\phi(C)\in\{\pm 1\}$ for any 1-cycle $C$ in $\Gamma$, or equivalently, for any
1-cycle in $\Gamma$ part of a collection representing a basis of $H_1(\Sigma;\Z)$. Since $\omega_0^2=1$, this
translates into the equalities
\[
1=\phi(C)^2=\omega_\lambda(C)^2=\omega_V(C)^2\lambda(C)^{2}=\omega_V(C)^2\kappa(C).
\]
As mentioned in Section~\ref{sub:DG'}, $\kappa(C)$ is the inverse of the holonomy along the 1-cycle $C$. Therefore,
\[
\kappa(C)=\mathit{hol}(C)^{-1}=\exp\Big(-i\sum_{v\in V(\Gamma)\cap C}(\pi-\alpha_v(C))\Big)=\exp\Big(i\sum_{v\in V(\Gamma)\cap C}\alpha_v(C)\Big),
\]
since $C$ is of even length.
Furthermore, the definition of $\omega_V$ implies that $\omega_V(C)=\exp(-i\sum_{w\in W\cap C}\alpha_w(C))$, as in the proof of Lemma~\ref{lemma:flat}.
This yields the equation
\[
1=\exp\Big(i\sum_{b\in B\cap C}\alpha_b(C)-i\sum_{w\in W\cap C}\alpha_w(C)\Big),
\]
obviously equivalent to Condition $(ii)$.
\end{proof}

Consider $\Gamma\subset\Sigma$ as above, and satisfying both conditions of Theorem~\ref{thm:main}. Let
$\mathcal{B}=\{\alpha_j\}$ be a set of simple closed curves on $\Sigma$, transverse to $\Gamma$, whose classes form a
basis of $H_1(\Sigma;\Z)$. For each $\alpha_j\in\mathcal{B}$, let $C_j$ denote the oriented 1-cycle in $\Gamma$
having $\alpha_j$ to its immediate left, and meeting every vertex of $\G$ adjacent to $\alpha_j$ on this side.
By the conditions of Theorem~\ref{thm:main}, any discrete spin structure $\lambda$ satisfies the equations
\[
\lambda(C_j)=\exp\Big(i\sum_{w\in W\cap C_j}\alpha_w(C_j)\Big)(-1)^{\frac{|C_j|}{2}+1}\tag{$\star$}
\]
up to a sign. Let us pick the discrete spin structure $\lambda_0$ such that the equality above holds for all $j$.
For any $\eps=(\eps_1,\dots,\eps_{2g})\in\Z_2^{2g}$, let $\lambda_\eps$ denote the discrete spin structure obtained
from $\lambda_0$ as follows:
\[
\lambda_\eps(e)=(-1)^{\sum_j\eps_j(e\cdot\alpha_j)}\lambda_0(e).
\]

\begin{theorem}\label{thm:Pf-D}
If $\G\subset\Si$ satisfies the conditions of Theorem~\ref{thm:main}, then the partition function for the dimer model on $(\Gamma,\nu)$ is given by
\[
Z(\Gamma,\nu)=\frac{1}{2^g}\Big|\sum_{\eps\in\Z_2^{2g}}(-1)^{\sum_{i<j}\eps_i\eps_j\alpha_i\cdot\alpha_j}\det(D_{\lambda_\eps})\Big|.
\]
\end{theorem}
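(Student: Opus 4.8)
The plan is to reduce the statement to Corollary~\ref{cor:Pf} by identifying each discrete Dirac operator $D_{\lambda_\eps}$ with the Kasteleyn matrix $K^{\tau_\eps}$ up to an overall $S^1$-factor that is \emph{independent of $\eps$}. First I would recall from the proof of Theorem~\ref{thm:main} that $D_\lambda=K^{\omega_\lambda}(\Gamma,\nu)$, where $\omega_\lambda=\omega_V\lambda\in C^1(\Gamma;S^1)$ is a flat cochain (flatness of $\omega_V$ is Condition $(i)$ via Lemma~\ref{lemma:flat}, and $\lambda$ is a cocycle). In particular $D_{\lambda_\eps}=K^{\omega_{\lambda_\eps}}$ with $\omega_{\lambda_\eps}=\omega_V\lambda_\eps$.

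Next I would compare $\omega_{\lambda_0}$ with the distinguished flat $\pm1$-cochain $\tau$ of Section~\ref{sub:partition}, characterized by $\tau(C_j)=(-1)^{|C_j|/2+1}$. Combining the defining equation $(\star)$ for $\lambda_0$ with the identity $\omega_V(C_j)=\exp(-i\sum_{w\in W\cap C_j}\alpha_w(C_j))$ from the proof of Theorem~\ref{thm:main}, the phases cancel and I obtain $\omega_{\lambda_0}(C_j)=(-1)^{|C_j|/2+1}=\tau(C_j)$ for all $j$. Since the ratio of two flat cochains is a cocycle (their curvatures agree) and the $C_j$ represent a basis of $H_1(\Sigma;\Z)$, this forces $[\omega_{\lambda_0}]=[\tau]$ in $H^1(\Sigma;S^1)$; equivalently there is a $0$-cochain $\eta\in C^0(\Gamma;S^1)$ with $\omega_{\lambda_0}=(\delta\eta)\,\tau$.

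The key step is to transport this gauge equivalence through the $\eps$-twist. Writing $\phi_\eps(e)=(-1)^{\sum_j\eps_j(e\cdot\alpha_j)}$, the definitions of $\lambda_\eps$ and $\tau_\eps$ give $\omega_{\lambda_\eps}=\phi_\eps\,\omega_{\lambda_0}$ and $\tau_\eps=\phi_\eps\,\tau$, so the \emph{same} coboundary yields $\omega_{\lambda_\eps}=(\delta\eta)\,\tau_\eps$ for every $\eps$. A gauge transformation by $\eta$ rescales the rows and columns of the bipartite adjacency matrix by the $\eta$-values of the white and black vertices, whence $\det(D_{\lambda_\eps})=u\,\det(K^{\tau_\eps})$ with $u=\prod_{b\in B}\eta(b)\big/\prod_{w\in W}\eta(w)\in S^1$ the same for all $\eps$. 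This $\eps$-independence is the crucial point, as it lets $u$ factor out of the alternating sum:
\[
\sum_{\eps\in\Z_2^{2g}}(-1)^{\sum_{i<j}\eps_i\eps_j\alpha_i\cdot\alpha_j}\det(D_{\lambda_\eps})=u\sum_{\eps\in\Z_2^{2g}}(-1)^{\sum_{i<j}\eps_i\eps_j\alpha_i\cdot\alpha_j}\det(K^{\tau_\eps}).
\]

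Finally I would assemble the pieces. Corollary~\ref{cor:Pf} asserts that $\frac{1}{2^g}$ times the right-hand sum above equals $Z(\Gamma,\nu)$ up to multiplication by an element of $S^1$. Since $Z(\Gamma,\nu)$ is a non-negative real number, taking absolute values turns the relation $\dot{=}$ into an honest equality $Z(\Gamma,\nu)=\frac{1}{2^g}\big|\sum_\eps(-1)^{\sum_{i<j}\eps_i\eps_j\alpha_i\cdot\alpha_j}\det(K^{\tau_\eps})\big|$; and because $|u|=1$, the displayed identity lets me replace the $K^{\tau_\eps}$ by the $D_{\lambda_\eps}$ without changing the modulus, yielding the claimed formula. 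The main obstacle is precisely the bookkeeping of the third paragraph: verifying that a single gauge transformation $\delta\eta$ simultaneously relates $\omega_{\lambda_\eps}$ to $\tau_\eps$ for all $\eps$, so that the common factor $u$ survives the alternating sum and is then absorbed by the absolute value.
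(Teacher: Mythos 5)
Your proposal is correct and follows essentially the same route as the paper's (very terse) proof: Condition $(i)$ makes $\omega_V\lambda_0$ Kasteleyn flat, Condition $(ii)$ together with Equation $(\star)$ identifies it, up to gauge, with the distinguished cochain $\tau$ normalized by $\tau(C_j)=(-1)^{|C_j|/2+1}$, and then Corollary~\ref{cor:Pf} with $G=S^1$ plus taking absolute values finishes the argument. Your third paragraph merely makes explicit what the paper leaves implicit -- that a single coboundary $\delta\eta$ relates $\omega_{\lambda_\eps}$ to $\tau_\eps$ for every $\eps$ simultaneously, so the unit factor $u$ is common to all terms and is absorbed by the modulus -- which is a worthwhile clarification but not a different proof.
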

\begin{proof}
By Condition $(i)$, the $S^1$-valued cochain $\lambda_0\omega_V$ is Kasteleyn flat.
By Condition $(ii)$, it is gauge equivalent to a $\{\pm 1\}$-valued cocycle $\tau$.
Finally, Equation ($\star$) is equivalent to $\tau(C_j)=(-1)^{|C_j|/2+1}$.
The theorem now follows from Corollary~\ref{cor:Pf} for $G=S^1\subset\mathbb{F}^*=\C^*$.
\end{proof}

\begin{remark}
More generally, let us assume that $\G\subset\Si$ only satisfies the first condition of Theorem~\ref{thm:main}, and let $\lambda$ be any discrete spin structure. Then,
Theorem~\ref{thm:Pf-K} gives the equality
\[
\sum_{\alpha\in H_1(\Sigma;\Z)}\varphi(\alpha)Z^\mathcal{B}_\alpha(\Gamma,\nu)=
\frac{1}{2^g}\Big|\sum_{\eps\in\Z_2^{2g}}(-1)^{\sum_{i<j}\eps_i\eps_j\alpha_i\cdot\alpha_j}\det(D_{\lambda_\eps})\Big|,
\]
where $\varphi\in H^1(\Si;\Z)$ is such that $\varphi\cdot[\tau]=[\lambda\omega_V]$.
\end{remark}

Spin structures on a closed orientable surface $\Si$ can be identified with {\bf quadratic forms}, that is, with $\Z_2$-valued maps on
$H_1(\Si;\Z_2)$ such that $q(x+y)=q(x)+q(y)+x\cdot y$ for all $x,y$ in $H_1(\Si;\Z_2)$. More precisely, Johnson~\cite{Joh} gave an explicit
$H^1(\Si;\Z_2)$-equivariant bijection $\S(\Si)\stackrel{\varphi}{\to}\Q(\Si)$ between the corresponding affine $H^1(\Si;\Z_2)$-spaces.
The {\bf Arf invariant} of a spin structure is then defined as the Arf invariant of the corresponding quadratic form $q$, that is, the mod 2 integer $\Arf(q)\in\Z_2$ such that
\[
(-1)^{\Arf(q)}=\frac{1}{\sqrt{|H_1(\Si;\Z_2)|}}\sum_{x\in H_1(\Si;\Z_2)} (-1)^{q(x)}.
\]
If all cone angles of $\Si$ are odd multiples of $2\pi$, then there exists a canonical equivariant bijection $\S(X)\to\S(\Si)$ (recall Proposition~\ref{prop:spin}).
In such a case, it makes sense to talk about the Arf invariant $\Arf(\lambda)$ of a discrete spin structure $\lambda\in\S(X)$.

As above, let  $\{\alpha_j\}$ be a set of simple closed curves on $\Si$, transverse to $\G$, defining a basis of $H_1(\Si;\Z)$, and let $C_j$ denote the oriented cycle in $\G$
having $\alpha_j$ to its immediate left. By Condition $(ii)$, the number
\[
q_0(\alpha_j)=\frac{1}{2\pi}\Big(\sum_{w\in W\cap C_j}\alpha_w(C_j)-\sum_{b\in B\cap C_j}\alpha_b(C_j)\Big)
\]
is an integer. Furthermore, one easily checks that its parity changes each time $\alpha_j$ moves across one vertex. Therefore, the $\alpha_j$'s can be chosen so that all
$q_0(\alpha_j)'s$ are even.

This leads to the following version of the Pfaffian formula, assuming the notations preceding Theorem~\ref{thm:Pf-D}.

\begin{theorem}\label{thm:Pf-Arf}
Let $\G\subset\Si$ be as in the statement of Theorem~\ref{thm:main}, with all cone angles of $\Si$ odd multiples of $2\pi$.
Then, the partition function for the dimer model on $(\Gamma,\nu)$ is given by
\[
Z(\Gamma,\nu)=\frac{1}{2^g}\Big|\sum_{\eps\in\Z_2^{2g}}(-1)^{\Arf(\lambda_\eps)}\det(D_{\lambda_\eps})\Big|.
\]
\end{theorem}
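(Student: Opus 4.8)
The plan is to deduce Theorem~\ref{thm:Pf-Arf} directly from Theorem~\ref{thm:Pf-D}, whose right-hand side differs from the one sought here only in the sign weighting each determinant: $(-1)^{\sum_{i<j}\eps_i\eps_j\alpha_i\cdot\alpha_j}$ there, against $(-1)^{\Arf(\lambda_\eps)}$ here. Because both expressions sit inside an absolute value, it suffices to prove that these two signs agree up to a factor independent of $\eps$, which may then be pulled out of the sum and discarded. Concretely I would establish the congruence
\[
\Arf(\lambda_\eps)\equiv\Arf(\lambda_0)+\sum_{i<j}\eps_i\eps_j\,(\alpha_i\cdot\alpha_j)\pmod 2,
\]
whence $(-1)^{\Arf(\lambda_\eps)}=(-1)^{\Arf(\lambda_0)}\,(-1)^{\sum_{i<j}\eps_i\eps_j\alpha_i\cdot\alpha_j}$ and the theorem is immediate.

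First I would identify, under Johnson's equivariant bijection $\S(\Si)\to\Q(\Si)$ (legitimate here by Proposition~\ref{prop:spin}, since all cone angles are odd multiples of $2\pi$), the quadratic form $q_{\lambda_\eps}$ attached to $\lambda_\eps$. By definition $\lambda_\eps(e)=(-1)^{\sum_j\eps_j(e\cdot\alpha_j)}\lambda_0(e)$, so any $1$-cycle $C$ in $\G$ satisfies $(\lambda_\eps/\lambda_0)(C)=(-1)^{\sum_j\eps_j(C\cdot\alpha_j)}$. Comparing with the identity $(\lambda_1/\lambda_2)(C)=\exp\!\big(i\pi(q_{\lambda_1}-q_{\lambda_2})([C])\big)$ proved in Proposition~\ref{prop:spin}, and recalling that $q_{\lambda_\eps}-q_{\lambda_0}$ is linear (both forms refine the intersection pairing), I obtain
\[
q_{\lambda_\eps}(x)=q_{\lambda_0}(x)+c_\eps\cdot x,\qquad c_\eps:=\sum_j\eps_j\,\alpha_j\in H_1(\Si;\Z_2).
\]

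Next I would use the standard behaviour of the Arf invariant under such an affine shift: for a quadratic refinement $q$ of the intersection pairing and any $v\in H_1(\Si;\Z_2)$, the form $q_v$ defined by $q_v(x):=q(x)+v\cdot x$ obeys $\Arf(q_v)=\Arf(q)+q(v)$ (expand $\sum_i q_v(a_i)q_v(b_i)$ on a symplectic basis $\{a_i,b_i\}$ and recognise $\Arf(q)+q(v)$). With $q=q_{\lambda_0}$ and $v=c_\eps$, expanding $q_{\lambda_0}(c_\eps)$ through the quadratic relation gives
\[
\Arf(\lambda_\eps)\equiv\Arf(\lambda_0)+\sum_j\eps_j\,q_{\lambda_0}(\alpha_j)+\sum_{i<j}\eps_i\eps_j\,(\alpha_i\cdot\alpha_j)\pmod 2.
\]
To suppress the linear term I would invoke $q_{\lambda_0}(\alpha_j)\equiv q_0(\alpha_j)\pmod 2$, together with the choice of curves $\alpha_j$ recalled just before the theorem, for which every integer $q_0(\alpha_j)$ is even; then each $q_{\lambda_0}(\alpha_j)$ vanishes and the congruence collapses to the one displayed in the first paragraph, finishing the argument.

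The one genuinely delicate point is the reconciliation $q_{\lambda_0}(\alpha_j)\equiv q_0(\alpha_j)$. I would obtain it by substituting the normalisation ($\star$) of $\lambda_0$, namely $\lambda_0(C_j)=\exp\!\big(i\sum_{w\in W\cap C_j}\alpha_w(C_j)\big)(-1)^{|C_j|/2+1}$, into the formula for $q_\lambda([C])$ from Proposition~\ref{prop:spin}, and checking that the various $\tfrac{|C_j|}{2}$ and constant terms cancel modulo $2$ (using that $|C_j|$ is even), leaving exactly $\tfrac{1}{2\pi}\big(\sum_{w}\alpha_w(C_j)-\sum_{b}\alpha_b(C_j)\big)=q_0(\alpha_j)$. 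Everything else --- the identification of $q_{\lambda_\eps}$, the Arf shift formula, and the quadratic expansion --- is routine.
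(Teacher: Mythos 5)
Your proposal is correct and follows essentially the same route as the paper: reduce to Theorem~\ref{thm:Pf-D} by showing $\Arf(\lambda_\eps)\equiv\Arf(\lambda_0)+\sum_{i<j}\eps_i\eps_j\,\alpha_i\cdot\alpha_j\pmod 2$, using the quadratic-form description of $q_{\lambda}$ from Proposition~\ref{prop:spin}, the identification $q_{\lambda_0}=q_0$ via Equation~$(\star)$, and the evenness of the $q_0(\alpha_j)$'s. The only difference is cosmetic: where the paper cites \cite[Lemma 1]{CR1} for the shift formula $\Arf(q+v\cdot{})=\Arf(q)+q(v)$ and asserts the relation $q_{\lambda_\eps}=q_{\lambda_0}+\Delta_\eps\cdot{}$ ``by construction,'' you prove both inline, which is perfectly sound.
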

\begin{proof}
We saw in the proof of Proposition~\ref{prop:spin} that the quadratic form $q_\lambda\colon H_1(\Si;\Z_2)\to\Z_2$ corresponding to a class
$[\lambda]\in\S(X)$ via the equivariant bijection $\S(X)\to\S(\Si)\stackrel{\varphi}{\to}\Q(\Si)$ is determined by the following condition:
for any oriented simple closed curve $C\subset\G$,
\[
q_\lambda([C])=1+\frac{|C|}{2}+\frac{1}{2\pi}\Big(\sum_{e\subset C} 2\beta_\lambda(e)-\sum_{v\in C}\alpha_v(C)\Big),
\]
where $0\le\beta_\lambda(e)<2\pi$ is such that $\lambda(e)=\exp(i\beta_\lambda(e))$.
In particular, if $\G\subset\Si$ satisfies Condition $(ii)$ of Theorem~\ref{thm:main}, then the discrete spin structure $\lambda_0$ defined by Equation $(\star)$ corresponds to
the quadratic form $q_0$ determined by the equalities
\[
q_0(\alpha_j)=q_0([C_j])=\frac{1}{2\pi}\Big(\sum_{w\in W\cap C_j}\alpha_w(C_j)-\sum_{b\in B\cap C_j}\alpha_b(C_j)\Big).
\]
By construction, $\lambda_\eps$ is obtained from $\lambda_0$ by action of the Poincar\'e dual to $\Delta_\eps=\sum_j\eps_j\alpha_j\in H_1(\Si;\Z_2)$.
Therefore, by \cite[Lemma 1]{CR1},
\[
\Arf(\lambda_\eps)+\Arf(\lambda_0)=q_0(\Delta_\eps)=\sum_j\eps_j q_0(\alpha_j)+\sum_{i<j}\eps_i\eps_j\alpha_i\cdot\alpha_j.
\]
As we have chosen the $\alpha_j$'s so that all $q_0(\alpha_j)$'s are even, the formula now follows from Theorem~\ref{thm:Pf-D}.
\end{proof}

\begin{remark}
As mentioned in Remark~\ref{rem:W}, it is not necessary in the present section to assume that the sets $S$ and $W$ are disjoint.
All the results of this section still hold in the slightly more general setting where $S$ is contained in $V(\G)\cup V(\G^*)$.
\end{remark}

\subsection{Examples}
\label{sub:ex}

We conclude this article with a discussion of several special cases, and the following result: the Dirac operators on any closed Riemann surface of positive genus
can be approximated by Kasteleyn matrices.

\noindent{\bf The planar case.} Let $\G$ be a planar isoradial bipartite graph whose associated rhombic lattice tiles a simply-connected domain $\Si$ of the plane.
In this case, the unique spin structure on $\Si$ being trivial, the associated discrete Dirac operator $D\colon\C^B\to\C^W$ is simply given by
\[
(D\psi)(w)=\sum_{b\sim w}\nu(w,b)e^{i\vartheta_V(w,b)}\,\psi(b),
\]
where the angle $\vartheta_V(w,b)$ can be measured with respect to a constant vector field $V$. With the notations of Figure~\ref{fig:star}, this yields the equality
\[
(D\psi)(w)=i\,\sum_{j=1}^m(x_{j-1}-x_{j})f(b_j),
\]
which is exactly the discrete Dirac operator introduced by Kenyon \cite{Ken} in this special case. The conditions of Theorem~\ref{thm:main} being trivially satisfied,
$D$ is (conjugate to) a Kasteleyn matrix for the dimer model on $(\G,\nu)$, and the associated partition function is given by
\[
Z(\G,\nu)=|\det(D)|.
\]
Thus, in the planar case, we recover Theorem 10.1 of \cite{Ken}.

\smallskip

\noindent{\bf The genus one case.} Let $\widetilde\G$ be a planar isoradial bipartite graph, invariant under the action of the lattice $\Lambda=\Z 1\oplus\Z\tau\subset\C$
for some $\tau\in\mathbb{H}$. Fix a quadrilateral fundamental domain $P\subset\C$ for this action with $\G$ intersecting $\partial P$ transversally, and let
$\G\subset\Sigma=\C/\Lambda$ be the corresponding toric graph. One of the spin structures on $\Si$ being trivial, the associated discrete Dirac operator 
$D\colon\C^B\to\C^W$ is again given by
\[
(D\psi)(w)=\sum_{b\sim w}\nu(w,b)e^{i\vartheta_V(w,b)}\,\psi(b),
\]
where $V$ can be chosen to be a constant vector field.
The three other discrete Dirac operators are obtained from $D$ by multiplying the corresponding coefficient by $-1$ whenever an edge crosses the horizontal boundary
components of $P$ (this gives $D_{1,0}$), the vertical ones ($D_{0,1}$), or any boundary component ($D_{1,1}$). These 4 matrices are Kasteleyn matrices if and only if
$\G$ satisfies Condition $(ii)$ in Theorem~\ref{thm:main}, in which case $Z(\G,\nu)$ can be written as an alternating sum of the determinant of these matrices.

\begin{example}
Consider the graph illustrated below. We have inserted next to each edge the corresponding coefficient of the matrix $D$.
(The graph is normalized so that the sides of $P$ have length 3, $V$ is chosen to be the vertical upward direction, and $\omega$ stands for $\exp(2\pi i/3)$).

\begin{figure}[htb]
\labellist\small\hair 2.5pt
\pinlabel {$1$} at 600 720
\pinlabel {$1$} at 890 720
\pinlabel {$1$} at 1180 720
\pinlabel {$1$} at 460 500
\pinlabel {$1$} at 750 500
\pinlabel {$1$} at 1040 500
\pinlabel {$1$} at 320 250
\pinlabel {$1$} at 610 250
\pinlabel {$1$} at 900 250
\pinlabel {$\omega$} at 485 655
\pinlabel {$\omega$} at 775 655
\pinlabel {$\omega$} at 1065 655
\pinlabel {$\bar\omega$} at 670 665
\pinlabel {$\bar\omega$} at 960 665
\pinlabel {$\bar\omega$} at 1220 665
\pinlabel {$\omega$} at 340 400
\pinlabel {$\omega$} at 630 400
\pinlabel {$\omega$} at 920 400
\pinlabel {$\bar\omega$} at 520 410
\pinlabel {$\bar\omega$} at 810 410
\pinlabel {$\bar\omega$} at 1075 415
\pinlabel {$\omega$} at 190 150
\pinlabel {$\omega$} at 480 150
\pinlabel {$\omega$} at 770 150
\pinlabel {$\bar\omega$} at 380 155
\pinlabel {$\bar\omega$} at 670 155
\pinlabel {$\bar\omega$} at 935 165
\endlabellist
\centerline{\psfig{file=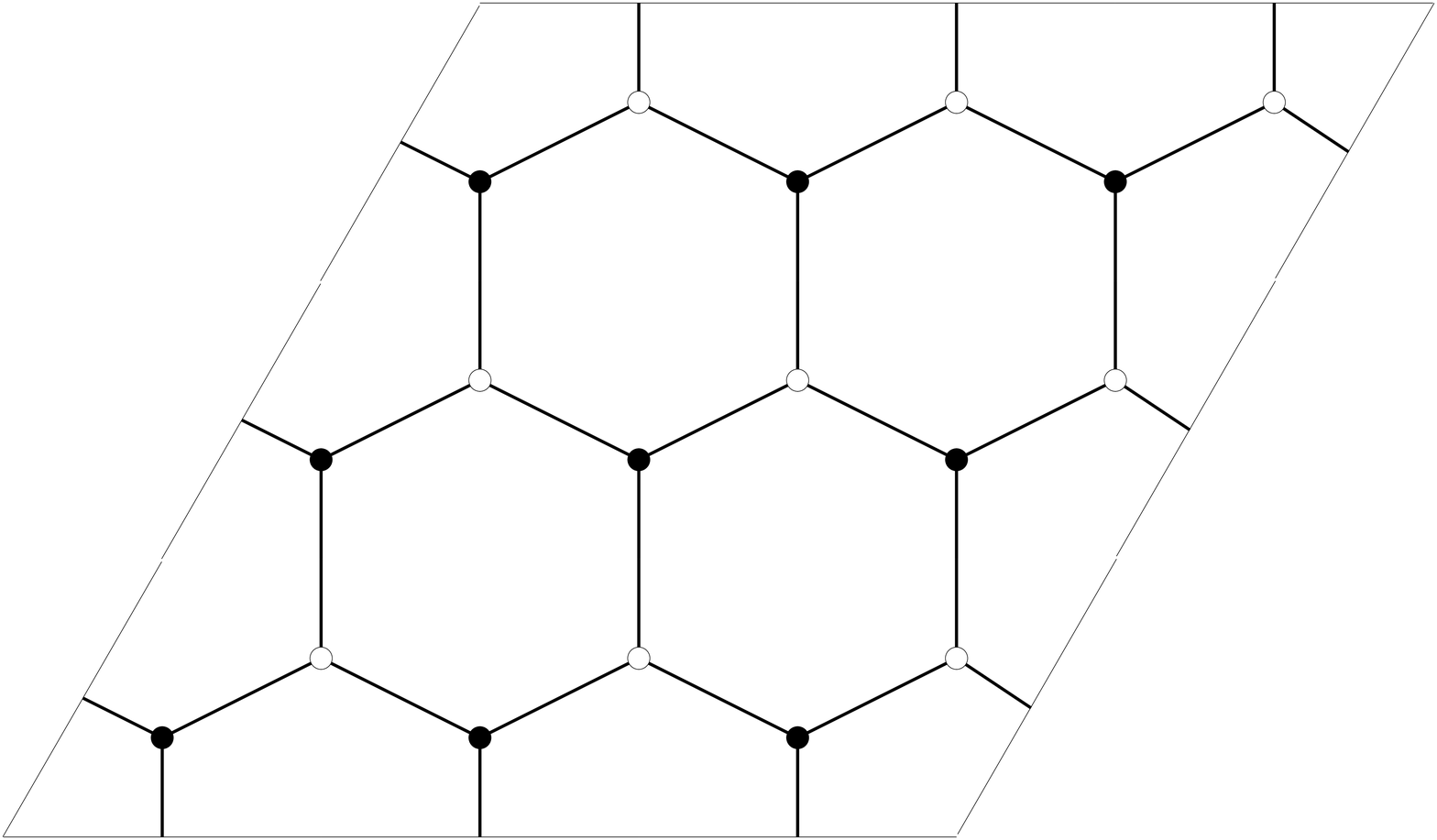,height=5cm}}
\end{figure}

In this example, Condition $(ii)$ in Theorem~\ref{thm:main} is satisfied.
Furthermore, one easily checks that the trivial discrete spin structure satisfies Equation $(\star)$,
where $\alpha_1$ and $\alpha_2$ are chosen to be the sides of $P$.
Therefore, Theorem~\ref{thm:Pf-D} gives the equality
\begin{eqnarray*}
\#\mathcal{M}(\G)&=&Z(\G,1)=\frac{1}{2}\left|\det(D)+\det(D_{1,0})+\det(D_{0,1})-\det(D_{1,1})\right|\\
&=& \frac{1}{2}\left|0+28+28-(-28)\right|=42.
\end{eqnarray*}
\end{example}

\smallskip

\noindent{\bf An example of genus 2.} Consider the flat surface $\Si$ of genus 2 given by an octagon, where all pairs of opposite sides are identified.
Embed a square lattice $\G$ in $\Si$ as illustrated below.

\begin{figure}[h]
\centerline{\psfig{file=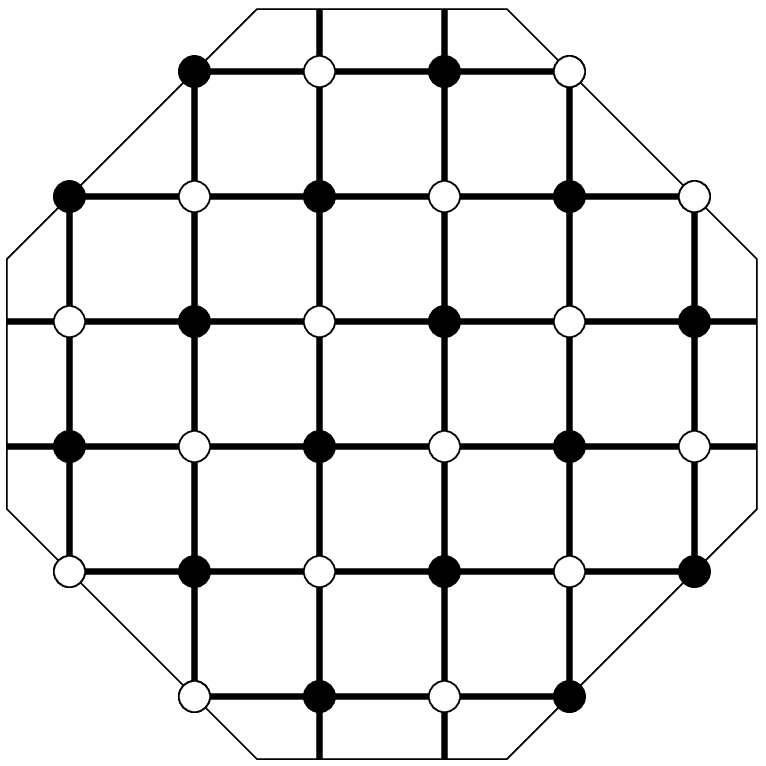,height=5cm}}
\end{figure}

The flat surface $\Si$ has a single singularity, which lies in $V(\G^*)$, and has angle $6\pi$. Therefore, this example satisfies the first condition of Theorem~\ref{thm:main}.
One easily checks that it also satifies the second condition, so the 16 discrete Dirac operators are Kasteleyn matrices. Since $\Si$ has trivial holonomy, one discrete spin
structure can be chosen to be trivial. Furthermore, one can fix a constant direction $V$ on $\Si$ (say, to the right). The corresponding discrete Dirac operator
$D\colon\C^B\to\C^W$ is simply given by
\[
(D\psi)(w)=(\psi(b_1)-\psi(b_3))+i(\psi(b_2)-\psi(b_4)),
\]
where $b_1$ is the black vertex to the right of $w$, $b_2$ above, $b_3$ to the left, and $b_4$ below. Using the procedure described before Theorem~\ref{thm:Pf-D}, it is now
a trivial matter to write the number of dimer coverings of $\G$ as some alternating sum of determinants of these 16 discrete Dirac operators.

\smallskip

The example above only discretizes the Dirac operators on one specific Riemann surface of genus 2. Can one find examples for any closed Riemann surface?
Obviously not for the Riemann sphere, as all cone angles are assumed to be positive multiples of $2\pi$. However, this turns out to be the only exception,
as demonstrated by the following theorem.

\begin{theorem}\label{thm:real}
For any closed Riemann surface of positive genus, there exist a flat surface $\Si$ with cone type singularities inducing this complex structure, and an isoradially
embedded bipartite graph $\G\subset\Si$, with arbitrarily small radius, satisfying all the hypothesis and conditions of Theorem~\ref{thm:main}.
\end{theorem}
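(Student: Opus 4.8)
The plan is to produce $\Si$ and $\G$ separately but compatibly. First I would realize the given complex structure by a \emph{translation surface}: since the Riemann surface has genus $g\ge 1$, the space $H^0(K)$ of holomorphic $1$-forms is nonzero, and a generic $\omega\in H^0(K)$ has $2g-2$ simple zeros. The flat metric $|\omega|^2$ is conformal to the given structure (so it induces it, as in Paragraph~\ref{sub:fs}), has trivial holonomy, and carries cone points of angle $4\pi$ exactly at the zeros of $\omega$ (for $g=1$ there are none and $\Si$ is a flat torus). I would then seek an isoradial bipartite graph $\G\subset\Si$ of arbitrarily small radius whose \emph{black} vertices include all the zeros of $\omega$. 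Placing every singularity at a black vertex makes $S\subset B$, so condition $(i)$ of Theorem~\ref{thm:main} holds vacuously; moreover, by Lemma~\ref{lemma:flat} every face then has cone angle $2\pi$, whence $c_{\omega_V}\equiv 1$ and $\omega_V$ is automatically Kasteleyn flat.

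Next I would reformulate what remains, namely condition $(ii)$. Because the holonomy is trivial, there is a globally defined direction field $V\in\X(W)$, and the cochain $\omega_V$ of Lemma~\ref{lemma:flat} is globally defined. Running the computation at the end of the proof of Theorem~\ref{thm:main} with $\kappa(C)=1$ shows that condition $(ii)$ is equivalent to $\omega_V(C)^2=1$ on the cycles of a homology basis, that is, to the single intrinsic requirement that the flat class $[\omega_V]\in H^1(\Si;S^1)$ be $2$-torsion. Thus the whole problem reduces to building a fine isoradial graph, with the zeros of $\omega$ among its black vertices, for which $[\omega_V]^2$ is cohomologically trivial.

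For the construction I would exploit the translation structure directly. Present $\Si$ as finitely many polygons in $\C$ with paired sides identified by translations, the zeros of $\omega$ occurring at polygon vertices. In the interior of each polygon I would lay down a fine square lattice oriented at $45^\circ$ (so that its rhombic lattice consists of genuine squares of side $\delta$, and the graph edges are horizontal or vertical), interpolating, in a boundary layer of width $O(\delta)$, to a standard row of rhombi running parallel to each identified side; since the identifications are translations, two matched standard edges glue isoradially. Two-colouring the resulting rhombic lattice yields $\G$ and $\G^*$, and the colouring can be chosen so that the zeros of $\omega$ are black. As $\delta\to 0$ the boundary layers shrink, so the radius is arbitrarily small while the induced complex structure remains exactly that of $\Si$.

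The main obstacle is condition $(ii)$, i.e. forcing $[\omega_V]^2=1$ for this explicit graph. In the square bulk every edge is horizontal or vertical, so $\omega_V(e)\in\{\pm1,\pm i\}$ and $\omega_V(e)^2\in\{\pm1\}$; there $[\omega_V]^2$ is the $\Z_2$-class recording the parity of vertical edges along cycles, which is killed by choosing a global mod-$2$ height together with a compatible grid offset. The difficulty is that the essential basis cycles must cross the identified sides, where they pick up the contributions of the transition layers, so the parity bookkeeping must be carried out \emph{globally} and made consistent with the gluing. I expect to control this by designing each transition layer to contribute a prescribed even parity and, should a residual $2$-torsion discrepancy survive on some basis cycle, to remove it by a bounded number of local rhombus re-pastings near that cycle. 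Verifying that such local moves can simultaneously trivialise $[\omega_V]^2$ on all $2g$ basis classes without destroying isoradiality or the placement of the zeros is the crux of the argument.
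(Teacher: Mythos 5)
Your strategy is genuinely different from the paper's: you realize the complex structure by the flat metric $|\omega|^2$ of a holomorphic $1$-form (a translation surface with trivial holonomy and cone angles $4\pi$ at the zeros), place all singularities at black vertices so that condition $(i)$ of Theorem~\ref{thm:main} is vacuous and $\omega_V$ is automatically Kasteleyn flat by Lemma~\ref{lemma:flat}, and then correctly reduce condition $(ii)$ to the vanishing of the squared holonomy $\omega_V(C_j)^2=1$ on a homology basis (with $\kappa=1$ this is exactly the equation at the end of the proof of Theorem~\ref{thm:main}). The paper instead builds the flat surface combinatorially from standard pieces: rhombi made of two equilateral triangles are assembled into rectangles $R(m,n)$, these into flat tori (genus one) and into flat pairs of pants carrying one $4\pi$ cone point at a black vertex, and $2g-2$ such pants are glued with shifts and twists to realize every complex structure of genus $g\ge 2$; conditions $(i)$ and $(ii)$ are then verified explicitly, condition $(ii)$ being forced by taking $n,m$ divisible by $6$ and choosing the parities of the $m_i$ suitably. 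Up to the reduction of the problem, your argument is correct and clean.

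However, the proposal has a genuine gap precisely at what you yourself call the crux, and it occurs at two points. First, the existence of the graph is asserted, not proved: an isoradial interpolation, inside a layer of width $O(\delta)$, between a $45^\circ$ square lattice in the bulk and rhombus rows parallel to the (arbitrarily sloped) polygon sides --- wrapping isoradially around the $4\pi$ corners with the cone point as a black vertex, and carrying a globally consistent bipartition --- is a nontrivial rhombic-tiling existence problem; note that you may only modify the graph, never the metric, since the flat structure is pinned down by the prescribed complex structure. Second, and more seriously, condition $(ii)$: your plan is that each transition layer "contributes a prescribed even parity" and that any residual discrepancy is removed by "local rhombus re-pastings", but no mechanism is given for either. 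A local modification near one basis cycle changes $\omega_V$ on every cycle through the modified region, must preserve isoradiality, bipartiteness and the location of the zeros, and must succeed simultaneously for all $2g$ basis classes; nothing in the proposal shows that such moves exist, let alone that they can be combined consistently. This is exactly the difficulty the paper's construction is engineered to avoid: because the whole surface is assembled from standard pieces, the angle sums in condition $(ii)$ can be computed for every basis cycle, and the residual discrete freedom (divisibility by $6$, parities of the $m_i$, deformations that do not affect the conditions) is what kills the obstruction. Until you supply the analogous global argument, what you have is a plausible program rather than a proof.
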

\begin{proof}
The building block of our construction will be the rhombus consisting of two equilateral triangles glued along one of their sides. Given positive integers $n$ and $m$,
let $R(m,n)$ denote $m$ rows of $n$ such rhombi stacked in the following way. (This picture represents $R(2,8)$.)

\begin{figure}[h]
\centerline{\psfig{file=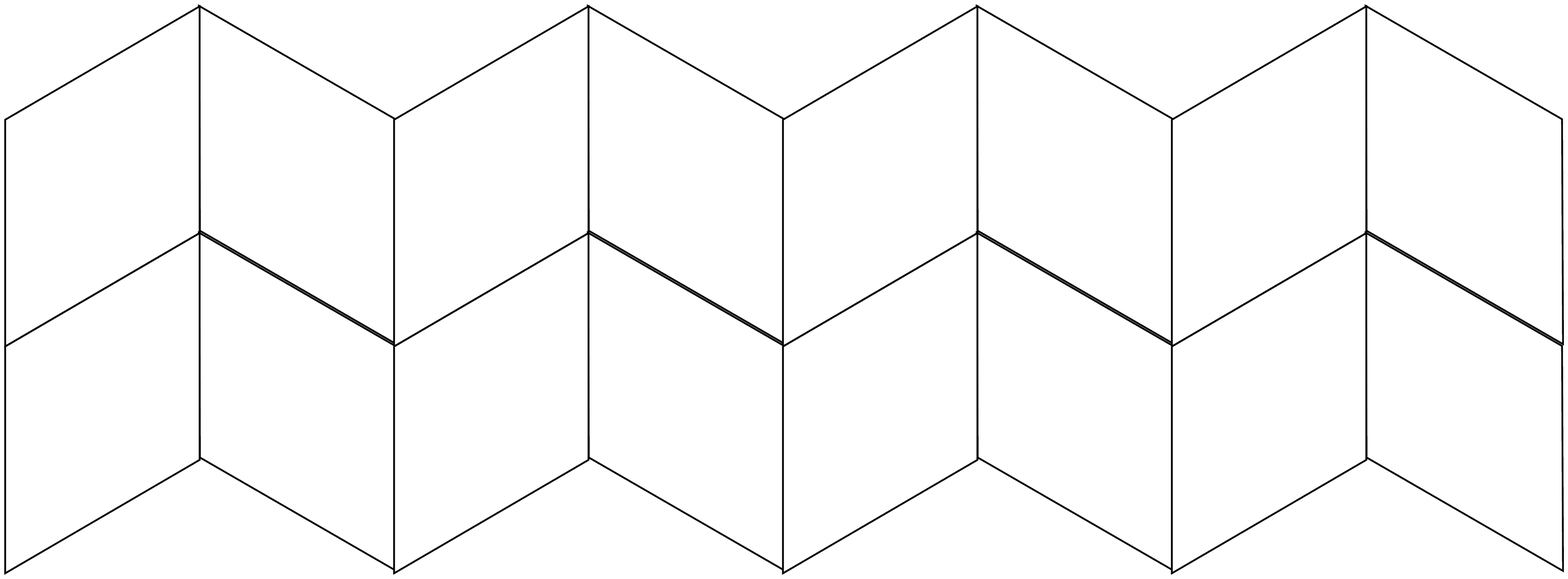,height=1.5cm}}
\end{figure}

\noindent
Let $\G$ be the associated bipartite isoradial planar graph, where the bottom left corner of $R(m,n)$ is a black vertex of $\G$. If $n$ is even, the identification of the two
vertical sides of $R(m,n)$ will preserve the bipartite structure of $\G$. If $m$ is even, one can also identify the horizontal sides, possibly with a shift. This allows to
realize any torus $\C/\Z+\Z\tau$ with $\tau$ in some dense subset of $\mathbb{H}$. To obtain all tori, continuously deform one or two rows of rhombi as illustrated below.

\begin{figure}[h]
\centerline{\psfig{file=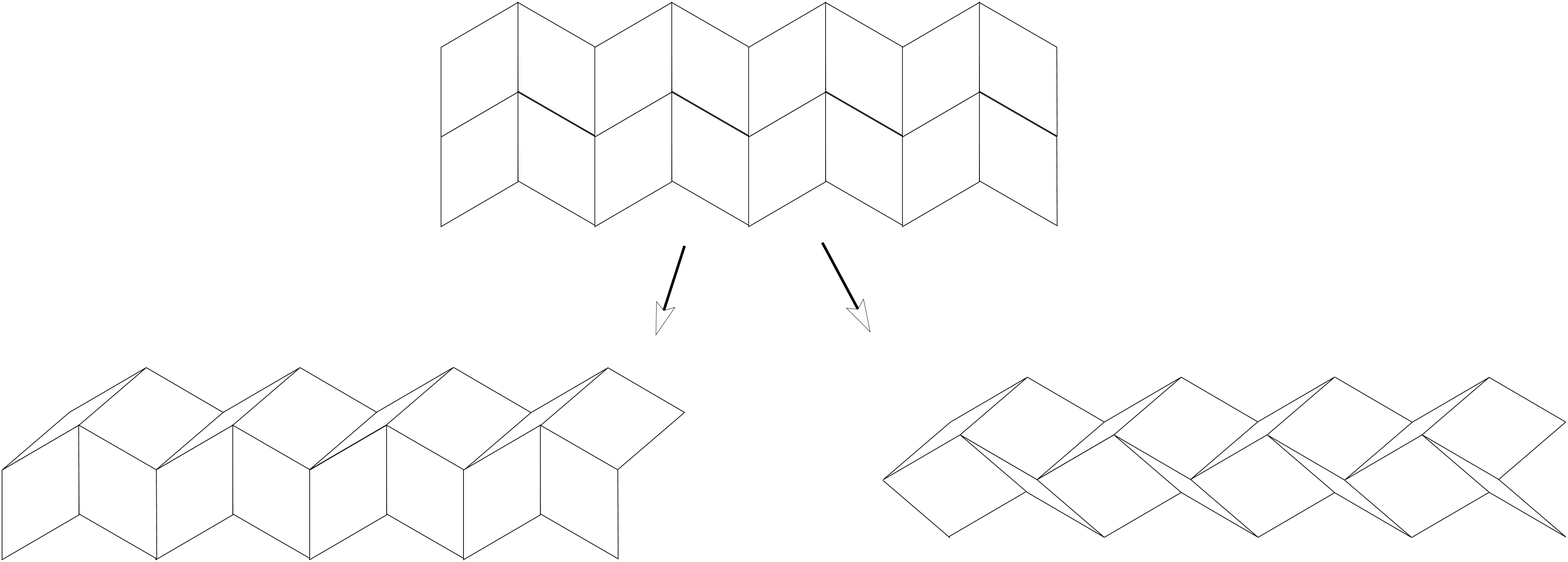,height=4.3cm}}
\end{figure}

\noindent The deformation of two rows changes the imaginary part of $\tau$, while the deformation of a single row changes both the imaginary and the real parts.
Therefore, a suitable
combination of these transformations allows to construct all tori. These examples are flat tori with no singularity, so they trivially satisfy the first condition in
Theorem~\ref{thm:main}. Furthermore, one easily checks that Condition $(ii)$ is also satisfied, provided $n$ and $m$ are divisible by 6. (Note that the deformations
above do not affect these conditions.)

Let us now consider a fixed positive even integer $n$. Given three positive integers $m_1,m_2,m_3$,
glue the corresponding rectangles $R(m_1,n)$, $R(m_2,n)$ and $R(m_3,n)$ along their bottom side to an equilateral triangle, itself tiled by rhombi, as illustrated below.

\begin{figure}[h]
\centerline{\psfig{file=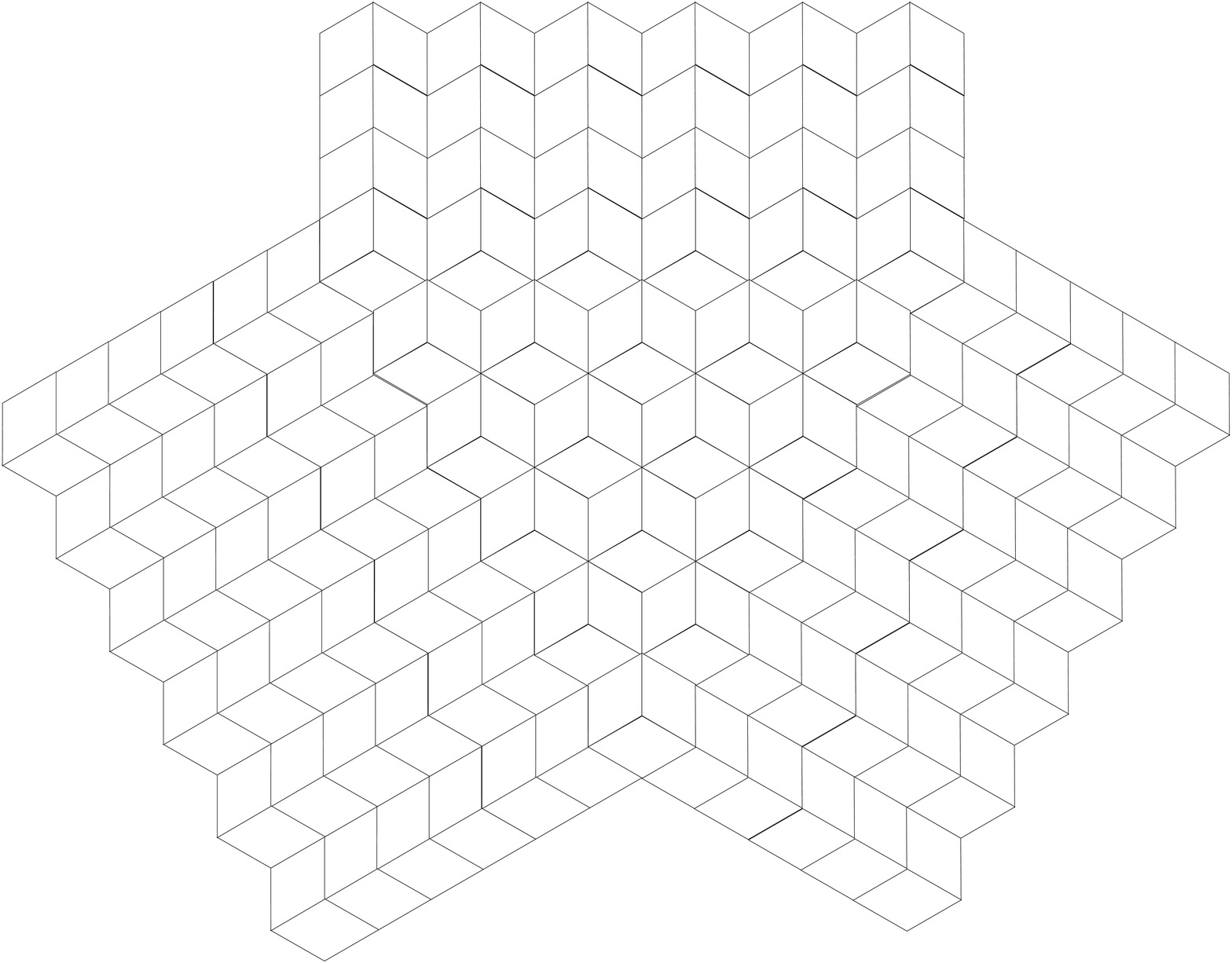,height=6cm}}
\end{figure}

\noindent
Identifying the opposite remaining sides of each rectangle yields a flat pair of pants with a single singularity of angle $4\pi$.
By varying the $m_j$'s and using the deformation along two rows described above, one can realize any complex structure on the pair of pants. Finally,
gluing $2g-2$ such pairs of pants along their boundaries (with a possible shift and a possible deformation yielding a twist), one can realize any Riemann surface of
genus $g\ge 2$.

To each rhombus, associate the portion of a bipartite graph $\G$ given by $\begin{array}{c}\includegraphics[width=0.5in]{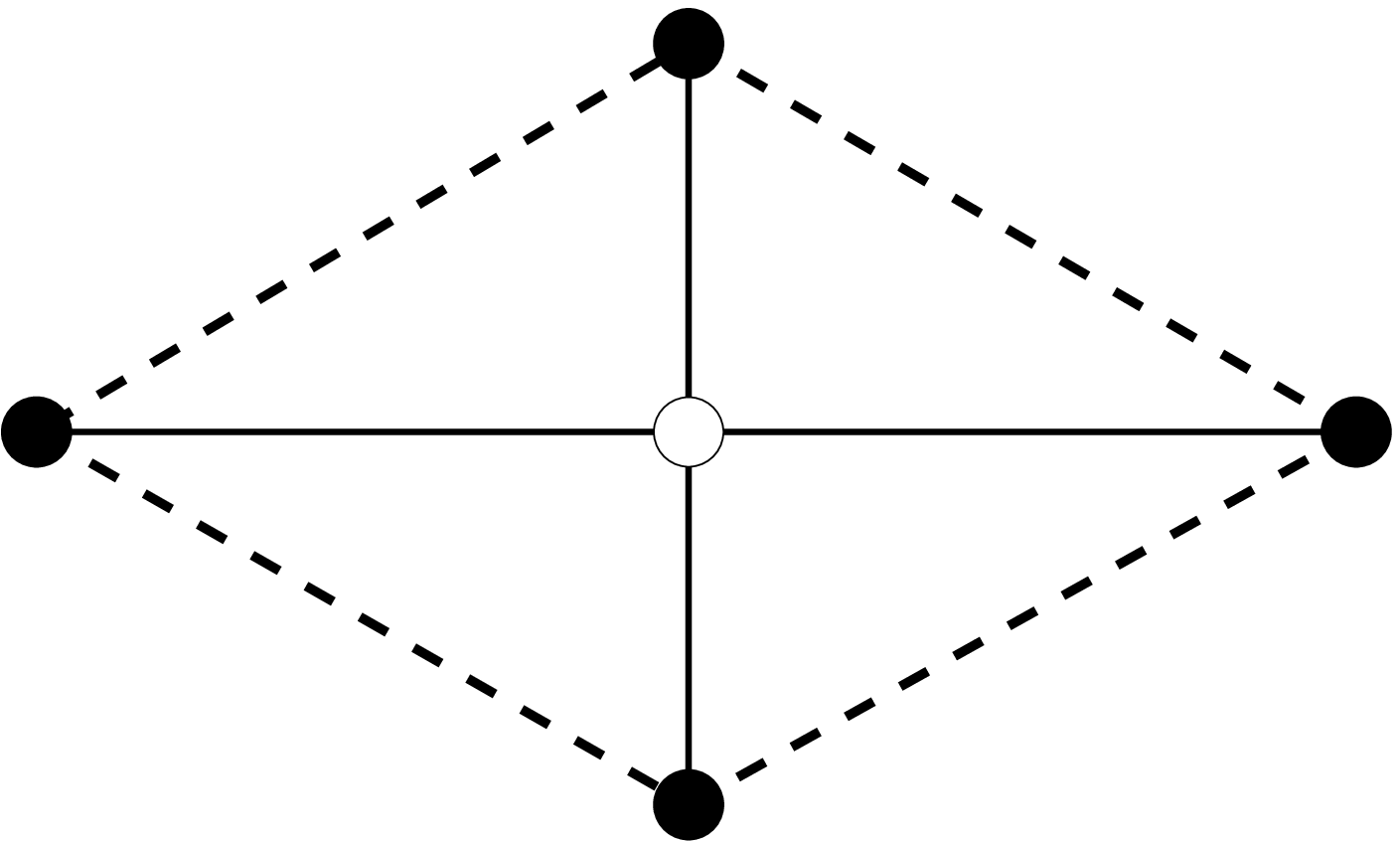}\end{array}$. (This is just to avoid
cumbersome considerations about gluing bipartite structures.) The singularities of angle $4\pi$ are located at black vertices of $\G$, so Condition $(i)$ is satisfied.
One easily checks that Condition $(ii)$ is always satisfied for cycles coming from boundary components of the pairs of pants. Finally, by chosing wisely the parity of the
$m_i$'s, one can ensure that Condition $(ii)$ also holds for the cycles passing through several pairs of pants.
\end{proof}

\bibliographystyle{amsplain}
\bibliography{Dirac}

\end{document}